\newcommand{\typeof}{1} %
\newcommand{\longv}[1]{\ifthenelse{\equal{\typeof}{1}}{#1}{}}
\newcommand{\shortv}[1]{\ifthenelse{\equal{\typeof}{0}}{#1}{}}
\newcommand{\shortlongv}[2]{\ifthenelse{\equal{\typeof}{0}}{#1}{#2}}
  \theoremstyle{plain}
  \newtheorem{proposition}[theorem]{Proposition}
  \theoremstyle{remark}
  \newtheorem{claim}[]{Claim}
  \newtheorem{theorem}{Theorem}[section]
  \newtheorem{lemma}[theorem]{Lemma}
  \newtheorem{claim}[theorem]{Claim}
  \newtheorem{proposition}[theorem]{Proposition}
  \newtheorem{definition}[theorem]{Definition}
  \newtheorem{remark}[theorem]{Remark}
  \newtheorem{example}[theorem]{Example}
\newenvironment{varitemize}
{\begin{list}{\labelitemii}
{%
\setlength{\itemsep}{0pt}%
 \setlength{\topsep}{0pt}%
 \setlength{\parsep}{0pt}%
 \setlength{\partopsep}{0pt}%
 \setlength{\leftmargin}{15pt}%
 \setlength{\rightmargin}{0pt}%
 \setlength{\itemindent}{0pt}%
 \setlength{\labelsep}{5pt}%
 \setlength{\labelwidth}{10pt}}}
{\end{list}}
\newcounter{numberone}
\newenvironment{varenumerate}
{\begin{list}{\textcolor{darkgray}{\sffamily\bfseries\arabic{numberone}.}}
{\usecounter{numberone}
  \setlength{\itemsep}{0pt}
  \setlength{\topsep}{0pt}
  \setlength{\parsep}{0pt}
  \setlength{\partopsep}{0pt}
  \setlength{\leftmargin}{15pt}
  \setlength{\rightmargin}{0pt}
  \setlength{\itemindent}{0pt}
  \setlength{\labelsep}{5pt}
  \setlength{\labelwidth}{15pt}
}}
{\end{list}}
\newenvironment{framed}[0]{\begin{boxedminipage}{\linewidth}\vspace{-3mm}}{\end{boxedminipage}\vspace{-2mm}}
\newcounter{comment}
\title{On Sharing, Memoization, and Polynomial Time\footnote{This work was partially supported by FWF project number J\ 3563.}}
\titlerunning{On Sharing, Memoization and Polynomia Time}
\author{Martin Avanzini}
\author{Ugo {Dal Lago}}
\affil{Dipartimento di Informatica - Scienza e Ingegneria, Universit\`a di Bologna, Italy
  \texttt{martin.avanzini@uibk.ac.at} and \texttt{dallago@cs.unibo.it}}
\authorrunning{M. Avanzini and U. Dal Lago} 
\subjclass{F.1.3, F.3.2, F.4.1, F.4.2}% mandatory: Please choose ACM 1998 classifications from http://www.acm.org/about/class/ccs98-html . E.g., cite as "F.1.1 Models of Computation". 
\keywords{implicit computational complexity; data-tiering; polynomial time}
\title{On Sharing, Memoization, and Polynomial Time\footnote{This work was partially supported by FWF project number J\ 3563.}\\({Long Version})}
\author{Martin Avanzini \and Ugo Dal Lago}
\begin{document}
\maketitle
\begin{abstract}
  We study how the adoption of an evaluation mechanism with sharing
  and memoization impacts the class of functions which can be computed
  in polynomial time. We first show how a natural cost model in
  which lookup for an already computed value has no cost is indeed
  invariant. As a corollary, we then prove that the most general
  notion of ramified recurrence is sound for polynomial time, this
  way settling an open problem in implicit computational complexity.
\end{abstract}
\longv{\tableofcontents\newpage}
%%%%%%%%%%%%%%%%%%%%%%
\section{Introduction}
%%%%%%%%%%%%%%%%%%%%%%
Traditionally, complexity classes are defined by giving bounds on the
amount of resources algorithms are allowed to use while solving
problems. This, in principle, leaves open the task of understanding
the \emph{structure} of complexity classes. As an example, a
given class of functions is not necessarily closed under composition or, more
interestingly, under various forms of recursion. When the class under
consideration is not too large, say close enough to the class of
\emph{polytime computable functions}, closure under recursion does not hold:
iterating over an efficiently computable function is not necessarily
efficiently computable, e.g.\ when the iterated function grows more than
linearly. In other words, characterizing complexity classes by purely
recursion-theoretical means is non-trivial.

In the past twenty years, this challenge has been successfully tackled,
by giving \emph{restricted} forms of recursion for which not only
certain complexity classes are closed, but which \emph{precisely}
generate the class. This has been proved for classes like \PTIME,
\PSPACE, the polynomial hierarchy \PH, or even smaller ones like \NC\
(more information about related work is in Section~\ref{sect:rw}). A
particularly fruitful direction has been the one initiated by
Bellantoni and Cook, and independently by Leivant, which consists in
restricting the primitive recursive scheme by making it
\emph{predicative}, thus forbidding those nested recursive definitions
which lead outside the classes cited above. Once this is settled, one
can tune the obtained scheme by either adding features (e.g.\ parameter
substitutions) or further restricting the scheme (e.g.\ by way of linearization).

Something a bit disappointing in this field is that the expressive
power of the simplest (and most general) form of predicative
recurrence, namely \emph{simultaneous} recurrence on \emph{generic
  algebras} is unknown.  If algebras are restricted to be
\emph{string} algebras, or if recursion is not simultaneous, soundness
for polynomial time computation is known to
hold~\cite{Leivant:FMII:95,DMZ:DICE:10}. The two soundness results are
obtained by quite different means, however: in presence of trees, one
is forced to handle \emph{sharing}~\cite{DMZ:DICE:10} of common
sub-expressions, while simultaneous definitions by recursion requires
a form of \emph{memoization}~\cite{Leivant:FMII:95}.

In this paper, we show that sharing and memoization can indeed be
reconciled, and we exploit both to give a new invariant time cost
model for the evaluation of rewrite systems. That paves the way
towards a polytime soundness for simultaneous predicative recursion
on generic algebras, thus solving the open problem we were mentioning.
More precisely, with the present paper we make the following
contributions:
\begin{varenumerate}
\item 
  We define a simple functional programming language.  The domain of
  the defined functions is a free algebra formed from constructors.
  Hence we can deal with functions over strings, lists, but also trees
  (see Section~\ref{s:basics}).  We then extend the underlying
  rewriting based semantics with \emph{memoization},
  i.e.\ intermediate results are automatically tabulated to avoid
  expensive re-computation (Section~\ref{s:invariance}).  As standard
  for functional programming languages such as \tool{Haskell} or
  \tool{OCaml}, data is stored in a \emph{heap}, facilitating
  \emph{sharing} of common sub-expression.  To measure the
  \emph{runtime} of such programs, we employ a novel cost model,
  called \emph{memoized runtime complexity}, where each function
  application counts one time unit, but lookups of tabulated calls do
  not have to be accounted.
\item 
  Our \emph{invariance theorem} (see Theorem~\ref{t:invariance})
  relates, within a polynomial overhead, the memoized runtime
  complexity of programs to the cost of implementing the defined
  functions on a classical model of computation, e.g. \emph{Turing} or 
  \emph{random access machines}. The invariance theorem thus confirms that
  our cost model truthfully represents the computational complexity 
  of the defined function.
\item
  We extend upon Leivant's notion of \emph{ramified recursive
    functions}~\cite{Leivant:POPL:93} by allowing definitions by
  \emph{generalised ramified simultaneous recurrence} (\emph{GRSR} for
  short).  We show that the resulting class of functions, defined over
  arbitrary free algebras have, when implemented as programs,
  polynomial memoized runtime complexity (see Theorem~\ref{t:sound}).
  By our invariance theorem, the function algebra is sound for
  polynomial time, and consequently GSRS characterizes the class of
  polytime computable functions.
\end{varenumerate}
\shortv{%
An extended version of this paper with more details, including all
proofs, is also available~\cite{AD:TR:14}.}

%%%%%%%%%%%%%%%%%%%%%%
\subsection{Related Work}\label{sect:rw}
%%%%%%%%%%%%%%%%%%%%%%
That predicative recursion \emph{on strings} is sound for polynomial
time, even in presence of simultaneous recursive definitions, is known
for a long time~\cite{Bellantoni:Diss:92}. Variations of predicative
recursion have been later considered and proved to characterize
classes like \PH~\cite{Bellantoni:FMII:94},
\PSPACE~\cite{Oitavem:PTCS:01},
\EXPTIME~\cite{AE:TOCL:09} or 
\NC~\cite{BKMO:LCS:08}. 
Predicative recursion on trees has been claimed to be sound for polynomial time
in the original paper by Leivant~\cite{Leivant:POPL:93}, the long
version of which only deals with
strings~\cite{Leivant:FMII:95}. After fifteen years, the
non-simultaneous case has been settled by the second author with
Martini and Zorzi~\cite{DMZ:DICE:10}; their proof, however, relies on
an ad-hoc, infinitary, notion of graph rewriting. Recently,
ramification has been studied in the context of a simply-typed
$\lambda$-calculus in an unpublished manuscript~\cite{DR:CoRR:12}; the
authors claim that a form of ramified recurrence on trees captures
polynomial time; this, again, does not take simultaneous recursion
into account.

The formalism presented here is partly inspired by the work of \citet{Hoffmann:ALP:92}, 
where sharing and memoization is shown to work well together in the realm
of term graph rewriting. 
The proposed machinery, although powerful, is unnecessarily
complicated for our purposes. Speaking in Hoffmann's terms, our results
require a form of full memoization, which \emph{is} definable 
in Hoffmann's system. However, most crucially for our concerns, 
it is unclear how the overall system incorporating full memoization
can be implemented efficiently, if at all.  
%MA your version
% Sharing and Memoization have been shown to work well together in the
% realm of term graph rewriting~\cite{}. The necessary machinery,
% however is too complicated for our purposes. Moreover, the cited works
% do not deal with complexity and invariance at all.

%%% Local Variables: 
%%% mode: latex
%%% TeX-master: "paper"
%%% End: 

%%%%%%%%%%%%%%%%%%%%%%%%%%%%%%%%%%%%%%%%%%%%%%
\section{The Need for Sharing and Memoisation}\label{s:examples}
%%%%%%%%%%%%%%%%%%%%%%%%%%%%%%%%%%%%%%%%%%%%%%
This Section is an informal, example-driven, introduction to 
ramified recursive definitions and their complexity. Our objective is to
convince the reader that those definitions do \emph{not} give rise to
polynomial time computations if naively evaluated, and that sharing
and memoization are \emph{both} necessary to avoid exponential
blowups.

In Leivant's system~\cite{Leivant:FMII:95}, functions and variables
are equiped with a \emph{tier}.  Composition must preserve tiers and,
crucially, in a function defined by primitive recursion the tier of
the recurrence parameter must be higher than the tier of the recursive
call.  This form of \emph{ramification} of functions effectively tames
primitive recursion, resulting in a characterisation of the class of
\emph{polytime computable functions}.

Of course, ramification also controls the growth rate of
functions. However, as soon as we switch from strings to a domain
where tree structures are definable, this control is apparently lost.
For illustration, consider the following definition.
\begin{alignat*}{3}
  \fun{tree}(\czero) & = \cleaf & 
  \qquad\fun{tree}(\csuc(\var{n})) & = \fun{br}(\fun{tree}(\var{n})) & 
  \qquad\fun{br}(\var{t}) & = \cbranch(\var{t},\var{t}) \tpkt
\end{alignat*}

\begin{figure}%
  \begin{framed}
    \centering
    \hfill
    \subfloat[][Explicit tree representation.]{%
      {\small
        \begin{forest}
          [
          $\cbranch$, s sep=0mm
          [ $\cbranch$ 
          [$\cbranch$
          [$\cbranch$ [$\cleaf$] [$\cleaf$]] 
          [$\cbranch$ [$\cleaf$] [$\cleaf$]]] 
          [$\cbranch$ [$\cbranch$ [$\cleaf$] [$\cleaf$]] [$\cbranch$ [$\cleaf$] [$\cleaf$]]]]
          [ $\cbranch$ 
          [$\cbranch$ [$\cbranch$ [$\cleaf$] [$\cleaf$]] [$\cbranch$ [$\cleaf$] [$\cleaf$]]] 
          [$\cbranch$ [$\cbranch$ [$\cleaf$] [$\cleaf$]] [$\cbranch$ [$\cleaf$] [$\cleaf$]]]]
          ]
        \end{forest}
      }
    }%
    \hfill
    \subfloat[][Compact DAG.]{%
      {\small
        \begin{forest}
          [
          $\cbranch$, name=A, for tree={edge={draw=white}},
          [$\cbranch$, name=B,
          [$\cbranch$, name=C, 
          [$\cbranch$, name=D,
          [$\cleaf$, name=E]]]]]
          \draw (A) edge[bend left] (B);
          \draw (A) edge[bend right] (B);
          \draw (B) edge[bend left] (C);
          \draw (B) edge[bend right] (C);
          \draw (C) edge[bend left] (D);
          \draw (C) edge[bend right] (D);
          \draw (D) edge[bend left] (E);
          \draw (D) edge[bend right] (E);
          \node[xshift=-1.1cm] at (A) {};%center
          \node[xshift=1.1cm] at (A) {};
        \end{forest}
      }
    }%
    \hfill
  \end{framed}
  \caption{Complete Binary Tree of Height Four, as Computed by $\fun{tree}(\csuc^4(\czero))$.}%
  \label{fig:tree}%
\end{figure}
The function $\fun{tree}$ is defined by primitive recursion,
essentially from basic functions.  As a consequence, it is easily seen
to be ramified in the sense of Leivant.  Even though the the number of
recursive steps is linear in the input, the result of
$\fun{tree}(\csuc^n(\czero))$ is the complete binary tree of height
$n$.  As thus the length of the output is exponential in the one of
its input, there is, at least apparently, little hope to prove
$\fun{tree}$ a polytime function. The way out is \emph{sharing}: the
complete binary tree of height $n$ can be compactly represented as a
\emph{directed acyclic graph} (\emph{DAG} for short) of linear size
(see Figure~\ref{fig:tree}).  Indeed, using the compact DAG
representation it is easy to see that the function $\fun{tree}$ is
computable in polynomial time. This is the starting point
of~\cite{DMZ:DICE:10}, in which general ramified recurrence is proved
sound for polynomial time. A crucial observation here is that
\emph{not only} the output's size, but also the total amount of work
can be kept under control, thanks to the fact that evaluating a primitive
recursive definition on a compactly represented input can be done
by constructing an isomorphic DAG of recursive calls.

This does not scale up to \emph{simultaneous} ramified recurrence.
The following example computes the genealogical tree associated with
\emph{Fibonacci's rabbit problem} for $n\in\N$ generations. Rabbits
come in pairs. After one generation, each \emph{baby} rabbit pair
($\rabbitn$) matures. In each generation, an \emph{adult} rabbit pair
$(\rabbitm)$ bears one pair of babies.
\begin{alignat*}{5}
\frabbits(\czero) & = \rabbitln
&\qquad \fadults(\czero) & = \rabbitlm 
&\qquad \fbabies(\czero) & = \rabbitln\\
\frabbits(\csuc(\var{n})) & = \fbabies(\var{n})
&\qquad \fadults(\csuc(\var{n})) & = \rabbitm(\fadults(\var{n}),\fbabies(\var{n})) 
&\qquad \fbabies(\csuc(\var{n})) & = \rabbitn(\fadults(\var{n})) \tpkt
\end{alignat*}
% \begin{alignat*}{3}
%   \frabbits(\czero) & = \rabbitln 
%   & \fadults(\czero) & = \rabbitlm
%   & \fbabies(\czero) & = \rabbitln \\
%   \frabbits(\csuc(\var{n})) & = \rabbitn(\fadults(\var{n}))
%   \qquad& \fadults(\csuc(\var{n})) & = \rabbitm(\fadults(\var{n}),\fbabies(\var{n}))
%   \qquad& \fbabies(\csuc(\var{n})) & = \rabbitn(\fadults(\var{n}))
% \end{alignat*}
The function $\frabbits$ is obtained by case analysis from the
functions $\fadults$ and $\fbabies$, which are defined by
\emph{simultaneous} primitive recursion: the former recursively calls
itself \emph{and} the latter, while the latter makes a recursive call
to the former. The output of $\frabbits(\csuc^n(\czero))$ is tightly
related to the sequence of Fibonacci numbers: the number of nodes at
depth $i$ is given by the \nth{$i$} Fibonacci number.  Hence the
output tree has exponential size in $n$ but, again, can be represented
compactly (see Figure~\ref{fig:fib}). This does not suffice for our
purposes, however. In presence of simultaneous definitions, indeed,
avoiding re-computation of previously computed values becomes more
difficult, the trick described above does not work, and the key idea
towards that is the use of \emph{memoization}.

\begin{figure}%
\centering
\begin{framed}
  \hfill
  \subfloat[][Explicit tree representation.]{%
    \small{
      \begin{forest}
        [$\rabbitn$,
        for tree={calign=first}
        [$\rabbitm$
        [$\rabbitm$ 
        [$\rabbitm$
        [$\rabbitm$ [$\rabbitlm$] [$\rabbitln$]] 
        [$\rabbitn$ [$\rabbitlm$]]] 
        [$\rabbitn$ 
        [$\rabbitm$ [$\rabbitlm$] 
        [$\rabbitln$]]]]
        [$\rabbitn$ 
        [$\rabbitm$
        [$\rabbitm$ [$\rabbitlm$] [$\rabbitln$]] 
        [$\rabbitn$ [$\rabbitlm$]]]]
        ]
        ]
      \end{forest}
    }
  }%
  \hfill
  \subfloat[][Compact DAG.]{%
    \small{
      \begin{forest}
        [$\rabbitn$
        [$\rabbitm$,
        for tree={calign=first}
        [$\rabbitm$,name=M3
        [$\rabbitm$,name=M2
        [$\rabbitm$,name=M1
        [$\rabbitlm$,name=M0]
        [$\rabbitln$,name=N0]] 
        [$\rabbitn$,name=N1]] 
        [$\rabbitn$,name=N2]]
        [$\rabbitn$,name=N3]
        ]]
        \draw (N3) -- (M2);
        \draw (N2) -- (M1);
        \draw (N1) -- (M0);
        \node[xshift=-8mm] at (M2) {};
        \node[xshift=1cm] at (N2) {};
      \end{forest}
    }
  }%
  \hfill
\end{framed}
\caption{Genealogical Rabbit Tree up to the Sixth Generation, as Computed by $\frabbits(\csuc^6(\czero))$.}%
\label{fig:fib}%
\end{figure}
What we prove in this paper is precisely that sharing and memoization can indeed
be made to work together, and that they together allow to prove polytime soundness
for all ramified recursive functions, also in presence of tree algebras and simultaneous
definitions.

\section{Preliminaries}\label{s:basics}

\paragraph*{General Ramified Simultaneous Recurrence}
Let $\faone$ denote a finite \emph{(untyped) signatures} $\FS$ of \emph{constructors} $\seq[k]{\conone}$, 
each equipped with an arity $\arity{\conone_i}$. 
In the following, the set of terms $\TERMS[\faone]$ is also denoted by $\faone$ if this does not create ambiguities. 
We are interested in total functions from
$\faone^n=\underbrace{\faone\times\ldots\times\faone}_{\mbox{$n$ times}}$
to $\faone$. 

\begin{definition}\label{d:basicfs}
The following are so-called \emph{basic functions}:
\begin{varitemize}
\item
  For each constructor $\conone$, the \emph{constructor function} 
  $\funone_{\conone}:\faone^{\arity{\conone}}\rightarrow\faone$ for $\conone$,
  defined as follows:
  $\funone_{\conone}(\varone_1,\ldots,\varone_{\arity{\conone}})=\conone(\varone_1,\ldots,\varone_{\arity{\conone}})$
\item
  For each $1\leq n\leq m$, the $(m,n)$-\emph{projection function}
  $\proj{m}{n}:\faone^m\rightarrow\faone$ defined as follows:
  $\proj{m}{n}(\varone_1\ldots,\varone_m)=\varone_n$.
\end{varitemize}
\end{definition}

\begin{definition}\label{d:schemes}
\envskipline%
\begin{varitemize}
\item
  Given a function $\funone:\faone^n\rightarrow\faone$ and $n$ functions
  $\funtwo_1,\ldots,\funtwo_n$, all of them from $\faone^m$ to $\faone$,
  the \emph{composition} $\funthree=\comp{\funone}{(\funtwo_1,\ldots,\funtwo_n)}$
  is a function from $\faone^m$ to $\faone$ defined as follows:
  $\funthree(\vec{\varone})=\funone(\funtwo_1(\vec{\varone}),\ldots,\funtwo_n(\vec{\varone}))$.
\item
  Suppose given the functions $\funone_i$ where $1\leq i\leq k$ such that for some $m$,
  $\ofdom{\funone_i}{\faone^{\arity{\conone_i}}\times\faone^{n}\rightarrow\faone}$.
  Then the function $\funtwo=\casdist{\{\funone_i\}_{1\leq i\leq k}}$
  defined by \emph{case distinction} from $\{\funone_i\}_{1\leq i\leq k}$
  is a function from $\faone\times\faone^n$ to $\faone$ defined as follows:
  $\funtwo(\conone_i(\vec{\varone}),\vec{\vartwo})=\funone_i(\vec{\varone},\vec{\vartwo})$.
\item
  Suppose given the functions $\funone_i^j$, where $1\leq i\leq k$ and $1\leq j\leq n$, such that
  for some $m$,
  $\ofdom{\funone_i^j}{\faone^{\arity{\conone_i}}\times\faone^{n\cdot\arity{\conone_i}}\times\faone^{m}\rightarrow\faone}$.
  The functions $\{\funtwo_j\}_{1\leq j\leq n}=\simrec{\{\funone_i^j\}_{1\leq i\leq k,1\leq j\leq n}}$ defined
  by \emph{simultaneous primitive recursion} from $\{\funone_i^j\}_{1\leq i\leq k,1\leq j\leq n}$
  are all functions from $\faone\times\faone^m$ to $\faone$ such that
  for $\vec{\varone} = \seq[\arity{{\conone}_i}]{\varone}$, 
  \[
  \funtwo_j(\conone_i(\vec{\varone}),\vec{\vartwo})=
  \funone_i^j(\vec{\varone},\funtwo_1(\varone_1,\vec{\vartwo}),\ldots,\funtwo_1(\varone_{\arity{\conone_i}},\vec{\vartwo}),\ldots,
      \funtwo_n(\varone_1,\vec{\vartwo}),\ldots,\funtwo_n(\varone_{\arity{\conone_i}},\vec{\vartwo}),\vec{\vartwo})
      \tpkt
  \]
\end{varitemize}
\end{definition}
We denote by $\SIMREC[\faone]$ the class of \emph{simultaneous recursive functions over $\faone$}, 
defined as the smallest class containing the basic functions of 
Definition~\ref{d:basicfs} and that is closed under the schemes of Definition~\ref{d:schemes}.

\emph{Tiering}, the central notion underlying Leivant's definition of
\emph{ramified recurrence}, consists in attributing \emph{tiers} to
inputs and outputs of some functions among the ones constructed as
above, with the goal of isolating the polytime computable ones.
Roughly speaking, the role of tiers is to single out ``a copy'' of
the signature by a level: this level permits to control the recursion nesting.
Tiering can be given as a formal system, in which
judgments have the form
$\tj{\funone}{\faone_{p_1}\times\ldots\times\faone_{p_{\arity{\funone}}}}{\faone_m}$
for $\seq[\arity{\funone}]{p},m$ natural numbers and $\funone \in \SIMREC[\faone]$.
The system is defined in Figure~\ref{fig:typing}, where 
$\multyone$ denotes the expression $\faone_{q_1}\times\ldots\times\faone_{q_k}$
for some $\seq[k]{q} \in \N$. 
Notice that composition preserves tiers. 
Moreover, recursion is allowed only on
inputs of tier higher than the tier of the function (in the case
$\funone = \simrec{\{\funone_i^j\}_{1\leq i\leq k,1\leq j\leq n}}$, we
require $p > m$).
\begin{definition}
  We call a function $\funone \in \SIMREC[\faone]$ definable by \emph{general ramified simultaneous recurrence} (\emph{GRSR} for short) if
  $\tj{\funone}{\faone_{p_1}\times\ldots\times\faone_{p_{\arity{\funone}}}}{\faone_m}$
  holds.
\end{definition}
\begin{remark}
  Consider the \emph{word algebra} $\faword = \{\cempty,\cwordzero,\cwordone\}$ consisting 
  of a constant $\cempty$ and two unary constructors $\cwordzero$ and $\cwordone$, 
  which is in bijective correspondence to the set of binary words. 
  Then the functions definable by ramified simultaneous recurrence over $\faword$ includes 
  the ramified recursive functions from \citet{Leivant:FMII:95}, and consequently all polytime computable functions.
\end{remark}

\newcommand{\TOP}{\rule{0pt}{2.5ex}}
\begin{figure}
  \centering
  \begin{framed}
    \[
    \infer
    {\TOP\tj{\funone_{\conone}}{\faone_{n}^{\arity{\conone}}}{\faone_n}}
    {}
    \hfill
    \infer
    {\TOP\tj{\proj{n}{m}}{\faone_{p_1}\times\ldots\times\faone_{p_m}}{\faone_{p_n}}}
    {}
    \hfill
    \infer
    {\TOP\tj{\casdist{\{\funone_i\}_{1\leq i\leq k}}}{\faone_p\times\multyone}{\faone_m}}
    {
      \tj{\funone_i}{\faone^{\arity{\conone_i}}_p\times\multyone}{\faone_m}
    }
    \]
    \[
    \infer
    {\strut\tj{\comp{\funone}{(\funtwo_1,\ldots,\funtwo_n)}}{\multyone}{\faone_m}}
    {
      \tj{\funone}{\faone_{p_1}\times\ldots\times\faone_{p_n}}{\faone_m} & \tj{\funtwo_i}{\multyone}{\faone_{p_i}}
    }
    \hfill
    \infer
    {\strut\tj{\simrec{\{\funone_i^j\}_{1\leq i\leq k,1\leq j\leq n}}}{\faone_p\times\multyone}{\faone_m}}
    {
      \tj{\funone_i^j}{\faone^{\arity{\conone_i}}_p\times \faone^{n \cdot \arity{\conone_i}}_m\times\multyone}{\faone_m}  
      & p > m
    }
    \]
  \end{framed}
\caption{Tiering as a Formal System.}
\label{fig:typing}
\end{figure}
\begin{example}\label{ex:algebra}
  \envskipline
  \begin{varenumerate}
  \item Consider $\fanat \defsym \{ \czero, \csuc \}$ with $\arity{\czero} = 0$ and $\arity{\csuc} = 1$,
    which is in bijective correspondence to the set of natural numbers. 
    We can define addition $\ofdom{\fun{add}}{\fanat_i \times \fanat_j \rightarrow \fanat_j}$ 
    for $i > j$, by
    \begin{align*}
      \fun{add}(\czero,\var{y}) & = \proj{1}{1}(\var{y}) = \var{y} 
      & \fun{add}(\csuc(\var{x}),\var{y}) & = (\comp{\funone_{\csuc}}{\proj{3}{2}})(\var{x},\fun{add}(\var{x},\var{y}),\var{y}) = \csuc(\fun{add}(\var{x},\var{y}))
      \tkom
    \end{align*}
    using general simultaneous ramified recursion, i.e. $\{\fun{add}\} = \simrec{\{\{\proj{1}{1},\comp{\funone_{\csuc}}{\proj{3}{2}}\}\}}$.
  \item Let $\farabbits \defsym \{\rabbitln, \rabbitlm, \rabbitn, \rabbitm\}$, where
    $\arity{\rabbitln} = \arity{\rabbitlm} = 0$, $\arity{\rabbitn} = 1$ and $\arity{\rabbitm} = 2$.
    Then we can define the functions $\ofdom{\frabbits}{\fanat_i \rightarrow \farabbits_j}$ for $i > j$ 
    from Section~\ref{s:examples} by composition from the following two functions, 
    defined by simultaneous ramified recurrence.
    \begin{align*}
      \fadults(\czero) & = \rabbitlm
      & \fadults(\csuc(\var{n})) & = (\comp{\funone_{\rabbitm}}{(\proj{3}{2},\proj{3}{3})})\,(\var{n},\fadults(\var{n}),\fbabies(\var{n})) = \rabbitm(\fadults(\var{n}),\fbabies(\var{n}))\\
      \fbabies(\czero) & = \rabbitln 
      & \fbabies(\csuc(\var{n})) & = (\comp{\funone_{\rabbitn}}{\proj{3}{3}})\,(\var{n},\fadults(\var{n}),\fbabies(\var{n})) = \rabbitn(\fadults(\var{n}))
      \tpkt
    \end{align*}
  \item 
    We can define a function
    $\ofdom{\fun{\#leafs}}{\farabbits \rightarrow \fanat}$
    by simultaneous primitive recursion
    which counts the number of leafs in $\farabbits$-trees as follows.
    \begin{align*}
      \fun{\#leafs}(\rabbitln) & = \csuc(\czero) 
      & \fun{\#leafs}(\rabbitlm) & = \csuc(\czero) \\
      \fun{\#leafs}(\rabbitn(\var{t})) & = \fun{\#leafs}(\var{t})
      & \fun{\#leafs}(\rabbitm(\var{l},\var{r})) & = \fun{add}(\fun{\#leafs}(\var{l}),\fun{\#leafs}(\var{r}))
      \tpkt
    \end{align*}
    However, this function \emph{cannot} be ramified, since $\fun{add}$ in the last 
    equation requires different tiers. 
    Indeed, having a ramified recursive function 
    $\ofdom{\fun{\#leafs}}{\farabbits_i \rightarrow \fanat_1}$ (for some $i > 1$) 
    defined as above would allow us to ramify $\fun{fib} = \comp{\fun{\#leafs}}{\frabbits}$ which 
    on input $n$ computes the \nth{$n$} Fibonacci number, and is thus an exponential function. 
  \end{varenumerate}
\end{example}

\paragraph*{Computational Model, Syntax and Semantics}
We introduce a simple, \emph{rewriting based}, notion of program for computing
functions over term algebras.
\begin{definition}
  A \emph{program} $\progone$ is given as a triple
  $\prog{\DS}{\CS}{\TRSone}$ consisting of two disjoint signatures 
  $\DS$ and $\CS$ of \emph{operation symbols} $\seq[m]{\symone}$ and 
  \emph{constructors} $\seq[n]{\conone}$ respectively,
  and a \emph{finite} set $\TRSone$ of \emph{rules} 
  $l \to r$ over terms $l,r \in \TERMS[\DS \cup \CS][\VS]$. 
  For each rule, the \emph{left-hand side} $l$ is of the form 
  $\symone_i(\seq[k]{p})$ where the \emph{patterns} $p_j$ consist only of variables and constructors, 
  % i.e. $p_i \in \TERMS[\CS][\VS]$, 
  and all variables occurring in the \emph{right-hand side} $r$ also occur in the left-hand side $l$. 
\end{definition}
We keep the program $\progone = \prog{\DS}{\CS}{\TRSone}$ fixed throughout the following. 
Moreover, we require that $\progone$ is \emph{orthogonal}, that is, the following two requirements are met:
\begin{varenumerate}
\item\emph{left-linearity:} the left-hand sides $l$ of each rule $l \to r \in \TRSone$ is \emph{linear}; and
\item\emph{non-ambiguity:} there are no two rules with overlapping left-hand sides in $\TRSone$.
\end{varenumerate}
Orthogonal programs define a class of deterministic first-order functional programs, see e.g.\ \cite{BN:1998}.
The domain of the defined functions is the constructor algebra $\TERMS[\CS]$.
Correspondingly, elements of $\TERMS[\CS]$ are called \emph{values}, 
which we denote by $\valone,\valtwo,\dots$\,.
\begin{figure}
  \centering
    \begin{framed}
      \[
      ~
      \]
      \[
      \infer
      {\symone(\seq[k]{\termone}) \reduce \valone}
      { \funone \in \DS
        & \termone_i \reduce \valone_i
        & \symone(\seq[k]{\valone}) \reduce \valone
      }
      \hfill
      \infer
      {\conone(\seq[k]{\termone}) \reduce \conone(\seq[k]{\valone})}
      {\conone \in \CS & \termone_i \reduce \valone_i}
     \hfill
      % \infer
      % {\symone(\seq[k]{\termone}) \reduce \valone}
      % { \termone_i \reduce \valone_i
      %   & \symone(\seq[k]{p}) \to r \in \TRSone 
      %   % & \ofdom{\sigma}{\VS \rightarrow \TERMS[\CS]}
      %   & \forall i.\ p_i\sigma = \valone_i
      %   & r\sigma \reduce \valone
      % }
      \hfill
      \]
      \[
      \hfill
      \infer
      {\symone(\seq[k]{\valone}) \reduce \valone}
      { \symone(\seq[k]{p}) \to r \in \TRSone 
        & \forall i.\ p_i\sigma = \valone_i
        & r\sigma \reduce \valone
      }
      \hfill
      \]
  \end{framed}
  \caption{Operational Semantics for Program $\prog{\DS}{\CS}{\TRSone}$.}
  \label{fig:os}
\end{figure}
In Figure~\ref{fig:os} we present the operational semantics, 
realizing standard \emph{call-by-value} evaluation order.  
The statement $\termone \reduce \valone$ means that the term $\termone$ \emph{reduces} to the value $\valone$.
We say that $\progone$ computes the function $\ofdom{\funone}{\TERMS[\CS]^k \rightarrow \TERMS[\CS]}$
if there exists an operation $\symone \in \DS$, such that 
$\funone(\seq[k]{\valone}) = \valone$ if and only if $\symone(\seq[k]{\valone}) \reduce \valone$
holds for all inputs $\valone_i \in \TERMS[\CS]$. 

\begin{example}[Continued from Example~\ref{ex:algebra}]\label{ex:program}
  The definition of $\frabbits$ from Section~\ref{s:examples} can be turned into a program $\prograbbits$ 
  over constructors of $\fanat$ and $\farabbits$, 
  by \emph{orienting} the underlying equations from left to right and replacing 
  applications of functions $\funone \in \{{\frabbits},{\fadults},{\fbabies}\}$ with corresponding
  operation symbols $\symone \in \{{\srabbits},{\sadults},{\sbabies}\}$. For instance, 
  concerning the function $\fadults$, the defining equations are turned into 
  $\sadults(\czero) \to \rabbitlm$ and $\sadults(\csuc(\var{n})) \to \rabbitm(\sadults(\var{n}),\sbabies(\var{n}))$.
\end{example}
\shortv{%
The example hints at a systematic construction of programs $\progone_{\funone}$ computing 
functions $\funone \in \SIMREC[\faone]$, which can be made precise~\cite{AD:TR:14}.
}
\longv{
\begin{definition}
  For $\funone \in \SIMREC[\faone]$, 
  by $\progone_{\funone}$ we denote the program $\prog{\DS_{\funone}}{\CS_{\funone}}{\TRSone_{\funone}}$ where:
  \begin{varitemize}
  \item the set of operations $\DS_{\funone}$ contains for each function $\funtwo$ 
    underlying the definition of $\funone$ a corresponding operation symbol $\symtwo$.
  \item the set of constructors $\CS_{\funone}$ contains the constructors of $\faone$; 
  \item the set of rules $\TRSone_{\funone}$ contain for each equation 
    $l = r$ defining a function  $\funtwo$ underlying the definition $\funone$
    the \emph{orientation} $l \to r$. 
  \end{varitemize}
\end{definition}
Notice that due to the inductive definition of the class $\SIMREC[\faone]$, 
the program $\progone_{\funone}$ is finite. From the shape of the initial functions 
and operations (Definition~\ref{d:basicfs} and Definition~\ref{d:schemes}) it 
is also clear that $\progone_{\funone}$ is orthogonal.
}
\paragraph*{Terms and Term Graphs}
Furthermore, we fix a set of \emph{variables} $\VS$ disjoint from function symbols. 
Terms over a signature $\FS$ and $\VS$ are defined as usual, 
and form a set $\TERMS[\FS][\VS]$. 
A term $\termone$ is called \emph{ground} if it does not contain variables, 
it is called \emph{linear} if every variable occurs at most once in $\termone$. 
The ground terms are collected in $\TERMS[\FS]$.
The set of \emph{subterms} $\subterms{\termone}$ of a term $\termone$ is defined by 
$\subterms{\termone} \defsym \{\termone\}$ if $\termone \in \VS$ and 
$\subterms{\termone} \defsym \bigcup_{1\leq i\leq\arity{f}}\subterms{\termone_i} \cup\{\termone\}$ 
if $\termone = f(\termone_1,\ldots,\termone_{\arity{f}})$. 
A \emph{substitution}, is a finite mapping $\sigma$ from variables to terms. 
By $\termone\sigma$ we denote the term obtained by replacing in $\termone$ 
all variables $\varone \in \dom(\sigma)$ by $\sigma(\varone)$. 
If $\termtwo = \termone\sigma$, we also say that $\termtwo$ is an \emph{instance} of the term $\termone$. 

We borrow key concepts from \emph{term graph rewriting} (see e.g. 
the survey of Plump~\cite{Plump:ENTCS:01} for an overview) and 
follow the presentation of \citet{BEGKPS:PARLE:87}.  
A \emph{term graph} $\tgraphone$ over a signature $\FS$ is 
a \emph{directed acyclic graph} whose nodes are labeled by symbols in $\FS \cup \VS$, 
and where outgoing edges are ordered. 
Formally, $\tgraphone$ is a triple $(\nodes,\suc,\lab)$ consisting of \emph{nodes} $\nodes$, 
a \emph{successors function} $\ofdom{\suc}{\nodes \to \nodes^{\ast}}$ and a 
\emph{labeling function} $\ofdom{\lab}{\nodes \to \mathcal{\FS \cup \VS}}$. 
We require that term graphs are \emph{compatible} with $\FS$, in the sense that 
for each node $\nodeone \in \nodes$,
if $\lab[\tgraphone](\nodeone) = f \in \FS$ then $\suc[\tgraphone](\nodeone) = \lseq[\arity{f}]{\nodeone}$ 
and otherwise, if $\lab[\tgraphone](\nodeone) = \varone \in \VS$, $\suc[\tgraphone](\nodeone) = \nil$. 
In the former case, we also write $\tgraphone(\nodeone) = f(\seq[\arity{f}]{\nodeone})$, the latter case
is denoted by $\tgraphone(\nodeone) = \varone$. 
We define the \emph{successor relation} $\reach[\tgraphone]$ on nodes in $\tgraphone$ such that
$\nodeone \reach[\tgraphone] \nodetwo$ holds iff $\nodetwo$ occurs in $\suc(\nodeone)$, 
if $\nodetwo$ occurs at the \nth{$i$} position we also write $\nodeone \suci[\tgraphone]{i} \nodetwo$. 
Throughout the following, we consider only \emph{acyclic} term graphs, 
that is, when $\reach[\tgraphone]$ is acyclic. 
Hence the \emph{unfolding} $\trepr[\tgraphone]{\nodeone}$ of $\tgraphone$ \emph{at} node $\nodeone$,
defined by $\trepr[\tgraphone]{\nodeone} \defsym \varone$ if $\tgraphone(\nodeone) = \varone \in \VS$, 
and otherwise $\trepr[\tgraphone]{\nodeone} \defsym f(\trepr[\tgraphone]{\nodeone_1},\dots,\trepr[\tgraphone]{\nodeone_k})$ 
where $\tgraphone(\nodeone) = f(\seq[k]{\nodeone})$,
results in a finite term. 
We called the term graph $\tgraphone$ \emph{rooted} if there
exists a unique node $\nodeone$, the \emph{root} of
$\tgraphone$, with $\nodeone \reachtr[\tgraphone] \nodetwo$ for every
$\nodetwo \in \nodes$.  We denote by
$\subgraphAt{\tgraphone}{\nodeone}$ the \emph{sub-graph} of $\tgraphone$
\emph{rooted} at $\nodeone$.
Consider a symbol $f \in \FS$ and nodes $\sseq[\arity{f}]{\nodeone} \subseteq \nodes$
of $\tgraphone$.
The extension $\tgraphtwo$ of $\tgraphone$ by a fresh node
$\nodeone_f \not \in \nodes$ with $\tgraphtwo(\nodeone_f) =
f(\seq[\arity{f}]{\nodeone})$ is denoted by 
$\ginsert{\tgraphone}{\nodeone_f}{f}{\seq[\arity{f}]{\nodeone}}$.
We write $f(\subgraphAt{\tgraphone}{\nodeone_1}, \dots, \subgraphAt{\tgraphone}{\nodeone_{\arity{f}}})$ 
for the term graph $\subgraphAt{\tgraphtwo}{\nodeone_f}$.

For two rooted term graphs $\tgraphone = (\nodes[\tgraphone],\suc[\tgraphone],\lab[\tgraphone])$ and 
$\tgraphtwo = (\nodes[\tgraphtwo],\suc[\tgraphtwo],\lab[\tgraphtwo])$, 
a mapping $\ofdom{\morphone}{\nodes[\tgraphone] \to \nodes[\tgraphtwo]}$ is called 
\emph{morphic} in $\nodeone \in \nodes[\tgraphone]$ if
(i)~$\lab[\tgraphone](\nodeone) = \lab[\tgraphtwo](\morphone(\nodeone))$ and
(ii)~$\nodeone \suci[\tgraphone]{i} \nodetwo$ implies $\morphone(\nodeone) \suci[\tgraphtwo]{i} \morphone(\nodetwo)$ for all appropriate $i$.
A \emph{homomorphism}
from $\tgraphone$ to $\tgraphtwo$ is a mapping
$\ofdom{\morphone}{\nodes[\tgraphone] \to \nodes[\tgraphtwo]}$ that 
(i)~maps the root of $\tgraphone$ to the root of $\tgraphtwo$ and that 
(ii)~is morphic in all nodes $\nodeone \in \nodes[\tgraphone]$ not labeled by a variable.
We write $\mmatch{\morphone}{\tgraphone}{\tgraphtwo}$ to indicate that $\morphone$ 
is, possibly an extension of, a homomorphism from $\tgraphone$ to $\tgraphtwo$.

Every term $\termone$ is trivially representable as a \emph{canonical tree} $\tree{\termone}$ unfolding to $\termone$, 
using a fresh node for each occurrence of a subterm in $\termone$. 
For $\termone$ a linear term, to each variable $\varone$ in $\termone$ we can 
associate a \emph{unique node} in $\tree{\termone}$ labeled by $\varone$, 
which we denote by $\nodeone_\varone$. 
The following proposition relates matching on terms and homomorphisms on trees.
It essentially relies on the imposed linearity condition. 
\begin{proposition}[Matching on Graphs]\label{p:tg:match}
  Let $\termone$ be a linear term, $\tgraphone$ be a term graph and let $\nodeone$ be a node of $\tgraphone$.
  \begin{enumerate}
  \item If $\mmatch{\morphone}{\tree{\termone}}{\subgraphAt{\tgraphone}{\nodeone}}$ then 
    there exists a substitution $\sigma$ such that $\termone\sigma = \trepr[\tgraphone]{\nodeone}$.
  \item Vice versa, if $\termone\sigma = \trepr[\tgraphone]{\nodeone}$ holds for some substitution $\sigma$ 
    then there exists a morphism $\mmatch{\morphone}{\tree{\termone}}{\subgraphAt{\tgraphone}{\nodeone}}$.
  \end{enumerate}
  Here, the substitution $\sigma$ and homomorphism $\morphone$ satisfy
  $\sigma(\varone) = \trepr[\tgraphone]{\morphone(\nodeone_\varone)}$ for all variables $\varone$ in $\termone$.
\end{proposition}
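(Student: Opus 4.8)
The plan is to establish both directions by structural induction on the linear term $\termone$, equivalently by induction on its canonical tree $\tree{\termone}$, keeping the substitution and the homomorphism tied together through the stated identity $\sigma(\varone) = \trepr[\tgraphone]{\morphone(\nodeone_\varone)}$. The decisive use of linearity is already visible here: since $\termone$ is linear, every variable $\varone$ occurs exactly once and hence has a \emph{unique} node $\nodeone_\varone$ in $\tree{\termone}$, so this prescription defines $\sigma$ unambiguously.

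For the first claim I would assume $\mmatch{\morphone}{\tree{\termone}}{\subgraphAt{\tgraphone}{\nodeone}}$, put $\sigma(\varone) := \trepr[\tgraphone]{\morphone(\nodeone_\varone)}$, and show $\termone\sigma = \trepr[\tgraphone]{\nodeone}$ by induction. If $\termone = \varone$, then $\tree{\termone}$ is the single node $\nodeone_\varone$, which a homomorphism sends to the root $\nodeone$ of the target; here no label constraint applies, since a homomorphism need not be morphic on variable nodes --- exactly what allows a variable to match an arbitrary subgraph --- and $\varone\sigma = \trepr[\tgraphone]{\nodeone}$ holds by the definition of $\sigma$. If $\termone = f(\seq[k]{\termone})$ with $f$ a constructor or operation symbol (the case $k = 0$ covering nullary symbols), then the root of $\tree{\termone}$ is a non-variable node labeled $f$, so $\morphone$ is morphic there; this forces $\tgraphone(\nodeone) = f(\seq[k]{\nodeone})$ and makes $\morphone$ restrict to a homomorphism from each $\tree{\termone_i}$ into $\subgraphAt{\tgraphone}{\nodeone_i}$. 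Applying the induction hypothesis with the restriction of $\sigma$ to the variables of $\termone_i$ gives $\termone_i\sigma = \trepr[\tgraphone]{\nodeone_i}$, whence $\termone\sigma = f(\trepr[\tgraphone]{\nodeone_1},\dots,\trepr[\tgraphone]{\nodeone_k}) = \trepr[\tgraphone]{\nodeone}$.

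For the converse I would assume $\termone\sigma = \trepr[\tgraphone]{\nodeone}$ and build $\morphone$ by the same induction. When $\termone = \varone$, I map its node to $\nodeone$, which respects roots and imposes no morphic condition, and $\trepr[\tgraphone]{\morphone(\nodeone_\varone)} = \varone\sigma = \sigma(\varone)$ as demanded. When $\termone = f(\seq[k]{\termone})$, the equation $f(\termone_1\sigma,\dots,\termone_k\sigma) = \trepr[\tgraphone]{\nodeone}$ forces $\tgraphone(\nodeone) = f(\seq[k]{\nodeone})$ with $\trepr[\tgraphone]{\nodeone_i} = \termone_i\sigma$; the induction hypothesis supplies homomorphisms $\tree{\termone_i} \to \subgraphAt{\tgraphone}{\nodeone_i}$, and as $\tree{\termone}$ allocates a fresh node for each occurrence these have disjoint domains and merge, together with sending the root to $\nodeone$, into one map $\morphone$. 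This $\morphone$ maps root to root, is morphic at the root (label $f$, and the $i$-th successor to $\nodeone_i$) and, by the hypothesis, at every other non-variable node, hence is the desired homomorphism.

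The one genuinely delicate point --- and the reason the proposition \emph{essentially relies} on linearity, as the surrounding text notes --- is the well-definedness and consistency of $\sigma$ across occurrences of a variable. Without linearity a variable could have two nodes in $\tree{\termone}$ that $\morphone$ sends to nodes with differing unfoldings, so the prescription for $\sigma$ would be ambiguous and no single $\sigma$ with $\termone\sigma = \trepr[\tgraphone]{\nodeone}$ need exist at all; linearity removes this obstruction and simultaneously trivialises the per-subterm merging in the inductive steps, since distinct subterms then share no variables. Everything else is routine bookkeeping of the morphic conditions, distinguishing only variable nodes (unconstrained) from non-variable nodes (label- and successor-preserving).
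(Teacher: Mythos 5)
Your proof is correct and follows essentially the same route as the paper's: structural induction on $\termone$ for both directions, splitting into the variable case (where no morphic condition applies) and the compound case (where being morphic at the root forces the matching structure and lets the induction hypothesis apply to the subtrees), with linearity guaranteeing that the per-subterm substitutions, respectively the per-subtree homomorphisms, glue together consistently. The only cosmetic difference is that you define $\sigma$ globally by the stated identity and then verify the equation, whereas the paper assembles $\sigma$ as a union of the substitutions produced by the induction hypothesis — these amount to the same thing.
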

\longv{
\begin{proof}
  The proof is by induction on $t$. 
  We first proof the direction from left to right. 
  Assume $\mmatch{\morphone}{\tree{\termone}}{\subgraphAt{\tgraphtwo}{\nodeone}}$.
  When $\termone$ is a variable, the substitution $\sigma \defsym \{ \termone \mapsto \trepr[\tgraphtwo]{\nodeone} \}$
  satisfies $\termone\sigma = \trepr[\tgraphtwo]{\nodeone}$. 
  Since $\morphone(\posempty) = \nodeone$, we conclude the base case. 
  For the inductive step, assume $\termone = f(\seq[k]{\termone})$. 
  Fix $i = 1,\dots,k$, and define $\morphone_i(p) = \morphone(i \posc p)$ for each position 
  $i \posc p$ in $t$. By case analysis on the nodes of $\tree{\termone_i}$
  one verifies $\mmatch{\morphone_i}{\tree{\termone_i}}{\trepr[\tgraphtwo]{n_i}}$.
  Thus by induction hypothesis $\termone_i\sigma_i = \trepr[\tgraphtwo]{n_i}$ for a substitution $\sigma_i$,
  where without loss of generality $\sigma_i$ is restricted to variables in $\termone_i$. 
  Define $\sigma \defsym \bigcup_{i=1}^k\sigma_i$. Then
  $\termone\sigma = f(\termone_1\sigma_1,\dots,\termone_k\sigma_k) 
  = f(\trepr[\tgraphtwo]{\nodeone_1},\dots,\trepr[\tgraphtwo]{\nodeone_k}) = \trepr[\tgraphtwo]{\nodeone}$, 
  where the last equality follows as $\morphone$ is morphic on $\nodeone$.
  Moreover, from the shape of $\sigma_i$ and $\morphone_i$ it is not difficult to see 
  that by construction the substitution $\sigma$ and homomorphism $\morphone$ are related 
  as claimed by the lemma. 
  
  Now for the inverse direction, suppose $\termone\sigma = \trepr[\tgraphtwo]{\nodeone}$. 
  If $\termone$ is a variable and thus $\tree{\termone}$ consists of a single unlabeled node,
  trivially $\mmatch{\morphone}{\tree{\termone}}{\subgraphAt{\tgraphtwo}{\nodeone}}$ holds
  for $\morphone$ the homomorphism which maps the root of $\tree{\termone}$ to $\nodeone$. 
  Observe that $\sigma(\termone) = \trepr[\tgraphtwo]{\nodeone} = \trepr[\tgraphtwo]{\morphone(\posempty)}$, 
  which concludes the base case. 
  For the inductive step suppose $\termone = f(\seq[k]{\termone})$, 
  hence $\tgraphtwo(\nodeone) = f(\seq[k]{\nodeone})$, $\termone_i\sigma = \trepr[\tgraphtwo]{\nodeone_i}$ ($i = 1,\dots,k$)
  and thus by induction hypothesis $\mmatch{\morphone_i}{\tree{\termone_i}}{\trepr[\tgraphtwo]{\nodeone_i}}$ for homomorphisms $\morphone_i$.
  Define the function $\morphone$ by $\morphone(\posempty) \defsym \nodeone$ and $\morphone(i \posc p) \defsym \morphone_i(p)$ for 
  all $i = 1,\dots,k$ and positions $i \posc p$ of $\termone$.
  Observe that $\morphone$ is defined on all nodes of $\tree{\termone}$. 
  By definition of $\morphone$ one finally concludes the lemma, using 
  the induction hypotheses together with the equalities $\tree{\termone}(i \posc p) = \tree{\termone_i}(p)$ 
  for nodes $i \posc p$ $(i = 1, \dots, k)$ of $\tree{\termone}$.
\end{proof}
}
%%% Local Variables: 
%%% mode: latex
%%% TeX-master: "paper"
%%% End: 

%%%%%%%%%%%%%%%%%%%%%%%%%%%%%%%%%%%%%%%%%%%%%%%%%
\section{Memoization and Sharing, Formally}\label{s:invariance}
%%%%%%%%%%%%%%%%%%%%%%%%%%%%%%%%%%%%%%%%%%%%%%%%%
To incorporate \emph{memoization}, we make
use of a \emph{cache} $\cacheone$ which stores results of intermediate
functions calls.  A \emph{cache} $\cacheone$ is modeled as a set of
tuples $(\symone(\seq[\arity{\symone}]{\valone}),\valone)$, where
$\symone \in \DS$ and $\seq[\arity{\symone}]{\valone}$ as well as $\valone$ are
values.
\begin{figure}
  \begin{framed}
    \centering
    \[
    \hfill
    \infer[Split]
    {\ctpair{\cacheone_0}{\symone(\seq[k]{\termone})} \reducem[m] \ctpair{\cacheone_{k+1}}{\valone}}
    { \symone \in \DS
      &\!\ctpair{\cacheone_{i-1}}{\termone_i} \reducem[n_i] \ctpair{\cacheone_i}{\valone_i} 
      &\!\ctpair{\cacheone_k}{\symone(\seq[k]{\valone})} \reducem[n] \ctpair{\cacheone_{k+1}}{\valone} 
      &\!\! m = n + \sum_{i=1}^k n_i
    }
    \hfill
    \]
    \[
    \hfill
    \infer[Con]
    {\ctpair{\cacheone_0}{\conone(\seq[k]{\termone})} \reducem[m] \ctpair{\cacheone_k}{\conone(\seq[k]{\valone})}}
    {\conone \in \CS & \ctpair{\cacheone_{i-1}}{\termone_i} \reducem[n_i] \ctpair{\cacheone_i}{\valone_i} & m = \sum_{i=1}^k n_i}
    \qquad
    \infer[Read]
    {\ctpair{\cacheone}{\symone(\seq[k]{\valone})} \reducem[0] \ctpair{\cacheone}{\valone}}
    {(\symone(\seq[k]{\valone}),\valone) \in \cacheone}
    \hfill
    \]
    \[
    \hfill
    \infer[Update]
    {\ctpair{\cacheone}{\symone(\seq[k]{\valone})} \reducem[m+1] \ctpair{\cachetwo \cup \{(\symone(\seq[k]{\valone}),\valone)\}}{\valone}}
    { (\symone(\seq[k]{\valone}),\valone) \not\in \cacheone 
      & \symone(\seq[k]{p}) \to r \in \TRSone 
      & \forall i.\ p_i\sigma = \valone_i
      & \ctpair{\cacheone}{r\sigma} \reducem[m] \ctpair{\cachetwo}{\valone}
    }
    \hfill
    \]
  \end{framed}
  \caption{Cost Annotated Operational Semantics with Memoization for Program $\prog{\DS}{\CS}{\TRSone}$.}
  \label{fig:os:memo}
\end{figure}
Figure~\ref{fig:os:memo} collects the \emph{memoizing operational
  semantics} with respect to the program $\progone =
\prog{\DS}{\CS}{\TRSone}$.  Here, a statement
$\ctpair{\cacheone}{\termone} \reducem[m] \ctpair{\cachetwo}{\valone}$
means that starting with a cache $\cacheone$, the term $\termone$
\emph{reduces} to the value $\valone$ with updated cache $\cachetwo$.
The natural number $m$ indicating the \emph{cost} of this reduction.
The definition is along the lines of the standard semantics
(Figure~\ref{fig:os}), carrying the cache throughout the reduction of
the given term.  The last rule of Figure~\ref{fig:os} is split into
two rules \rlname{Read} and \rlname{Update}. The former performs a
read from the cache, the latter the reduction in case the
corresponding function call is not tabulated, updating the cache with
the computed result.  Notice that in the semantics, a read is
attributed zero cost, whereas an update is accounted with a cost of
one.  Consequently the cost $m$ in $\ctpair{\cacheone}{\termone}
\reducem[m] \ctpair{\cachetwo}{\valone}$ refers to the number of
non-tabulated function applications.

\begin{lemma}
  We have $\ctpair{\cacheempty}{\termone} \reducem[m] \ctpair{\cacheone}{\valone}$ for some $m \in \N$ and cache $\cacheone$ if and only if
  $\termone \reduce \valone$.
\end{lemma}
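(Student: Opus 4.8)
The plan is to prove a more general, bidirectional statement relating the two semantics in the presence of an arbitrary \emph{sound} cache, and then instantiate it with the empty cache. Call a cache $\cacheone$ \emph{sound} if every entry records a genuine computation, that is, $(\symone(\seq[k]{\valone}),\valone) \in \cacheone$ implies $\symone(\seq[k]{\valone}) \reduce \valone$ in the standard semantics of Figure~\ref{fig:os}. The empty cache $\cacheempty$ is vacuously sound, so the lemma is the special case $\cacheone_0 = \cacheempty$ of the two claims below.

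First I would establish the forward direction (\emph{soundness} of the memoizing semantics): if $\cacheone_0$ is sound and $\ctpair{\cacheone_0}{\termone} \reducem[m] \ctpair{\cachetwo}{\valone}$, then $\termone \reduce \valone$ and the resulting cache $\cachetwo$ is again sound. This goes by induction on the memoizing derivation, with one case per rule of Figure~\ref{fig:os:memo}. The \rlname{Con} and \rlname{Split} cases thread the cache left to right through the premises, applying the induction hypothesis to each subterm (so that every intermediate cache stays sound) and then assembling the corresponding rule of Figure~\ref{fig:os}. The crucial case is \rlname{Read}: here $\ctpair{\cacheone}{\symone(\seq[k]{\valone})} \reducem[0] \ctpair{\cacheone}{\valone}$ is justified precisely by $(\symone(\seq[k]{\valone}),\valone) \in \cacheone$, and soundness of $\cacheone$ directly yields $\symone(\seq[k]{\valone}) \reduce \valone$. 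In the \rlname{Update} case the induction hypothesis on the premise $\ctpair{\cacheone}{r\sigma} \reducem[m] \ctpair{\cachetwo}{\valone}$ gives both $r\sigma \reduce \valone$ — whence $\symone(\seq[k]{\valone}) \reduce \valone$ by the matching rule of Figure~\ref{fig:os} — and the soundness of $\cachetwo$, so that extending $\cachetwo$ with the freshly established pair $(\symone(\seq[k]{\valone}),\valone)$ preserves soundness.

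Next I would prove the backward direction (\emph{completeness}): if $\termone \reduce \valone$, then for every sound cache $\cacheone_0$ there are $m$ and a sound cache $\cachetwo$ with $\ctpair{\cacheone_0}{\termone} \reducem[m] \ctpair{\cachetwo}{\valone}$. This proceeds by induction on the standard derivation, again threading the cache through the arguments of an operation or constructor symbol and relying on the strengthened statement (a sound \emph{output} cache) so that successive subterms can invoke the induction hypothesis. The interesting point is the evaluation of a call $\symone(\seq[k]{\valone})$ on already-evaluated arguments, which in Figure~\ref{fig:os} is necessarily derived by the rewrite rule. In the memoizing semantics this call must be matched by either \rlname{Read} or \rlname{Update}, so I case split on whether some entry $(\symone(\seq[k]{\valone}),\valtwo)$ already lies in the current cache: if so, soundness gives $\symone(\seq[k]{\valone}) \reduce \valtwo$, and since orthogonality makes $\reduce$ deterministic we get $\valtwo = \valone$ and may apply \rlname{Read}; otherwise I apply \rlname{Update}, discharging its recursive premise on $r\sigma$ by the induction hypothesis.

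I expect the main obstacle to be this \rlname{Read}/\rlname{Update} dichotomy in the backward direction: one must argue that consulting the cache can never return a \emph{wrong} value, which is exactly where the determinism of $\reduce$ (a consequence of orthogonality) is indispensable. A secondary, bookkeeping-level difficulty in both directions is the sequential threading of caches through the argument lists in \rlname{Split} and \rlname{Con}: one has to maintain the soundness invariant across the chain $\cacheone_0, \cacheone_1, \dots, \cacheone_k$, which is routine but must be phrased so that at each step the induction hypothesis applies to a provably sound incoming cache.
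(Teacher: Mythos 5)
Your proposal is correct and takes essentially the same route as the paper's own proof: the paper likewise strengthens the statement to arbitrary \emph{proper} caches (your \emph{sound} caches, with the identical invariant), proves each direction by induction on the respective derivation while threading the cache through argument evaluation, and resolves the \rlname{Read}/\rlname{Update} dichotomy in the completeness direction via determinism of $\reduce$ for orthogonal programs. The only difference is bookkeeping: you fold preservation of cache soundness into the induction statement of the soundness direction, whereas the paper obtains properness of the intermediate caches from its other claim.
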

\longv{
\begin{proof}
  Call a cache $\cacheone$ \emph{proper} if 
  $(\symone(\seq[k]{\valone}),\valone) \in \cacheone$ implies $\symone(\seq[k]{\valone}) \reduce \valone$. 
  For the direction from right to left, we show the following stronger claim.
  \begin{claim}
    Suppose $\termone \reduce \valone$ and let cache $\cacheone_1$ be a proper cache.
    Then $\ctpair{\cacheone_1}{\termone} \reducem[m] \ctpair{\cacheone_2}{\valone}$ for some $m$. 
  \end{claim}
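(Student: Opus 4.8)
The plan is to prove the two implications separately, in each case by induction on the available derivation, after \emph{generalising} the statement so that the cache travelling along the reduction can be controlled. Following the hint already set up, I call a cache $\cacheone$ \emph{proper} if every stored entry $(\symone(\seq[k]{\valone}),\valone) \in \cacheone$ is justified by the plain semantics of Figure~\ref{fig:os}, i.e.\ $\symone(\seq[k]{\valone}) \reduce \valone$. Since the empty cache $\cacheempty$ is vacuously proper, the lemma will follow by instantiating the two strengthened claims at $\cacheone_1 = \cacheempty$. The costs play no role here, as they are only existentially quantified, so the entire argument is about transferring derivations and preserving properness.

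For the direction from right to left I would prove the displayed claim, additionally asserting that the resulting cache $\cacheone_2$ is again proper, by induction on the derivation of $\termone \reduce \valone$ in Figure~\ref{fig:os}, with a case analysis on the last rule. If $\termone = \conone(\seq[k]{\termone})$ with $\conone \in \CS$, I thread the proper caches through the arguments $\termone_1,\dots,\termone_k$, applying the induction hypothesis once per argument so that each intermediate cache stays proper, and conclude with \rlname{Con}; no entry is added, so properness is preserved. If $\termone = \symone(\seq[k]{\termone})$ with $\symone \in \DS$, I likewise thread the caches through the arguments to reach the values $\seq[k]{\valone}$ with a proper cache, apply the induction hypothesis to the head premise $\symone(\seq[k]{\valone}) \reduce \valone$, and combine the two with \rlname{Split}. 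Finally, the function-application rule, whose conclusion is a value-application $\symone(\seq[k]{\valone}) \reduce \valone$, is simulated by \rlname{Read} at cost $0$ when the call is already tabulated in the (proper) cache, and otherwise by \rlname{Update}, whose premise $\ctpair{\cacheone}{r\sigma} \reducem[m'] \ctpair{\cachetwo}{\valone}$ comes from the induction hypothesis applied to $r\sigma \reduce \valone$; the inserted entry $(\symone(\seq[k]{\valone}),\valone)$ is proper precisely because $\symone(\seq[k]{\valone}) \reduce \valone$ is the conclusion being derived.

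For the direction from left to right I would dually prove: if $\cacheone_1$ is proper and $\ctpair{\cacheone_1}{\termone} \reducem[m] \ctpair{\cacheone_2}{\valone}$, then $\termone \reduce \valone$ and $\cacheone_2$ is proper. This goes by induction on the memoizing derivation of Figure~\ref{fig:os:memo}, one case per rule. The rules \rlname{Split} and \rlname{Con} mirror the defined-symbol and constructor rules of the plain semantics, invoking the induction hypothesis on each premise and relying on properness being carried along the threaded caches. For \rlname{Read}, the side condition $(\symone(\seq[k]{\valone}),\valone) \in \cacheone$ together with properness of $\cacheone$ yields $\symone(\seq[k]{\valone}) \reduce \valone$ directly, and the cache is unchanged. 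For \rlname{Update}, the induction hypothesis on the premise gives $r\sigma \reduce \valone$, whence the plain function-application rule gives $\symone(\seq[k]{\valone}) \reduce \valone$; this same judgement justifies the newly inserted entry, so the output cache remains proper.

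The only genuinely delicate point, and the step I expect to be the main obstacle, is exactly the need to generalise away from the empty cache: the sub-reductions occurring in \rlname{Split} and \rlname{Con} are evaluated starting from the cache produced by their predecessors, so the induction hypothesis must both \emph{assume} a proper input cache and \emph{deliver} a proper output cache in order to be chained across successive premises, and the \rlname{Read} rule is sound only because cached entries are correct. Once properness is built into both claims in this way, every case is a routine rule-by-rule simulation, and instantiating at $\cacheempty$ closes the lemma.
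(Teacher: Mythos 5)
Your strategy for this claim coincides with the paper's: strengthen the statement so that the resulting cache $\cacheone_2$ is again proper, then induct on the derivation of $\termone \reduce \valone$, threading proper caches left-to-right through the argument evaluations and closing the head call with \rlname{Read} or \rlname{Update}. However, there is one step you assert without justification, and it is the only place in this claim where a property of the program actually matters. In your \rlname{Read} case, "the call is already tabulated" means the cache contains \emph{some} entry $(\symone(\seq[k]{\valone}),\valtwo)$, and \rlname{Read} returns that cached $\valtwo$ --- not automatically the value $\valone$ appearing in the derivation you are simulating. Properness only gives you $\symone(\seq[k]{\valone}) \reduce \valtwo$; to conclude $\valtwo = \valone$ you must additionally invoke determinism of $\reduce$, which holds because the program is assumed orthogonal. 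The paper makes exactly this step explicit ("$\symone(\seq[k]{\valone}) \reduce \valtwo$ implies $\valone = \valtwo$ for orthogonal programs"); without it, the \rlname{Read} case does not close, since the simulated reduction would end in the wrong value.

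A small remark: you could even avoid appealing to orthogonality here by splitting instead on whether the \emph{exact} pair $(\symone(\seq[k]{\valone}),\valone)$ is in the cache. If it is, \rlname{Read} yields $\valone$ directly; if it is not, the side condition of \rlname{Update} is literally satisfied, and the induction hypothesis applied to $r\sigma \reduce \valone$ finishes the case, the new entry being justified by the conclusion under derivation. Either patch is one line, but as written the identification of the cached value with $\valone$ is a gap.
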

  The proof is by induction on the deduction $\Pi$ of the statement $\termone \reduce \valone$.
  \begin{enumerate}
  \item Suppose that the last rule in $\Pi$ has the form
    \[
    \infer
    {\symone(\seq[k]{\termone}) \reduce \valone}
    { \termone_i \reduce \valone_i
      & \symone(\seq[k]{p}) \to r \in \TRSone 
      & p_i\sigma = \valone_i
      & r\sigma \reduce \valone
    }
    \]
    We consider the more involved case where at least one $\termone_i$ is not a value. 
    By induction hypothesis, we obtain proper caches $\seq[0][k]{\cachetwo}$ with 
    $\cachetwo_0 = \cacheone_1$ and $\ctpair{\cachetwo_{i-1}}{\termone_i} \reducem[m_i] \ctpair{\cachetwo_{i-1}}{\valone_i}$. 
    By the rule \rlname{Split}, it suffices to show 
    $\ctpair{\cachetwo_{k}}{\symone(\seq[k]{\valone})} \reducem[n] \ctpair{\cacheone_2}{\valone}$
    for $\cacheone_2$ a proper cache. 
    We distinguish two cases. 
    Consider the case $(\symone(\seq[k]{\valone}),\valtwo) \in \cachetwo_{k}$ for some $\valtwo$. 
    Using that $\symone(\seq[k]{\valone}) \reduce \valtwo$ implies $\valone = \valtwo$ for orthogonal programs, 
    we conclude the case by one application of rule \rlname{Read}. 
    Otherwise, we conclude by rule \rlname{Update} using the induction hypothesis on $r\sigma \reduce \valone$. 
    Note that the resulting cache is also in this case proper. 
  \item The final case follows directly from induction hypothesis, 
    using the rule $\rlname{Constructor}$. 
  \end{enumerate}

  For the direction from left to right we show the following stronger claim
  \begin{claim}
    Suppose $\ctpair{\cacheone_1}{\termone} \reducem[m] \ctpair{\cacheone_2}{\valone}$ for a proper cache $\cacheone_1$.
    Then $\termone \reduce \valone$. 
  \end{claim}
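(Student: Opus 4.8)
The plan is to prove the two implications separately, in each case by induction on the structure of the governing derivation. Since both directions concern a single fixed reduction while the memoizing semantics threads an evolving cache, neither statement is directly amenable to induction; the cache has to be constrained. I would therefore introduce the notion of a \emph{proper} cache --- one in which every stored pair $(\symone(\seq[k]{\valone}),\valone)$ is justified by $\symone(\seq[k]{\valone}) \reduce \valone$ in the plain semantics --- and generalise both directions from the empty cache $\cacheempty$ (which is vacuously proper) to an arbitrary proper cache. This generalisation is what makes the induction hypotheses strong enough to survive the recursive calls.

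For the direction from $\termone \reduce \valone$ to the existence of a memoizing reduction, I would induct on the derivation of $\termone \reduce \valone$ and show that for every proper cache $\cacheone_1$ there is a cost $m$ and a \emph{proper} cache $\cacheone_2$ with $\ctpair{\cacheone_1}{\termone} \reducem[m] \ctpair{\cacheone_2}{\valone}$; carrying properness of the output cache through the induction is essential for the recursive call on the right-hand side. The constructor case is immediate using rule \rlname{Con}, threading the caches produced by the induction hypothesis on the arguments. In the operation case I would first reduce the arguments $\termone_i$ to $\valone_i$ in the same threaded fashion, reaching some proper cache $\cachetwo_k$, and then branch on whether $\symone(\seq[k]{\valone})$ is already tabulated: if $(\symone(\seq[k]{\valone}),\valtwo) \in \cachetwo_k$, properness gives $\symone(\seq[k]{\valone}) \reduce \valtwo$, and orthogonality (hence determinism) of $\progone$ forces $\valtwo = \valone$, so rule \rlname{Read} applies at cost $0$; otherwise rule \rlname{Update} applies, its premise $r\sigma \reduce \valone$ discharged by the induction hypothesis, and the freshly inserted pair keeps the cache proper precisely because the original derivation witnesses $\symone(\seq[k]{\valone}) \reduce \valone$.

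For the converse I would induct on the memoizing derivation of $\ctpair{\cacheone_1}{\termone} \reducem[m] \ctpair{\cacheone_2}{\valone}$, assuming $\cacheone_1$ proper, and conclude $\termone \reduce \valone$. Rule \rlname{Read} is handled directly, as the retrieved pair is justified by properness. For rule \rlname{Update} the induction hypothesis on $r\sigma$ yields $r\sigma \reduce \valone$, from which I rebuild the plain operation rule. The \rlname{Split} and \rlname{Con} cases follow by reassembling the subderivations obtained from the induction hypotheses. The one point of care is that \rlname{Split} threads intermediate caches $\cachetwo_0, \dots, \cachetwo_k$ through the premises, so the induction hypothesis must be applicable to each of them; I would therefore also record, as part of the same induction, that reduction maps proper caches to proper caches, so that each threaded cache is again proper.

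I expect the principal obstacle to be exactly this bookkeeping of the properness invariant: setting up the generalisation so that properness is simultaneously available as a hypothesis and re-established as a conclusion at every step, and, within it, the appeal to determinism of the program in the \rlname{Read} case, which is the only place where the orthogonality assumption on $\progone$ is genuinely used.
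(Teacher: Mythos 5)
Your proof of this claim is correct and follows essentially the same route as the paper: induction on the memoizing derivation, with \rlname{Read} discharged by properness of the cache, \rlname{Update} by the induction hypothesis on $r\sigma$, and \rlname{Split}/\rlname{Con} by reassembling the subderivations obtained from the induction hypotheses. The one point where you diverge is the bookkeeping of the properness invariant for the caches threaded through \rlname{Split}: you strengthen the induction so that it simultaneously proves that reduction maps proper caches to proper caches, whereas the paper instead appeals to its previous claim (the converse direction, producing from $\termone \reduce \valone$ and a proper cache a memoizing reduction whose final cache is proper) to argue that the intermediate caches $\cacheone_i$ are proper. Your variant is the more self-contained of the two: the paper's appeal only yields \emph{some} memoizing reduction from $\ctpair{\cacheone_{i-1}}{\termone_i}$ ending in a proper cache, and identifying that reduction's final cache with the cache $\cacheone_i$ actually occurring in the given derivation tacitly uses determinism of the memoizing big-step semantics, which the paper never states or proves (it only proves determinism for the small-step relation later, in Lemma~\ref{l:sss:deterministic}). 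Proving properness preservation inside the same induction, as you propose, closes that small gap at essentially no extra cost, and is exactly the invariant the \rlname{Update} case re-establishes via the freshly derived $\symone(\seq[k]{\valone}) \reduce \valone$.
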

  The proof is by induction on the deduction $\Pi$ of the statement
  $\ctpair{\cacheone_1}{\termone} \reducem[m] \ctpair{\cacheone_2}{\valone}$
  \begin{enumerate}
  \item Suppose first that the last rule in $\Pi$ is of the form 
    \[
    \infer
    {\ctpair{\cacheone_0}{\symone(\seq[k]{\termone})} \reducem[m] \ctpair{\cacheone_{k+1}}{\valone}}
    { \ctpair{\cacheone_{i-1}}{\termone_i} \reducem[m_i] \ctpair{\cacheone_i}{\valone_i} 
      & \ctpair{\cacheone_k}{\symone(\seq[k]{\valone})} \reducem[n] \ctpair{\cacheone_{k+1}}{\valone} 
    }
    \]
    By induction hypothesis, we see $\termone_i \reduce \valone_i$, using also 
    that the configurations $\confone_i$ are all proper by the previous claim. 
    As we also have $\symone(\seq[k]{\valone}) \reduce \valone$ by induction 
    hypothesis, it follows that some rule $\symone(\seq[k]{p}) \to r \in \TRSone$ 
    matches $\symone(\seq[k]{\valone})$. Putting things together,
    we conclude by one application $\rlname{Function}$.
  \item The remaining cases where the last rule in $\Pi$ is $\rlname{Constructor}$, $\rlname{Read}$ or $\rlname{Update}$
    follow either from the assumption that $\cacheone_1$ is proper, 
    or from induction hypothesis using that 
    by the previous claim the intermediate and resulting caches are all proper. 
  \end{enumerate}
\end{proof}}
The lemma confirms that the call-by-value semantics of Section~\ref{s:basics} 
is correctly implemented by the memoizing semantics. 
To tame the growth rate of values, 
we define \emph{small-step semantics} corresponding to the memoizing semantics, 
facilitating sharing of common sub-expressions. 

\paragraph*{Small-Step Semantics with Memoization and Sharing}
To incorporate sharing, we extend the pairs 
$\ctpair{\cacheone}{\termone}$ by a \emph{heap}, 
and allow \emph{references} to the heap both in terms and in caches. 
Let $\Loc$ denote a countably infinite set of \emph{locations}.
We overload the notion of \emph{value}, and define \emph{expressions} $\expone$ 
and \emph{(evaluation) contexts} $\evalctxone$ according to the following grammar:
\begin{alignat*}{3}
\valone
& \defsym \,\refone
&& \mid \conone(\seq[k]{\valone});\\
\expone 
& \defsym \,\refone
&& \mid \,\embcache{\symone}{\seq[k]{\refone}}{\expone}
&& \mid \symone(\seq[k]{\expone})
\mid \conone(\seq[k]{\expone});\\
\!\!\!\evalctxone 
& \defsym \hole 
&& \mid \embcache{\symone}{\seq[k]{\refone}}{\evalctxone}
&& \mid \symone(\seq[i-1]{\refone},\evalctxone,\seq[i+1][k]{\expone}) 
\mid \conone(\seq[i-1]{\refone},\evalctxone,\seq[i+1][k]{\expone}).
\end{alignat*}
Here, $\seq[k]{\refone},\refone \in \Loc$, $\symone \in \DS$ and
$\conone \in \CS$ are $k$-ary symbols.  An expression is a term
including references to values that will be stored on the heap.  The
additional construct $\embcache{\symone}{\seq[k]{\refone}}{\expone}$
indicates that the partially evaluated expression $\expone$ descends
from a call $\symone(\seq[k]{\valone})$, with arguments $\valone_i$
stored at location $\refone_i$ on the heap.  A context $\evalctxone$
is an expression with a unique \emph{hole}, denoted as $\hole$, where
all sub-expression to the left of the hole are references pointing to
values. This syntactic restriction is used to implement a
\emph{left-to-right}, \emph{call-by-value} evaluation order.  We
denote by $\evalctxone[\expone]$ the expression obtained by replacing
the hole in $\evalctxone$ by $\expone$.

A \emph{configuration} is a triple $\conf{\hcacheone}{\heapone}{\expone}$ consisting of a \emph{cache} $\hcacheone$, 
\emph{heap} $\heapone$ and expression $\expone$.
Unlike before, the cache $\hcacheone$ consists of pairs of the form $(\symone(\seq[k]{\refone}),\refone)$ 
where instead of values, we store references $\seq[k]{\refone},\refone$ pointing to the heap. 
The heap $\heapone$ is represented as a (multi-rooted) term graph $\heapone$ 
with nodes in $\Loc$ and constructors $\CS$ as labels. 
If $\refone$ is a node of $\heapone$, then we say that $\heapone$ stores at location $\refone$ the 
value $\trepr[\heapone]{\refone}$ obtained by unfolding $\heapone$ starting from location $\refone$.
We keep the heap in a \emph{maximally shared} form, that is, 
$\heapone(\refone_a) = \conone(\seq[k]{\refone}) = \heapone(\refone_b)$ implies $\refone_a = \refone_b$
for two locations $\refone_a,\refone_b$ of $\heapone$.
Thus crucially, values are stored once only, by the following lemma.
\begin{lemma}\label{l:heap:ms}
  Let $\heapone$ be a maximally shared heap with locations $\refone_1,\refone_2$.
  If $\trepr[\heapone]{\refone_1} = \trepr[\heapone]{\refone_2}$ then $\refone_1 = \refone_2$. 
\end{lemma}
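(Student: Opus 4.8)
The plan is to proceed by induction on the structure of the common value $\valone \defsym \trepr[\heapone]{\refone_1} = \trepr[\heapone]{\refone_2}$. Because $\heapone$ is acyclic and labelled exclusively by constructors from $\CS$, the unfolding $\trepr[\heapone]{\refone}$ at any location $\refone$ is a finite ground value; hence $\valone \in \TERMS[\CS]$ and an induction on the size of $\valone$ is well-founded. I would handle the base case and the inductive step uniformly through a single case analysis on the outermost constructor of $\valone$.

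So suppose $\valone = \conone(\valone_1,\dots,\valone_k)$ for some $k$-ary $\conone \in \CS$, where $k = 0$ corresponds to the base case of a constant. By the defining equation of the unfolding, the label of each root must coincide with the top symbol of $\valone$, so that $\heapone(\refone_1) = \conone(\refone_{1,1},\dots,\refone_{1,k})$ and $\heapone(\refone_2) = \conone(\refone_{2,1},\dots,\refone_{2,k})$ for appropriate successor locations, and moreover $\valone_i = \trepr[\heapone]{\refone_{1,i}} = \trepr[\heapone]{\refone_{2,i}}$ for each $1 \leq i \leq k$. Since every $\valone_i$ is a proper subterm of $\valone$, the induction hypothesis applies to the locations $\refone_{1,i}$ and $\refone_{2,i}$ and yields $\refone_{1,i} = \refone_{2,i}$.

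Consequently $\heapone(\refone_1) = \conone(\refone_{1,1},\dots,\refone_{1,k}) = \conone(\refone_{2,1},\dots,\refone_{2,k}) = \heapone(\refone_2)$, i.e.\ the two roots carry the same label and the same ordered successor sequence. At this point the maximal-sharing hypothesis on $\heapone$ applies directly and forces $\refone_1 = \refone_2$, closing the induction. The argument is short, and I do not expect a genuine obstacle: the only points requiring care are that acyclicity is what makes the unfoldings finite and the induction well-founded, and that one must first observe that the top symbols and arities of the two root labels agree before invoking the induction hypothesis on the aligned successors; all the real content is then discharged by the single appeal to maximal sharing.
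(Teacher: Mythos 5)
Your proof is correct: inducting on the structure of the common unfolded value, matching the root constructors, applying the induction hypothesis to the aligned successor locations, and then invoking the maximal-sharing condition at the root is exactly the argument this lemma calls for. The paper states Lemma~\ref{l:heap:ms} without proof, treating it as immediate from the definition of maximal sharing, and your write-up supplies precisely that intended reasoning, including the correct observation that acyclicity makes the induction well-founded.
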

% \begin{proof}
%   For a proof by contradiction, suppose $\trepr[\heapone]{\refone} = \trepr[\heapone]{\reftwo}$
%   but $\refone \not=\reftwo$. Hence without loss of generality, 
%   there exist two $\reachtr[\heapone]$-minimal
%   nodes $\refone',\reftwo'$ in $\heapone$ with $\refone' \not=\reftwo'$ and 
%   $\trepr[\heapone]{\refone'} = \trepr[\heapone]{\reftwo'}$. 
%   By minimality, it follows that $\heapone(\refone') = \heapone(\reftwo')$, 
%   contradicting that $\heapone$ is maximally shared. 
% \end{proof}
The operation $\hmerge{\heapone}{\conone(\seq[k]{\refone})}$, defined as follows,
is used to extend the heap $\heapone$
with a constructor $\conone$ whose arguments point to $\seq[k]{\refone}$, 
retaining maximal sharing. 
Let $\refone_f$ be the first location not occurring in the nodes $\nodes$ of 
$\heapone$ (with respect to an arbitrary, but fixed enumeration on $\Loc$).
For $\seq[k]{\refone} \in \nodes$ we define
\[
\hmerge{\heapone}{\conone(\seq[k]{\refone})} \defsym 
\begin{cases}
  (\heapone,\refone) & \text{if $\heapone(\refone) = \conone(\seq[k]{\refone})$,} \\ 
  (\heapone \cup \{\refone_f \mapsto \conone(\seq[k]{\refone})\},\refone_f) & \text{otherwise.}  
\end{cases}
\]
Observe that the first clause is unambiguous on maximally shared heaps. 

\begin{figure}
  \centering
  \begin{framed}
    \[
    \hfill
    \infer[apply]
    {
      \conf{\hcacheone}{\heapone}{\evalctxone[\symone(\seq[k]{\refone})]}
      \Rrew \conf{\hcacheone}{\heapone}{\evalctxone[\embcache{\symone}{\seq[k]{\refone}}{r\indsubst{\morphone}}]}
    }{
      \deduce{
        \mmatch{\morphone}{\tgraphone}{\symone(\subgraphAt{\heapone}{\refone_1},\dots,\subgraphAt{\heapone}{\refone_k})} 
        \quad\indsubst{\morphone} \defsym \{ \varone \mapsto \morphone(\refone_\varone) \mid \refone_\varone \in \Loc,\ \tgraphone(\refone_\varone) = \varone \in \VS \}
      }
      {\strut(\symone(\seq[k]{\refone}),\refone) \not\in \hcacheone & \symone(\seq[k]{p}) \to r \in \TRSone & \tgraphone \defsym \tree{\symone(\seq[k]{p})}}
    }
    \hfill
    \]
    \[
    \hfill
    \infer[read]
    {
      \conf{\hcacheone}{\heapone}{\evalctxone[\symone(\seq[k]{\refone})]}
      \rread \conf{\hcacheone}{\heapone}{\evalctxone[\refone]}
    }{
      (\symone(\seq[k]{\refone}),\refone) \in \hcacheone 
    }
    \hfill
    \]
    \[
    \hfill
    \infer[store]
    {
      \conf{\hcacheone}{\heapone}{\evalctxone[\embcache{\symone}{\seq[k]{\refone}}{\refone}]}
      \rstore \conf{\hcacheone \cup \{(\symone(\seq[k]{\refone}),\refone)\}}{\heapone}{\evalctxone[\refone]}
    }{}
    \hfill
    \]
    \[
    \hfill
    \infer[merge]
    {
      \conf{\hcacheone}{\heapone}{\evalctxone[\conone(\seq[k]{\refone})]}
      \rmerge \conf{\hcacheone}{\heapone'}{\evalctxone[\refone]}
    }{
      (\heapone',\refone) = \hmerge{\heapone}{\conone(\seq[k]{\refone})}
    }
    \hfill
    \]
  \end{framed}
  \caption{Small Step Semantics with Memoization and Sharing for Program $\prog{\DS}{\CS}{\TRSone}$.}
\label{fig:sss:memo}
\end{figure}

Figure~\ref{fig:sss:memo} collects the small step semantics with respect to a program $\progone = \prog{\DS}{\CS}{\TRSone}$.
We use ${\rsm}$ to abbreviate the relation ${\rread} \cup {\rstore} \cup {\rmerge}$
and likewise we abbreviate ${\Rrew} \cup {\rsm}$ by $\Rrsm$. 
Furthermore, we define ${\rRrsm} \defsym {\rsms} \cdot {\Rrew} \cdot {\rsms}$. 
Hence the \emph{$m$-fold composition} $\rRrsm[m]$ corresponds to a $\Rrsm$-reduction with precisely 
$m$ applications of $\Rrew$. 
\longv{
Throughout the following, we are interested in reductions over \emph{well-formed} configurations:
\begin{definition}\label{d:conf:wf}
  A configuration $\conf{\hcacheone}{\heapone}{\expone}$ is \emph{well-formed} if 
  the following conditions hold.
  \begin{varenumerate}
  \item\label{d:conf:wf:heap} The heap $\heapone$ is maximally shared. 
  \item\label{d:conf:wf:cache} The cache $\hcacheone$ is a function, and \emph{compatible} with 
    $\expone$, that is, if $\embcache{\symone}{\seq[k]{\refone}}{\expone'}$
    occurs as a sub-expression in $\expone$, then $(\symone(\seq[k]{\refone}),\refone) \not\in \hcacheone$ 
    for any $\refone$. 
  \item\label{d:conf:wf:dangling} The configuration contains no \emph{dangling locations}, that is,
    $\heapone(\refone)$ is defined for each location $\refone$ occurring in $\hcacheone$ and $\expone$.
  \end{varenumerate}
\end{definition}
\begin{lemma}\label{l:sss:wf}
  \begin{varenumerate}
    \item\label{l:sss:wf:ctx} If $\conf{\hcacheone}{\heapone}{\evalctxone[\expone]}$ is well-formed 
      then so is $\conf{\hcacheone}{\heapone}{\expone}$.
    \item\label{l:sss:wf:closure} If $\conf{\hcacheone_1}{\heapone_1}{\expone_1} \Rrsm \conf{\hcacheone_2}{\heapone_2}{\expone_2}$  
      and $\conf{\hcacheone_1}{\heapone_1}{\expone_1}$ is well-formed 
      then so is $\conf{\hcacheone_2}{\heapone_2}{\expone_2}$.
  \end{varenumerate}
\end{lemma}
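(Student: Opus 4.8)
The plan is to prove the two parts separately. Part~\ref{l:sss:wf:ctx} is a routine sub-expression argument: the cache $\hcacheone$ and heap $\heapone$ are literally unchanged between the two configurations, so condition~\ref{d:conf:wf:heap} (maximal sharing) and the functionality half of condition~\ref{d:conf:wf:cache} hold by assumption. The remaining obligations are \emph{downward closed} along the sub-expression relation: every sub-expression of $\expone$ is a sub-expression of $\evalctxone[\expone]$, so every construct $\embcache{\symone}{\seq[k]{\refone}}{\expone'}$ occurring in $\expone$ already occurs in $\evalctxone[\expone]$ (giving the compatibility half of condition~\ref{d:conf:wf:cache}), and every location occurring in $\expone$ already occurs in $\evalctxone[\expone]$ (giving condition~\ref{d:conf:wf:dangling}).

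For part~\ref{l:sss:wf:closure} I would distinguish the four rules. In the \rlname{apply} case cache and heap are untouched, so it remains to check conditions~\ref{d:conf:wf:cache} and~\ref{d:conf:wf:dangling} for $\evalctxone[\embcache{\symone}{\seq[k]{\refone}}{r\indsubst{\morphone}}]$: the substitution $\indsubst{\morphone}$ produces no new $\embcache$-constructs, so the only fresh one is the outer one, compatible precisely by the side condition $(\symone(\seq[k]{\refone}),\refone)\not\in\hcacheone$; and no dangling location is introduced since the references occurring in $r\indsubst{\morphone}$ are among the $\morphone(\refone_\varone)$, which are nodes of $\heapone$ as images under the homomorphism into $\symone(\subgraphAt{\heapone}{\refone_1},\dots,\subgraphAt{\heapone}{\refone_k})$. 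The \rlname{read} case is easier still: the set of $\embcache$-subterms only shrinks, and the introduced location $\refone$ is non-dangling because $(\symone(\seq[k]{\refone}),\refone)\in\hcacheone$ together with condition~\ref{d:conf:wf:dangling} of the source. For \rlname{merge} the cache is untouched; here the key point is that $\hmerge{\heapone}{\conone(\seq[k]{\refone})}$ either returns $\heapone$ or adjoins a single fresh node whose label–successor pair was, by the ``otherwise'' clause, not already present, so maximal sharing (condition~\ref{d:conf:wf:heap}) is preserved; since $\heapone$ is only extended, every previously defined location stays defined, the children $\seq[k]{\refone}$ are nodes of $\heapone$ by the no-dangling hypothesis (so the call to $\hmerge$ is legitimate), and the result has no dangling locations.

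The main obstacle is the \rlname{store} case, where the cache grows to $\hcacheone\cup\{(\symone(\seq[k]{\refone}),\refone)\}$ and I must re-establish condition~\ref{d:conf:wf:cache} in full. That the extended cache is still a \emph{function} follows from compatibility of the source: since $\embcache{\symone}{\seq[k]{\refone}}{\refone}$ occurs in the expression, $(\symone(\seq[k]{\refone}),\refone')\not\in\hcacheone$ for every $\refone'$, so the inserted entry cannot clash. For compatibility of the extended cache with $\evalctxone[\refone]$, note that the $\embcache$-subterms of $\evalctxone[\refone]$ are exactly those of $\evalctxone$, each already compatible with $\hcacheone$; the delicate step is to rule out that one of them carries the very call $\symone(\seq[k]{\refone})$ just stored. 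I expect this to follow from the invariant, maintained along $\Rrsm$-reductions, that the calls decorating distinct $\embcache$-constructs are pairwise distinct --- which holds because \rlname{apply} only ever wraps a call that is absent from the cache, so compatible configurations never carry a pending call twice. Condition~\ref{d:conf:wf:dangling} is then immediate, as $\refone$ already occurred in the source expression and the heap is unchanged.
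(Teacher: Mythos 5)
Your case analysis is, in substance, the paper's own proof transposed: the paper checks each well-formedness condition against the rules that could disturb it, you check each rule against all conditions, and for \rlname{apply}, \rlname{read} and \rlname{merge} your arguments are correct (indeed more detailed than the paper's, e.g.\ the observation that the locations in $r\indsubst{\morphone}$ are homomorphic images of nodes of $\heapone$). In the \rlname{store} case, your argument that the extended cache remains a function is exactly the paper's. Moreover, you have correctly isolated the one genuinely delicate point, which the paper waves off with ``compatibility follows trivially'': compatibility of $\hcacheone$ with $\evalctxone[\embcache{\symone}{\seq[k]{\refone}}{\refone}]$ says nothing about whether some \emph{other} $\embcache$-construct occurring in $\evalctxone$ carries the very call $\symone(\seq[k]{\refone})$ being stored.

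The gap is that your proposed repair fails. The invariant you invoke --- pairwise distinctness of the calls decorating pending $\embcache$-constructs --- is \emph{not} maintained along $\Rrsm$-reductions, and your justification for it is a non sequitur: the side condition of \rlname{apply} inspects only the \emph{cache}, and by compatibility the pending calls are precisely the ones \emph{absent} from the cache, so this check can never prevent a still-pending call from being wrapped a second time. Concretely, for a program with the (orthogonal) rule $\symone(\var{x}) \to \symone(\var{x})$, starting from the initial configuration $\conf{\cacheempty}{\heapone}{\symone(\refone)}$, two \rlname{apply} steps reach $\conf{\cacheempty}{\heapone}{\embcache{\symone}{\refone}{\embcache{\symone}{\refone}{\symone(\refone)}}}$, a reachable configuration with two identically decorated pending calls. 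Worse, the assertion you are proving is itself false at this level of generality: the configuration $\conf{\cacheempty}{\heapone}{\embcache{\symone}{\refone}{\embcache{\symone}{\refone}{\refone}}}$, where $\heapone$ consists of the single node $\refone$ labeled by a nullary constructor, is well-formed, yet a single \rlname{store} step turns it into a configuration violating condition~\ref{d:conf:wf:cache}. Such configurations are unreachable, so a correct proof must restrict attention to configurations reachable from initial ones, and must then argue --- using determinism and the stack discipline of pending calls --- that an $\embcache$-construct sharing its call with an enclosing pending one can never have its body reduced to a bare reference, and hence is never the subject of a \rlname{store}. That argument is genuinely more work than either your proposal or the paper's ``trivially'' supplies: your instinct that something non-trivial is hiding here was right, but the invariant you chose is refuted by the very rule you cite in its support.
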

\begin{proof}
  It is not difficult to see that Assertion~\ref{l:sss:wf:ctx} holds. 
  To see that Assertion~\ref{l:sss:wf:closure} holds, 
  fix a well-formed configuration $\conf{\hcacheone_1}{\heapone_1}{\expone_1}$ and
  suppose $\conf{\hcacheone_1}{\heapone_1}{\expone_1} \Rrsm \conf{\hcacheone_2}{\heapone_2}{\expone_2}$. 
  We check that $\conf{\hcacheone_2}{\heapone_2}{\expone_2}$ is well-formed by case 
  analysis on $\Rrsm$.
  \begin{enumerate}
  \item \emph{The heap $\heapone_2$ is maximally shared}: 
    As only the relation $\rmerge$ modifies the heap, it suffices to consider the case
    $\conf{\hcacheone_1}{\heapone_1}{\expone_1} \rmerge \conf{\hcacheone_2}{\heapone_2}{\expone_2}$. 
    Then $(\heapone_2,\refone) = \hmerge{\heapone_1}{\conone(\seq[k]{\refone})}$ for some location $\refone$, 
    and the property follows as $\MERGE$ preserves maximal sharing.
  \item \emph{The cache $\hcacheone_2$ is a function}: 
    It suffices to consider the rules $\rstore$. 
    As immediate consequence of compatibility of $\hcacheone_1$ with $\expone_1$ it follows that $\hcacheone_2$ is a function.
  \item \emph{The cache $\hcacheone_2$ is compatible with $\expone_2$}:
    Only the rules $\Rrew$ and $\rstore$ potentially contradict compatibility. 
    In the former case, the side conditions ensure that $\expone_2$ and $\hcacheone_2$ are compatible, 
    in the latter case compatibility follows trivially from compatibility of 
    $\hcacheone_1$ with $\expone_1$.
  \item \emph{No dangling references}:
    Observe that only rule $\rmerge$ introduces a fresh location. The merge operations 
    guarantees that this location occurs in the heap $\heapone_2$. 
  \end{enumerate}
\end{proof}
From now on, if not mentioned otherwise we will suppose that configurations are well-formed, tacitly employing Lemma~\ref{l:sss:wf}.
}

It is now time to show that the model of computation we have just introduced fits
our needs, namely that it faithfully simulates big-step semantics as in
Figure~\ref{fig:os:memo} (itself a correct implementation of call-by-value evaluation
from Section~\ref{s:basics}). This is proven by first showing how big-step semantics can
be \emph{simulated} by small-step semantics, later proving that the latter is in fact
\emph{deterministic}.

In the following, we denote by $\trepr[\heapone]{\expone}$ the term obtained 
from $\expone$ by following pointers to the heap, ignoring the annotations $\embcache{\symone}{\seq[k]{\refone}}{\cdot}$. 
Formally, we define
\[
 \trepr[\heapone]{\expone} \defsym 
 \begin{cases}
   f(\trepr[\heapone]{\expone_1},\dots,\trepr[\heapone]{\expone_k}) 
   & \text{if $\expone = f(\seq[k]{\expone})$,} \\
   \trepr[\heapone]{\expone'} & \text{if $\expone = \embcache{\symone}{\seq[k]{\refone}}{\expone'}$.}
 \end{cases}
\]
Observe that this definition is well-defined as long as $\heapone$ contains all locations occurring in $\expone$%
\shortv{ (a property that is preserved by $\Rrsm$-reductions)}.
\longv{%
Likewise, we set $\trepr[\heapone]{\hcacheone} \defsym \bigl\{ (\trepr[\heapone]{\expone},\trepr[\heapone]{\refone}) \mid (\expone,\refone) \in \hcacheone \bigr\}$. 

Our simulation result relies on the following two auxiliary lemmas concerning heaps.
The first is based on the observation that in $\Rrsm$-reductions, the heap is monotonically increasing.
\begin{lemma}\label{l:sss:heap}
  If $\conf{\hcacheone_1}{\heapone_1}{\expone_1} \Rrsm \conf{\hcacheone_2}{\heapone_2}{\expone_2}$ then the following properties hold:
  \begin{enumerate}
  \item\label{l:sss:heap:ref} 
    $\trepr[\heapone_2]{\refone} = \trepr[\heapone_1]{\refone}$ for every location $\refone$ of $\heapone_1$; 
  \item\label{l:sss:heap:config} 
    $\trepr[\heapone_2]{\hcacheone_1} = \trepr[\heapone_1]{\hcacheone_1}$ and 
    $\trepr[\heapone_2]{\expone_1} = \trepr[\heapone_1]{\expone_1}$. 
  \end{enumerate}
\end{lemma}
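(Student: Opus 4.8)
The plan is to proceed by case analysis on the single $\Rrsm$-step relating the two configurations, exploiting the fact that among the four reduction rules only \rlname{merge} ever alters the heap. First I would dispatch the rules \rlname{apply}, \rlname{read} and \rlname{store}, each of which leaves the heap untouched (so $\heapone_2 = \heapone_1$) and therefore makes both assertions hold trivially. The same is true of a \rlname{merge} step whose underlying operation $\hmerge{\heapone_1}{\conone(\seq[k]{\refone})}$ falls into the first clause of $\MERGE$, where the constructor node already exists and the heap is returned unchanged. Hence the only case with any content is a \rlname{merge} step in which $\heapone_2 = \heapone_1 \cup \{\refone_f \mapsto \conone(\seq[k]{\refone})\}$ for the fresh location $\refone_f \notin \nodes[\heapone_1]$.

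For Assertion~\ref{l:sss:heap:ref} in this case I would argue by well-founded induction along the acyclic successor relation $\reach[\heapone_1]$, proving the claim for a node from the claim for its successors. The crucial observation is that adjoining $\refone_f$ changes neither the label nor the ordered successors of any node already present, i.e.\ $\heapone_2(\refone) = \heapone_1(\refone)$ for every $\refone \in \nodes[\heapone_1]$; since the successors of such a node are again nodes of $\heapone_1$, the induction hypothesis applies to each of them, and unfolding $\refone$ in $\heapone_2$ yields exactly the same term as unfolding it in $\heapone_1$ (the base case of a nullary constructor being immediate). Intuitively, $\refone_f$ is never reachable from an old node --- it is picked as the first location outside $\nodes[\heapone_1]$ and all of its outgoing edges point back into $\heapone_1$ --- so it cannot influence the unfolding $\trepr[\heapone_1]{\refone}$ of any $\refone \in \nodes[\heapone_1]$.

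Assertion~\ref{l:sss:heap:config} then follows as a corollary. By well-formedness of $\conf{\hcacheone_1}{\heapone_1}{\expone_1}$, specifically the absence of dangling locations (Definition~\ref{d:conf:wf}), every location occurring in $\expone_1$ or in $\hcacheone_1$ is a node of $\heapone_1$. A routine structural induction on $\expone_1$ following the two defining clauses of $\trepr[\heapone]{\cdot}$ on expressions (the annotation $\embcache{\symone}{\seq[k]{\refone}}{\cdot}$ being simply discarded) reduces $\trepr[\heapone_2]{\expone_1} = \trepr[\heapone_1]{\expone_1}$ to the leaf equalities $\trepr[\heapone_2]{\refone} = \trepr[\heapone_1]{\refone}$, which are exactly Assertion~\ref{l:sss:heap:ref}. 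The identity $\trepr[\heapone_2]{\hcacheone_1} = \trepr[\heapone_1]{\hcacheone_1}$ is obtained in the same manner, applying Assertion~\ref{l:sss:heap:ref} to each reference appearing in a cache entry $(\symone(\seq[k]{\refone}),\refone)$.

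I expect the only genuine content to lie in the inductive argument for Assertion~\ref{l:sss:heap:ref}; the point that must be got right is the freshness of $\refone_f$, guaranteed by $\MERGE$ selecting the first location outside $\nodes[\heapone_1]$, since this is precisely what makes $\refone_f$ unreachable from the old nodes and hence invisible to their unfoldings. Everything else --- the three heap-preserving rules and the two structural recursions defining $\trepr[\heapone]{\cdot}$ --- is bookkeeping.
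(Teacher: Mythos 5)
Your proposal is correct and follows essentially the same route as the paper's own proof: reduce to the \rlname{merge} case (all other rules leave the heap untouched), observe that $\hmerge{\heapone_1}{\conone(\seq[k]{\refone})}$ preserves labels and successors of all nodes of $\heapone_1$, derive Assertion~1 by induction on the unfolding, and obtain Assertion~2 by structural induction over the cache and the expression. The only difference is one of detail: you spell out the well-founded induction along $\reach[\heapone_1]$ and the appeal to well-formedness (no dangling locations) that the paper leaves implicit as ``easy to establish'' and ``standard inductions''.
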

\begin{proof}
  As for any other step the heap remains untouched, the only non-trivial case is 
  $\conf{\hcacheone_1}{\heapone_1}{\evalctxone[\conone(\seq[k]{\refone})]} \rmerge \conf{\hcacheone_1}{\heapone_2}{\evalctxone[\refone]}$
  with $(\heapone_2,\refone) = \hmerge{\heapone_1}{\conone(\seq[k]{\refone})}$. 
  Observe that by definition of $\MERGE$, $\heapone_2(\refone) = \heapone_1(\refone)$ for every $\refone \in \nodes[\heapone_1]$.
  From this Assertion~\ref{l:sss:heap:ref} is easy to establish.
  Assertion~\ref{l:sss:heap:config} follows then by standard inductions 
  on $\hcacheone_1$ and $\evalctxone$, 
  respectively.
\end{proof}

\begin{lemma}\label{l:sss:merge}
  Let $\conf{\hcacheone}{\heapone}{\evalctxone[\valone]}$ be a configuration
  for a value $\valone$. 
  Then $\conf{\hcacheone}{\heapone}{\evalctxone[\valone]} \rmerges \conf{\hcacheone}{\heapone'}{\evalctxone[\refone]}$ 
  with $\trepr[\heapone']{\refone} = \trepr[\heapone]{\valone}$.
\end{lemma}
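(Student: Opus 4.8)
The plan is to prove the statement by induction on the structure of the value $\valone$, recalling that by the grammar a value is either a single location or a constructor applied to values. Throughout I only ever fire $\rmerge$-steps (which is why the reflexive--transitive closure $\rmerges$ of $\rmerge$ suffices), and I note once and for all that $\rmerge$ never touches the cache, so the cache stays $\hcacheone$ in every intermediate configuration. In the base case $\valone = \refone$ is already a location, so $\evalctxone[\valone]$ is literally of the required form $\evalctxone[\refone]$: I take zero steps, set $\heapone' \defsym \heapone$, and the equality $\trepr[\heapone']{\refone} = \trepr[\heapone]{\valone}$ holds trivially.

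For the inductive step, suppose $\valone = \conone(\seq[k]{\valone})$. Since the $\rmerge$-rule only fires on a constructor whose arguments are already \emph{locations}, I first turn each $\valone_i$ into a reference, processing the arguments left to right. Setting $\heapone_0 \defsym \heapone$, for $i = 1,\dots,k$ I apply the induction hypothesis to $\valone_i$ inside the enlarged context $\evalctxone_i \defsym \evalctxone[\conone(\refone_1,\dots,\refone_{i-1},\hole,\valone_{i+1},\dots,\valone_k)]$, which is a legal evaluation context precisely because everything to the left of the hole is already a reference. This yields $\conf{\hcacheone}{\heapone_{i-1}}{\evalctxone_i[\valone_i]} \rmerges \conf{\hcacheone}{\heapone_i}{\evalctxone_i[\refone_i]}$ with $\trepr[\heapone_i]{\refone_i} = \trepr[\heapone_{i-1}]{\valone_i}$. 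Chaining these reductions leaves the configuration $\conf{\hcacheone}{\heapone_k}{\evalctxone[\conone(\seq[k]{\refone})]}$, to which I apply one final $\rmerge$-step with $(\heapone',\refone) = \hmerge{\heapone_k}{\conone(\seq[k]{\refone})}$, reaching $\evalctxone[\refone]$ as required.

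It remains to verify $\trepr[\heapone']{\refone} = \trepr[\heapone]{\valone}$. By definition of $\MERGE$ we have $\heapone'(\refone) = \conone(\seq[k]{\refone})$, hence $\trepr[\heapone']{\refone} = \conone(\trepr[\heapone']{\refone_1},\dots,\trepr[\heapone']{\refone_k})$. The only real obstacle is the bookkeeping of the growing heap: I must argue that the references $\refone_i$ fixed early still denote the correct subvalues after the later merges extend the heap. This is exactly what Lemma~\ref{l:sss:heap} provides, since each $\rmerge$-step only enlarges the heap. Applying its first assertion along the remaining reduction gives $\trepr[\heapone']{\refone_i} = \trepr[\heapone_i]{\refone_i}$, and applying it repeatedly to the locations of $\valone_i$ (all of which already live in $\heapone$, as the configuration is well-formed and thus free of dangling locations) gives $\trepr[\heapone_{i-1}]{\valone_i} = \trepr[\heapone]{\valone_i}$. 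Combined with the induction hypothesis $\trepr[\heapone_i]{\refone_i} = \trepr[\heapone_{i-1}]{\valone_i}$, this yields $\trepr[\heapone']{\refone_i} = \trepr[\heapone]{\valone_i}$, whence $\trepr[\heapone']{\refone} = \conone(\trepr[\heapone]{\valone_1},\dots,\trepr[\heapone]{\valone_k}) = \trepr[\heapone]{\valone}$, completing the induction.
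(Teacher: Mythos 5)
Your proof is correct and follows essentially the same route as the paper's: both proceed by induction, building the reduction from $\rmerge$-steps that merge constructors into the heap bottom-up, and both invoke the heap-monotonicity properties of Lemma~\ref{l:sss:heap} (together with the absence of dangling locations) to see that unfoldings of previously obtained references are unaffected by later merges. The only difference is organizational and immaterial: you use structural induction on the value, handling the arguments by the induction hypothesis and merging the root last, whereas the paper inducts on the number of constructor occurrences, peeling off one innermost constructor with all-reference arguments per step and then applying the induction hypothesis to the rest.
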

\begin{proof}
  Note that by assumptions, $\expone$ consist only of constructors and locations. 
  We proof the lemma by induction on the number of constructor symbols in $\expone$. 
  In the base case, $\expone = \refone$ and the lemma trivially holds. 
  For the inductive step, 
  it is not difficult to see that $\expone=\evalctxone'[\conone(\seq[k]{\refone})]$
  for some evaluation context $\evalctxone'$, and hence 
  $\conf{\hcacheone}{\heapone}{\evalctxone[\expone]} \rmerge \conf{\hcacheone}{\heapone'}{\evalctxone[\evalctxone'[\refone]]}$, 
  where $(\heapone',\refone) = \hmerge{\heapone}{\conone(\seq[k]{\refone})}$. 
  Using that $\trepr[\heapone']{\refone} = \trepr[\heapone']{\conone(\seq[k]{\refone})}$
  by definition of $\MERGE$ and Lemma~\eref{l:sss:heap}{config} we conclude 
  $\trepr[\heapone']{\evalctxone[\evalctxone'[\refone]]} = \trepr[\heapone]{\evalctxone[\expone]}$.
  We complete this derivation to the desired form, by induction hypothesis. 
\end{proof}}%
An \emph{initial configuration} is a configuration of the form $\conf{\cacheempty}{\heapone}{\expone}$
with $\heapone$ a maximally shared heap and $\expone = \symone(\seq[k]{\valone})$ an expression 
unfolding to a function call.
Notice that the arguments $\seq[k]{\valone}$ are allowed to contain references to the heap $\heapone$. 
\begin{lemma}[Simulation]\label{l:sss:simulation}
  Let $\conf{\cacheempty}{\heapone}{\expone}$ be an initial configuration.
  If $\ctpair{\cacheempty}{\trepr[\heapone]{\expone}} \reducem[m] \ctpair{\cacheone}{\valone}$ holds for $m \geq 1$ then 
  $\conf{\cacheempty}{\heapone}{\expone} \rRrsm[m] \conf{\hcacheone}{\heaptwo}{\refone}$ 
  for a location $\refone$ in $\heaptwo$ with $\trepr[\heaptwo]{\refone} = \valone$.
\end{lemma}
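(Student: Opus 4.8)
The plan is to prove a strengthened statement by induction on the derivation $\Pi$ of the big-step judgement, allowing arbitrary well-formed configurations instead of only initial ones and relating the two caches through the unfolding map $\trepr[\heapone]{\cdot}$. Concretely, I would show: for every well-formed configuration $\conf{\hcacheone}{\heapone}{\expone}$, if $\ctpair{\trepr[\heapone]{\hcacheone}}{\trepr[\heapone]{\expone}} \reducem[m] \ctpair{\cachetwo}{\valone}$, then $\conf{\hcacheone}{\heapone}{\expone} \rRrsm[m] \conf{\hcacheone'}{\heaptwo}{\refone}$ for a location $\refone$ of $\heaptwo$ with $\trepr[\heaptwo]{\refone} = \valone$ and $\trepr[\heaptwo]{\hcacheone'} = \cachetwo$, the heap growing monotonically and well-formedness being preserved throughout by Lemma~\ref{l:sss:wf}. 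The lemma itself then follows by instantiating $\hcacheone = \cacheempty$, since $\trepr[\heapone]{\cacheempty} = \cacheempty$; moreover $m \geq 1$ forces $\trepr[\heapone]{\expone}$ to be a function call, so the last rule of $\Pi$ is $\rlname{Split}$.

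For the structural part I would first treat the evaluation of arguments, which is shared between the $\rlname{Con}$ case and the argument premises of $\rlname{Split}$. Since every small-step rule fires under an arbitrary evaluation context, the relation is closed under contexts (context composition preserves both the context shape and, by Lemma~\ref{l:sss:wf}, well-formedness); hence I can peel off the surrounding context with Lemma~\ref{l:sss:wf}, apply the induction hypothesis to each argument from left to right, and reattach the context. Threading is sound because heaps only grow and unfoldings of existing locations are stable (Lemma~\ref{l:sss:heap}), so the references produced for earlier arguments remain valid while later ones are reduced, and the induction hypothesis keeps the caches in correspondence along the sequence. Once all arguments of a constructor are references I close the $\rlname{Con}$ case with $\rmerge$, using Lemma~\ref{l:sss:merge} when a whole argument is already a value (which carries big-step subcost $0$ and contributes only $\rsms$-steps). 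The apply-step count matches exactly: the $\sum_i n_i$ apply-steps come from the arguments, while merges are structural and therefore uncounted.

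The crux is the function-call step on $\symone(\seq[k]{\refone})$, where the arguments have already been reduced to references with $\trepr[\heapone']{\refone_i} = \valone_i$; here the cache correspondence together with maximal sharing does the work. In the $\rlname{Read}$ subcase we have $(\symone(\seq[k]{\valone}),\valone) \in \trepr[\heapone']{\hcacheone'}$, so $\hcacheone'$ contains some $(\symone(\refone'_1,\dots,\refone'_k),\refone)$ with $\trepr[\heapone']{\refone'_i} = \valone_i = \trepr[\heapone']{\refone_i}$; by Lemma~\ref{l:heap:ms} the references coincide, $\refone'_i = \refone_i$, so the entry is literally $(\symone(\seq[k]{\refone}),\refone)$ and the $\rread$ rule fires with zero apply-steps, matching $n = 0$. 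In the $\rlname{Update}$ subcase the call is absent from the cache, so (again via the correspondence and Lemma~\ref{l:heap:ms}) the side condition of the $\Rrew$ rule holds; Proposition~\ref{p:tg:match} converts the term match $p_i\sigma = \valone_i$ into a morphism $\morphone$ of $\tree{\symone(\seq[k]{p})}$ into $\symone(\subgraphAt{\heapone'}{\refone_1},\dots,\subgraphAt{\heapone'}{\refone_k})$, and the relation $\sigma(\varone) = \trepr[\heapone']{\morphone(\nodeone_\varone)}$ matches the definition of $\indsubst{\morphone}$, yielding $\trepr[\heapone']{r\indsubst{\morphone}} = r\sigma$. I then apply $\Rrew$ (one apply-step), invoke the induction hypothesis on the body $r\indsubst{\morphone}$ inside the box (which is itself an evaluation context), and finish with $\rstore$, which removes the box and records $(\symone(\seq[k]{\refone}),\refone)$; this gives $m'+1 = n$ apply-steps and a final cache matching $\cachetwo \cup \{(\symone(\seq[k]{\valone}),\valone)\}$.

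The main obstacle is precisely this interaction between the two cache representations: the big-step cache is keyed by values while the small-step cache is keyed by locations, and the $\rread$ rule demands a syntactic match on locations. Maximal sharing (Lemma~\ref{l:heap:ms}) is exactly what makes the value-to-location correspondence functional, so that a cached value is always retrievable at the very location freshly computed for the arguments; without it the simulation would break in the $\rlname{Read}$ case. A secondary point requiring care is the well-formedness bookkeeping—keeping the heap maximally shared, the cache a function compatible with the boxes introduced by $\Rrew$, and free of dangling locations—which I discharge uniformly through Lemma~\ref{l:sss:wf} at each inductive step.
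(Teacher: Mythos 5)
Your proof skeleton is the paper's own --- strengthen the statement to configurations with arbitrary caches related to the big-step cache by unfolding, induct on the big-step derivation, evaluate arguments left-to-right (using heap monotonicity, Lemma~\ref{l:sss:heap}, and Lemma~\ref{l:sss:merge}), and resolve the \rlname{Read}/\rlname{Update} cases via Lemma~\ref{l:heap:ms} and Proposition~\ref{p:tg:match} --- but your induction hypothesis is stated over the wrong class of configurations, and this is a genuine flaw rather than a cosmetic one. You quantify over all \emph{well-formed} configurations, whereas the claim is false as soon as the expression contains a box. Take $\expone = \embcache{\symone}{\refone_1}{\refone_2}$ with empty cache and a well-formed heap containing $\refone_1,\refone_2$: its unfolding is the value $\trepr[\heapone]{\refone_2}$, so the big-step judgement has cost $m=0$ and leaves the cache empty; yet the only available small step is $\rstore$, so every reduction to a bare reference ends with a cache containing $(\symone(\refone_1),\refone_2)$, contradicting your required correspondence $\trepr[\heaptwo]{\hcacheone'} = \cachetwo$. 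This is precisely why the paper introduces \emph{proper} configurations (well-formed \emph{and} box-free) and proves the claim only for those; properness is also what makes the case analysis exhaustive --- a proper configuration unfolding to $\conone(\seq[k]{\termone})$ must be a reference or literally of the form $\conone(\seq[k]{\expone})$, while a well-formed one admits a third, box, case that cannot be handled. The repair is cheap, since (as in your own argument) the induction hypothesis is only ever applied to argument positions and to rule bodies $r\indsubst{\morphone}$, which are box-free.

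Second, your claim asserts a reduction in $\rRrsm[m]$ exactly. Since $\rRrsm[m]$ is the $m$-fold composition of ${\rsms}\cdot{\Rrew}\cdot{\rsms}$, for $m=0$ it is the identity relation; but cost-$0$ sub-derivations --- arguments that are already values and need $\rmerge$ steps, cache hits needing $\rread$ --- require genuine $\rsm$-steps, as your own narrative concedes (``contributes only $\rsms$-steps'', ``$\rread$ fires with zero apply-steps''). So the reductions you construct in those sub-cases do not inhabit the relation your induction hypothesis promises, and the induction does not close as stated. The paper's claim instead asserts a reduction in ${\rsms}\cdot{\rRrsm[m]}$ and observes that this collapses to $\rRrsm[m]$ exactly when $m>0$; that observation --- not the shape of the top-level term, which is a function call by the very definition of initial configuration --- is the actual role of the hypothesis $m\geq 1$ in the lemma, a role your sketch misattributes. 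With ``well-formed'' strengthened to ``proper'' and $\rRrsm[m]$ relaxed to ${\rsms}\cdot{\rRrsm[m]}$ in your claim, the rest of your argument goes through and coincides with the paper's proof.
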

\longv{
\begin{proof}
  Call a configuration $\conf{\hcacheone}{\heapone}{\expone}$
  \emph{proper} if it is well-formed and $\expone$ does not 
  contain a sub-expression $\embcache{\symone}{\seq[k]{\valone}}{\expone'}$.
  We show the following claim:
  \begin{claim}
    For every proper configuration $\conf{\hcacheone_1}{\heapone_1}{\expone_1}$, 
    $\ctpair{\trepr[\heapone_1]{\hcacheone_1}}{\trepr[\heapone_1]{\expone_1}} \reducem[m] \ctpair{\cacheone_2}{\valone}$
    implies $\conf{\hcacheone_1}{\heapone_1}{\expone_1} \rsms \cdot \rRrsm[m] \conf{\hcacheone_2}{\heapone_2}{\refone}$ 
    with $\ctpair{\trepr[\heapone_2]{\hcacheone_2}}{\trepr[\heapone_2]{\refone}} = \ctpair{\cacheone_2}{\valone}$.
  \end{claim}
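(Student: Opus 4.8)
The plan is to prove the stated Claim by induction on the derivation $\Pi$ of the big-step judgement $\ctpair{\trepr[\heapone_1]{\hcacheone_1}}{\trepr[\heapone_1]{\expone_1}} \reducem[m] \ctpair{\cacheone_2}{\valone}$, with a case analysis on the last rule of $\Pi$; the Simulation Lemma then follows by instantiating the Claim at $\conf{\cacheempty}{\heapone}{\expone}$, which is proper since its cache is empty and $\expone = \symone(\seq[k]{\valone})$ carries no annotation. A preliminary observation I would record is that every small-step rule fires under an arbitrary evaluation context and changes the heap and cache independently of it, so that $\conf{\hcacheone}{\heapone}{\expone'} \Rrsm \conf{\hcacheone'}{\heapone'}{\expone''}$ entails $\conf{\hcacheone}{\heapone}{\evalctxone[\expone']} \Rrsm \conf{\hcacheone'}{\heapone'}{\evalctxone[\expone'']}$, because plugging a redex-context into a surrounding context again yields an evaluation context. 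Together with Lemma~\ref{l:sss:heap} (the heap only grows, preserving existing unfoldings) and Lemma~\ref{l:sss:wf} (subconfigurations of proper configurations are proper) this lets me reduce a chosen subexpression in place and splice the reduction into the ambient term while maintaining the cache correspondence.

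For the \emph{constructor} case $\trepr[\heapone_1]{\expone_1} = \conone(\seq[k]{\termone})$, properness forces $\expone_1 = \conone(\seq[k]{\expone})$ with $\trepr[\heapone_1]{\expone_i} = \termone_i$. I reduce the arguments left to right: once $\expone_1,\dots,\expone_{i-1}$ are references, the context $\conone(\refone_1,\dots,\refone_{i-1},\hole,\expone_{i+1},\dots,\expone_k)$ is well-formed, so I apply the induction hypothesis to the $i$-th argument (the correspondence $\trepr[\heapone_{i-1}]{\hcacheone_{i-1}} = \cacheone_{i-1}$ is maintained inductively) and lift its reduction through the context. When all arguments are references a single $\rmerge$ produces a location unfolding to $\conone(\seq[k]{\valone})$, retaining maximal sharing. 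The \emph{split} case begins identically, reducing $\symone(\seq[k]{\expone})$ to $\symone(\seq[k]{\refone})$ with $\trepr[\heapone']{\refone_i} = \valone_i$ and $\trepr[\heapone']{\hcacheone'} = \cacheone_k$, after which the nested function-call sub-derivation is simulated.

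That function-call sub-derivation — whether arising as the last premise of a \rlname{Split} or directly as the root of $\Pi$, in which case Lemma~\ref{l:sss:merge} first merges the argument values to references (the leading $\rsms$) — is either \rlname{Read} or \rlname{Update}, and these are the informative cases. If it is \rlname{Read}, then $(\symone(\seq[k]{\valone}),\valone) \in \trepr[\heapone']{\hcacheone'}$, so some $(\symone(\seq[k]{\refone'}),\refone'') \in \hcacheone'$ unfolds to it; since $\trepr[\heapone']{\refone'_i} = \valone_i = \trepr[\heapone']{\refone_i}$, maximal sharing (Lemma~\ref{l:heap:ms}) gives $\refone'_i = \refone_i$, so the entry $(\symone(\seq[k]{\refone}),\refone'')$ is already present and one $\rread$ yields $\refone''$ with no $\Rrew$-step, matching the cost $0$. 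This matching of the \emph{value}-indexed cache with the \emph{reference}-indexed cache is exactly where sharing and memoization must cooperate. If it is \rlname{Update}, then $\symone(\seq[k]{\valone})$ is uncached (which transfers to $\hcacheone'$ by the same unfolding argument), a rule $\symone(\seq[k]{p}) \to r$ matches via $p_i\sigma = \valone_i$, and $\ctpair{\cacheone_k}{r\sigma} \reducem[m'] \ctpair{\cachetwo}{\valone}$ at cost $m'+1$. By Proposition~\ref{p:tg:match} there is a morphism $\morphone$ from $\tree{\symone(\seq[k]{p})}$ into $\symone(\subgraphAt{\heapone'}{\refone_1},\dots,\subgraphAt{\heapone'}{\refone_k})$ with $\sigma(\varone) = \trepr[\heapone']{\morphone(\refone_\varone)}$, so one $\Rrew$ rewrites $\symone(\seq[k]{\refone})$ to $\embcache{\symone}{\seq[k]{\refone}}{r\indsubst{\morphone}}$ with $\trepr[\heapone']{r\indsubst{\morphone}} = r\sigma$. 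Crucially the body $r\indsubst{\morphone}$ carries no annotation, hence $\conf{\hcacheone'}{\heapone'}{r\indsubst{\morphone}}$ is proper and the induction hypothesis applies to it; lifting the resulting reduction through $\embcache{\symone}{\seq[k]{\refone}}{\hole}$ and closing with a $\rstore$ records $(\symone(\seq[k]{\refone}),\refone_r)$, whose unfolding $(\symone(\seq[k]{\valone}),\valone)$ is precisely the entry added by \rlname{Update}.

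It remains to account for the step count and assemble the shape $\rsms \cdot \rRrsm[m]$. Each argument reduction contributes $\rsms \cdot \rRrsm[n_i]$, and since $\rsms \cdot \rsms = \rsms$ these compose to $\rsms \cdot \rRrsm[\sum_i n_i]$, with the final constructor $\rmerge$ (resp. the $\rread$ in the cached case) absorbed into a trailing $\rsms$. In the \rlname{Update} case the leading $\Rrew$, the body's $\rsms \cdot \rRrsm[m']$ and the closing $\rstore$ rearrange, using $\Rrew \subseteq \rsms \cdot \Rrew \cdot \rsms$, into $\rRrsm[m'+1]$, so the split case yields $\rsms \cdot \rRrsm[m]$ with $m = n + \sum_i n_i$ and $n$ the number of $\Rrew$-steps of the function-call case ($0$ for \rlname{Read}, $m'+1$ for \rlname{Update}). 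Finally, for the Lemma $m \geq 1$, so the leading $\rsms$ collapses into the first block via $\rsms \cdot \rRrsm[m] = \rRrsm[m]$, giving $\conf{\cacheempty}{\heapone}{\expone} \rRrsm[m] \conf{\hcacheone}{\heaptwo}{\refone}$ with $\trepr[\heaptwo]{\refone} = \valone$. I expect the real obstacle to be neither of these recombinations but the \rlname{Update} case: one must resist applying the induction hypothesis to the non-proper annotated expression and descend into its proper body instead, while checking throughout that maximal sharing keeps the reference cache a faithful image of the value cache across the store.
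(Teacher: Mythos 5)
Your proof follows essentially the same route as the paper's: induction on the big-step derivation with case analysis on the last rule, left-to-right argument reduction lifted through evaluation contexts, Lemma~\ref{l:sss:merge} to collapse constructor values into references, Lemma~\ref{l:heap:ms} to transfer cache (non-)membership between the value-indexed and reference-indexed caches, Proposition~\ref{p:tg:match} for the \rlname{Update} rewrite step, and---the key point you correctly isolate---applying the induction hypothesis to the proper body $r\indsubst{\morphone}$ rather than to the annotated expression, then closing with $\rstore$ and recombining via the composition laws for $\rsms$ and $\rRrsm$. The only difference is presentational: you factor the \rlname{Read}/\rlname{Update} handling into a common function-call subcase shared by \rlname{Split} and the root cases, whereas the paper treats the four big-step rules as four separate induction cases.
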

  Observe that ${\rsms \cdot \rRrsm[m]} = {\rRrsm[m]}$ whenever $m > 0$. 
  Since an initial configuration is trivially proper, the lemma follows from the claim. 

  To prove the claim, abbreviate the relation ${\rsms \cdot \rRrsm[m]}$ by $\rsl{m}$ for all $m \in \N$. 
  Below, we tacitly employ ${\rsl{m_1} \cdot \rsl{m_2}} = {\rsl{m_1 + m_2}}$ for all $m_1,m_2\in\N$. 
  The proof is by induction on the deduction 
  $\Pi$ of the statement
  $\ctpair{\trepr[\heapone]{\hcacheone_1}}{\trepr[\heapone_1]{\expone}} \reducem[m] \ctpair{\cacheone}{\valone}$. 
  \begin{enumerate}
  \item Suppose that the last rule in $\Pi$ has the form
    \[
    \infer
    {\ctpair{\cacheone_0}{\conone(\seq[k]{\termone})} \reducem[m] \ctpair{\cacheone_k}{\conone(\seq[k]{\valone})}}
    {\conone \in \CS & \ctpair{\cacheone_{i-1}}{\termone_i} \reducem[m_i] \ctpair{\cacheone_i}{\valone_i} & m = \sum_{i=1}^k m_i}
    \]
    Fix a proper configuration $\conf{\hcacheone_0}{\heapone_0}{\expone_0}$ unfolding to 
    $\ctpair{\cacheone_0}{\conone(\seq[k]{\termone})}$.
    Under these assumptions, either
    $\expone_0$ is a location or $\expone_0 = \conone(\seq[k]{\expone})$.
    The former case is trivial, as then $\termone$ is a value and thus $m=0$. 
    Hence suppose $\expone_0 = \conone(\seq[k]{\expone})$. 
    We first show that for all $i \leq k$, 
    \begin{equation}
      \label{eq:sss:simulation:1}
      \tag{\dag}
      \conf{\hcacheone_0}{\heapone_0}{\conone(\seq[k]{\expone})} 
      \rsl{\sum_{j=1}^i m_j} 
      \conf{\hcacheone_i}{\heapone_i}{\conone(\seq[i]{\refone},\seq[i+1][k]{\expone})}
      \tkom
    \end{equation}
    for a configuration $\conf{\hcacheone_i}{\heapone_i}{\conone(\seq[i]{\refone},\seq[i+1][k]{\expone})}$
    which unfolds to the configuration $\ctpair{\cacheone_i}{\conone(\seq[i]{\valone},\seq[i+1][k]{\termone})}$. 
    The proof is by induction on $i$, we consider the step from $i$ to $i+1$. 
    Induction hypothesis yields a well-formed configuration 
    $\conf{\hcacheone_i}{\heapone_i}{\evalctxone[\expone_{i+1}]}$
    for $\evalctxone = \conone(\seq[i]{\refone},\hole,\seq[i+2][k]{\expone})$
    reachable by a Derivation~\eqref{eq:sss:simulation:1}. 
    As the configuration $\conf{\hcacheone_i}{\heapone_i}{\expone_{i+1}}$ unfolds to $\ctpair{\cacheone_i}{\termone_{i+1}}$, 
    the induction hypothesis of the claim on the assumption $\ctpair{\cacheone_i}{\termone_{i+1}} \reducem[m_{i+1}] \ctpair{\cacheone_{i+1}}{\valone_{i+1}}$ yields
    $\conf{\hcacheone_i}{\heapone_i}{\expone_{i+1}} \rsl{m_{i+1}} \conf{\hcacheone_{i+i}}{\heapone_{i+1}}{\refone_{i+1}}$
    where the resulting configuration unfolds to $\ctpair{\cacheone_{i+1}}{\valone_{i+1}}$. 
    As a consequence, its not difficult to see that also 
    $\conf{\hcacheone_i}{\heapone_i}{\evalctxone[\expone_{i+1}]} 
    \rsl{m_{i+1}} 
    \conf{\hcacheone_{i+1}}{\heapone_{i+1}}{\evalctxone[\refone_{i+1}]}$ holds. 
    Since $\trepr[\heapone_{i+1}]{\refone_{i+1}} = \valone_{i+1}$
    and $\trepr[\heapone_{i}]{\evalctxone[\expone_{i+1}]} = \conone(\seq[i]{\valone},\termone_{i+1},\seq[i+2][k]{\termone})$,
    using Lemma~\eref{l:sss:heap}{config} on the last equality 
    it is not difficult to see that 
    $\trepr[\heapone_{i+1}]{\evalctxone[\refone_{i+1}]} = \conone(\seq[i]{\valone},\valone_{i+1},\seq[i+2][k]{\termone})$.
    As we already observed $\trepr[\heapone_{i+1}]{\hcacheone_{i+1}} = \cacheone_{i+1}$,
    we conclude the Reduction~\eqref{eq:sss:simulation:1}. 

    In total, we thus obtain a reduction
    $\conf{\hcacheone_0}{\heapone_0}{\conone(\seq[k]{\expone})} \rsl{m} \conf{\hcacheone_k}{\heapone_k}{\conone(\seq[k]{\refone})}$
    where $m = \sum_{i=1}^k m_i$
    and $\conf{\hcacheone_k}{\heapone_k}{\conone(\seq[k]{\refone})}$ is a well-formed, 
    in fact proper, configuration which unfolds to $\ctpair{\cacheone_k}{\conone(\seq[k]{\valone})}$.
    Employing $\rsl{m} \cdot {\rmerges} = {\rsl{m}}$ 
    we conclude the case with Lemma~\ref{l:sss:merge}.

  \item Suppose that the last rule in $\Pi$ has the form
    \[
    \infer
    {\ctpair{\cacheone_0}{\symone(\seq[k]{\termone})} \reducem[m] \ctpair{\cacheone_{k+1}}{\valone}}
    { \ctpair{\cacheone_{i-1}}{\termone_i} \reducem[m_i] \ctpair{\cacheone_i}{\valone_i} 
      & \ctpair{\cacheone_k}{\symone(\seq[k]{\valone})} \reducem[n] \ctpair{\cacheone_{k+1}}{\valone} 
      & m = n + \sum_{i=1}^k m_i
    }
    \]
    Fix a proper configuration $\conf{\hcacheone_0}{\heapone_0}{\expone_0}$ unfolding to 
    $\ctpair{\cacheone_0}{\symone(\seq[k]{\termone})}$.
    By induction on $k$, exactly as in the previous case, 
    we obtain a proper configuration $\conf{\hcacheone_k}{\heapone_k}{\symone(\seq[k]{\refone})}$
    unfolding to $\ctpair{\cacheone_k}{\symone(\seq[k]{\valone})}$
    with 
    \[
    \conf{\hcacheone_0}{\heapone_0}{\expone_0} 
    \rsl{\sum_{i=1}^k m_i}
    \conf{\hcacheone_k}{\heapone_k}{\symone(\seq[k]{\refone})}
    \tpkt
    \]
    The induction hypothesis also yields configuration
    $\conf{\hcacheone_{k+1}}{\heapone_{k+1}}{\refone}$ unfolding to 
    $\ctpair{\cacheone_{k+1}}{\valone}$ with
    $\conf{\hcacheone_k}{\heapone_k}{\symone(\seq[k]{\refone})} \rsl{n} \conf{\hcacheone_{k+1}}{\heapone_{k+1}}{\refone}$. 
    Summing up we conclude the case.
  \item Suppose that the last rule in $\Pi$ has the form
  \[
  \infer
    {\ctpair{\cacheone}{\symone(\seq[k]{\valone})} \reducem[0] \ctpair{\cacheone}{\valone}}
    {(\symone(\seq[k]{\valone}),\valtwo) \in \cacheone}
  \]
  Consider a proper configuration $\conf{\hcacheone}{\heapone}{\expone}$ 
  that unfolds to $\ctpair{\cacheone}{\symone(\seq[k]{\valone})}$. 
  Then $\expone = \symone(\seq[k]{\expone})$, 
  and using  $k$ applications of Lemma~\ref{l:sss:merge}
  we construct a reduction
  \[
  \conf{\hcacheone}{\heapone}{\symone(\seq[k]{\expone})} 
  \rmerges \conf{\hcacheone}{\heapone_1}{\symone(\refone_1,\seq[2][k]{\expone})}
  \rmerges \cdots
  \rmerges \conf{\hcacheone}{\heapone_k}{\symone(\seq[k]{\refone})}
  \tkom
  \]
  with $\conf{\hcacheone}{\heapone_k}{\symone(\seq[k]{\refone})}$
  unfolding to $\ctpair{\cacheone}{\symone(\seq[k]{\valone})}$.
  Lemma~\ref{l:heap:ms} and the assumption on $\cacheone = \trepr[\heapone_k]{\hcacheone}$ implies that there 
  exists a \emph{unique} pair $(\symone(\seq[k]{\refone}),\refone) \in \hcacheone$ 
  with $\trepr[\heapone_k]{\symone(\seq[k]{\refone})} = \symone(\seq[k]{\valone})$ and $\trepr[\heapone_k]{\refone} = \valone$. 
  Thus overall
  \[
  \conf{\hcacheone}{\heapone}{\expone} 
  = \conf{\hcacheone}{\heapone}{\symone(\seq[k]{\expone})}
  \rmerges \conf{\hcacheone}{\heapone_k}{\symone(\seq[k]{\refone})} 
  \rread \conf{\hcacheone}{\heapone_k}{\refone}
  \tkom
  \]
  where $\conf{\hcacheone}{\heapone_k}{\refone}$ unfolds to $\ctpair{\cacheone}{\valone}$.
  Using ${\rmerges \cdot \rread} \subseteq \rsl{0}$ we conclude the case. 
  \item 
    Finally, suppose that the last rule in $\Pi$ has the form
    \[
    \infer
    {\ctpair{\cacheone}{\symone(\seq[k]{\valone})} \reducem[m+1] \ctpair{\cachetwo \cup \{(\symone(\seq[k]{\valone}),\valone)\}}{\valone}}
    { (\symone(\seq[k]{\valone}),\valone) \not\in \cacheone 
      & \symone(\seq[k]{l}) \to r \in \TRSone 
      & \forall i.\ l_i\sigma = \valone_i
      & \ctpair{\cacheone}{r\sigma} \reducem[m] \ctpair{\cachetwo}{\valone}
    }
    \]
    Fix a proper configuration $\conf{\hcacheone}{\heapone}{\expone}$ 
    that unfolds to $\ctpair{\cacheone}{\symone(\seq[k]{\valone})}$, 
    in particular $\expone = \symone(\seq[k]{\expone})$. 
    As above, we see 
    $\conf{\hcacheone}{\heapone}{\expone} 
    \rmerges \conf{\hcacheone}{\heapone_k}{\symone(\seq[k]{\refone})}$ 
    for a configuration $\conf{\hcacheone}{\heapone_k}{\symone(\seq[k]{\refone})}$
    also unfolding to $\ctpair{\cacheone}{\symone(\seq[k]{\valone})}$. 
    As $(\symone(\seq[k]{\valone}),\valone) \not\in \cacheone$, we have 
    $(\symone(\seq[k]{\refone}),\refone) \not\in \hcacheone_k$ for any location $\refone$, 
    by Lemma~\ref{l:heap:ms}. Since Proposition~\ref{p:tg:match} on the assumption 
    yields
    $\mmatch{\morphone}{\tree{\symone(\seq[k]{l})}}{\symone(\subgraphAt{\heapone}{\refone_1},\dots,\subgraphAt{\heapone}{\refone_k})}$
    for a matching morphism $\morphone$, in total we obtain
    \[
    \conf{\hcacheone}{\heapone}{\expone} 
    \rmerges 
    \conf{\hcacheone}{\heapone}{\symone(\seq[k]{\refone})}
    \Rrew \conf{\hcacheone}{\heapone_k}{\embcache{\symone}{\seq[k]{\refone}}{r\indsubst{\morphone}}}
    \tpkt
    \]
    Note that by Proposition~\ref{p:tg:match} the substitution $\sigma$ and 
    induced substitution $\indsubst{\morphone}$ satisfy
    $\sigma(\varone) = \trepr[\heapone_k]{\indsubst{\morphone}(\varone)}$ for all variables $\varone$ in $\termone$.
    Hence by a standard induction on $r$, 
    $\trepr[\heapone_k]{r\indsubst{\morphone}} = r\sigma$ follows. 
    We conclude that $\conf{\hcacheone}{\heapone_k}{\embcache{\symone}{\seq[k]{\refone}}{r\indsubst{\morphone}}}$ 
    unfolds to $\ctpair{\cacheone}{r\sigma}$. 
    Thus the induction hypothesis yields a well-formed configuration 
    $\conf{\hcacheone'}{\heaptwo}{\refone}$ 
    unfolding to $\ctpair{\cacheone'}{\valone}$ with
    $\conf{\hcacheone}{\heapone_k}{r\indsubst{\morphone}}
    \rsl{m} \conf{\hcacheone'}{\heaptwo}{\refone}$.
    Thus
    \begin{alignat*}{3}
      \conf{\hcacheone}{\heapone_k}{\embcache{\symone}{\seq[k]{\refone}}{r\indsubst{\morphone}}}
      & \rsl{m} &&\conf{\hcacheone'}{\heaptwo}{\embcache{\symone}{\seq[k]{\refone}}{\refone}} \\
      & \rstore &&\conf{\hcacheone' \cup \{ (\symone(\seq[k]{\refone}),\refone) \}}{\heaptwo}{\refone}
      \tpkt
    \end{alignat*}
    Using that $\trepr[\heaptwo]{\refone} = \valone$
    and 
    $\trepr[\heapone_k]{\symone(\seq[k]{\refone})} = \symone(\seq[k]{\valone})$, 
    Lemma~\ref{l:sss:heap} yields
    \[
    \trepr[\heaptwo]{\hcacheone' \cup \{ (\symone(\seq[k]{\refone}),\refone) \}} = \cacheone' \cup \{ (\symone(\seq[k]{\valone}),\valone) \}
    \tpkt
    \]
    Putting things together, 
    employing ${\rmerges} \cdot {\Rrew} \subseteq {\rsl{1}}$ and ${\rsl{m}} \cdot {\rstore} = {\rsl{m}}$ 
    we conclude $\conf{\hcacheone}{\heapone}{\expone} \rsl{m+1} \conf{\hcacheone' \cup \{ (\symone(\seq[k]{\refone}),\refone) \}}{\heaptwo}{\refone}$,
    where the resulting configuration unfolds to $\ctpair{\cacheone' \cup \{ (\symone(\seq[k]{\valone}),\valone) \}}{\valone}$. 
    We conclude this final case. 
  \end{enumerate}
\end{proof}}

The next lemma shows that the established simulation is \emph{unique}, that is, there is exactly one derivation 
$\conf{\cacheempty}{\heapone}{\expone} \rRrsm[m] \conf{\hcacheone}{\heaptwo}{\refone}$. 
Here, a relation $\rew$ is called \emph{deterministic on a set $A$} if $b_1 \wer a \rew b_2$ implies $b_1 = b_2$ for all $a \in A$. 
\begin{lemma}[Determinism]\label{l:sss:deterministic}
\shortlongv{%
  The relation $\Rrsm$ is deterministic on all configurations reachable from initial configurations. 
}
{%
  \begin{enumerateenv}
    \item\label{l:sss:deterministic:seperate} The relations $\Rrew$, $\rread$, $\rstore$ and $\rmerge$ are deterministic on well-formed configurations. 
    \item\label{l:sss:deterministic:combined} The relation $\Rrsm$ is deterministic on well-formed configurations. 
  \end{enumerateenv}
}
\end{lemma}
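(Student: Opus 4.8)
The plan is to show that the evaluation-context grammar pins down, for every expression, a \emph{unique} decomposition $\evalctxone[\expone']$ into a context and a redex, and then that this redex together with well-formedness determines the successor configuration completely. First I would prove the decomposition lemma by structural induction on the expression $\expone$ of a well-formed configuration. A lone location carries no redex; for $\symone(\seq[k]{\expone})$ or $\conone(\seq[k]{\expone})$ one descends into the leftmost argument that is not already a location, and if every argument is a location the whole expression is the redex (of shape $\symone(\seq[k]{\refone})$, resp.\ $\conone(\seq[k]{\refone})$); for $\embcache{\symone}{\seq[k]{\refone}}{\expone'}$ one descends into $\expone'$ unless $\expone'$ is a location, in which case the expression is itself a \rlname{store}-redex. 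At each step the grammar forces the choice, since everything strictly left of the hole must be a location; hence the position and shape of the redex are invariants of $\expone$: if $\evalctxone_1[\expone'_1] = \evalctxone_2[\expone'_2]$ with both $\expone'_i$ of redex shape, then $\evalctxone_1 = \evalctxone_2$ and $\expone'_1 = \expone'_2$. Since $\DS$ and $\CS$ are disjoint, the head symbol of the redex also fixes which family of rules may fire.

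With the decomposition in hand I would argue, by a case analysis on the shape of the redex $\expone'$, that at most one rule applies and that its result is forced. If $\expone' = \conone(\seq[k]{\refone})$ only \rlname{merge} is applicable, and since $\heapone$ is maximally shared the operation $\hmerge{\heapone}{\conone(\seq[k]{\refone})}$ is unambiguous, so the successor is unique. If $\expone' = \embcache{\symone}{\seq[k]{\refone}}{\refone}$ only \rlname{store} applies, with the fixed effect of inserting $(\symone(\seq[k]{\refone}),\refone)$ into the cache and contracting the redex to $\refone$. If $\expone' = \symone(\seq[k]{\refone})$ the candidate rules are \rlname{apply} and \rlname{read}, whose side conditions (that $(\symone(\seq[k]{\refone}),\refone)\notin\hcacheone$ for every $\refone$, versus that some such pair lies in $\hcacheone$) are mutually exclusive, so at most one fires; for \rlname{read}, well-formedness makes $\hcacheone$ a function, so the stored location is unique, while for \rlname{apply}, non-ambiguity of $\progone$ leaves at most one matching rule $\symone(\seq[k]{p}) \to r$, and Proposition~\ref{p:tg:match} together with left-linearity determines the matching morphism $\morphone$, hence the contractum $r\indsubst{\morphone}$.

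The two assertions then follow. The first holds because each relation in isolation can only fire at the unique redex position, so its result is forced; the second holds because the redex shape alone dictates which of $\Rrew$, $\rread$, $\rstore$, $\rmerge$ are even applicable, and the previous paragraph shows that at most one fires and yields a determined configuration. As initial configurations are well-formed and well-formedness is preserved along $\Rrsm$ by Lemma~\ref{l:sss:wf}, determinism on all reachable configurations follows. I expect the only genuinely delicate point to be the \rlname{apply} case, namely the uniqueness of the matching morphism $\morphone$ and therefore of $r\indsubst{\morphone}$: this rests on the source being the canonical tree of a \emph{linear} pattern, so that the homomorphism is forced to be morphic on all operation and constructor nodes, propagating uniquely from the root downwards, while its values on the variable leaves are in turn pinned down by their non-variable (hence morphic) parents. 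Everything else reduces to unfolding the definitions and invoking well-formedness (maximal sharing for \rlname{merge}, functionality of the cache for \rlname{read}).
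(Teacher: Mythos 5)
Your proposal is correct and takes essentially the same approach as the paper: both proofs rest on the uniqueness of the evaluation-context/redex decomposition (which the paper states as an observation and you prove explicitly by induction), the mutually exclusive side conditions of \rlname{apply} and \rlname{read}, functionality of the cache, non-ambiguity of the rules combined with Proposition~\ref{p:tg:match} and left-linearity, and unambiguity of the merge operation on maximally shared heaps. The only difference is organizational — you prove one decomposition lemma and then a single case analysis on redex shapes, while the paper proves per-relation determinism first and then resolves cross-relation peaks by induction on the expression — but the mathematical content is the same.
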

\longv{
\begin{proof}
  For Assertion~\ref{l:sss:deterministic:seperate}, fix ${\rew[r]} \in \{{\Rrew},{\rread},{\rstore},{\rmerge}\}$. 
  Let $\conf{\hcacheone}{\heapone}{\expone}$ be a well-formed configuration. 
  We show that any \emph{peak} 
  $\conf{\hcacheone_1}{\heapone_1}{\expone_1} \wer[r] \conf{\hcacheone}{\heapone}{\expone} \rew[r] \conf{\hcacheone_2}{\heapone_2}{\expone_2}$
  is \emph{trivial}, i.e. $\conf{\hcacheone_1}{\heapone_1}{\expone_1} = \conf{\hcacheone_2}{\heapone_2}{\expone_2}$.
  Observe that independent on $\rew[r]$, the evaluation context $\evalctxone$ 
  in the corresponding rule is unique.
  From this, we conclude the assertion by case analysis on $\rew[r]$. 
  The non-trivial cases are ${\rew[r]} = {\Rrew}$ and ${\rew[r]} = {\rread}$. 
  In the former case, we conclude using that rules in $\TRSone$ are non-overlapping,
  tacitly employing Proposition~\ref{p:tg:match}. 
  The latter case we conclude using that $\hcacheone$ is well-formed. 
  
  Finally, for Assertion~\ref{l:sss:deterministic:combined} consider a peak 
  $\conf{\hcacheone_1}{\heapone_1}{\expone_1} \wer[r_1] \conf{\hcacheone}{\heapone}{\expone} \rew[r_2] \conf{\hcacheone_2}{\heapone_2}{\expone_2}$
  ${\rew[r_1]},{\rew[r_2]} \in \{{\Rrew},{\rread},{\rstore},{\rmerge}\}$. 
  We show that this peak is trivial by induction on the expression $\expone$. 
  By the previous assertion, it suffices to consider only the case ${\rew[r_1]} \not= {\rew[r_2]}$. 

  The base case constitutes of the cases 
  (i)~$\expone = \symone(\seq[k]{\refone})$, 
  (ii)~$\expone = \embcache{\symone}{\seq[k]{\refone}}{\refone}$ and
  (iii)~$\expone = \conone(\seq[k]{\refone})$. 
  The only potential peak can occurs in case (i) between relations $\Rrew$ and $\rread$.
  Here, a non-trivial peak is prohibited by the pre-conditions put on $\heapone$. 
  For the inductive step, we consider a peak 
  \[
  \conf{\hcacheone_1}{\heapone_1}{\evalctxone[\expone'_1]} 
  \wer[r_1] \conf{\hcacheone}{\heapone}{\evalctxone[\expone']} 
  \rew[r_2] \conf{\hcacheone_2}{\heapone_2}{\evalctxone[\expone'_2]}
  \tkom
  \]
  where $\expone = \evalctxone[\expone']$ for a 
  context $\evalctxone$.
  As we thus have a peak
  $\conf{\hcacheone_1}{\heapone_1}{\expone'_1} \wer[r_1] \conf{\hcacheone}{\heapone}{\expone'} \rew[r_2] \conf{\hcacheone_2}{\heapone_2}{\expone'_2}$, 
  which by induction hypothesis is trivial, we conclude the assertion.
\end{proof}}

\begin{theorem}\label{t:sss:simulation}
  Suppose $\ctpair{\cacheempty}{\symone(\seq[k]{\valone})} \reducem[m]
  \ctpair{\cacheone}{\valone}$ holds for a reducible term
  $\symone(\seq[k]{\valone})$.  Then for each initial configuration
  $\conf{\cacheempty}{\heapone}{\expone}$ with $\trepr[\heapone]{\expone} = \symone(\seq[k]{\valone})$,
  there exists a unique sequence
  $\conf{\cacheempty}{\heapone}{\expone} \rRrsm[m] \conf{\hcacheone}{\heaptwo}{\refone}$ 
  for a location $\refone$ in $\heaptwo$ with $\trepr[\heaptwo]{\refone} = \valone$.
\end{theorem}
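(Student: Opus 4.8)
The plan is to obtain the theorem as the conjunction of the two results already at hand: the Simulation Lemma (Lemma~\ref{l:sss:simulation}) supplies the \emph{existence} of the reduction sequence, while the Determinism Lemma (Lemma~\ref{l:sss:deterministic}) upgrades this to \emph{uniqueness}. No new induction is required; the entire argument consists in checking that the hypotheses of these two lemmas are in force for the configuration at hand.

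First I would settle existence. Since $\trepr[\heapone]{\expone} = \symone(\seq[k]{\valone})$ by assumption, the given big-step derivation $\ctpair{\cacheempty}{\symone(\seq[k]{\valone})} \reducem[m] \ctpair{\cacheone}{\valone}$ is verbatim the derivation $\ctpair{\cacheempty}{\trepr[\heapone]{\expone}} \reducem[m] \ctpair{\cacheone}{\valone}$ demanded as the antecedent of Lemma~\ref{l:sss:simulation} for the initial configuration $\conf{\cacheempty}{\heapone}{\expone}$. The one side condition to discharge is $m \geq 1$: as the starting cache is empty and $\symone \in \DS$, the outermost call $\symone(\seq[k]{\valone})$ cannot be resolved by \rlname{Read}; its arguments are already values, hence evaluated at cost zero and without altering the cache, so the derivation must fire \rlname{Update} at the root, contributing at least one unit of cost. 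Lemma~\ref{l:sss:simulation} then yields the desired $\conf{\cacheempty}{\heapone}{\expone} \rRrsm[m] \conf{\hcacheone}{\heaptwo}{\refone}$ with $\trepr[\heaptwo]{\refone} = \valone$.

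Next I would argue uniqueness. The terminal expression $\refone$ is a bare location and thus a $\Rrsm$-normal form, because each of the rules \rlname{apply}, \rlname{read}, \rlname{store} and \rlname{merge} fires only when the focused subexpression is a function application (for \rlname{apply} and \rlname{read}), an $\embcache{\symone}{\seq[k]{\refone}}{\refone}$-annotation (for \rlname{store}), or a constructor application (for \rlname{merge}), none of which a lone location matches. Every configuration along the reduction is reachable from $\conf{\cacheempty}{\heapone}{\expone}$ and, by Lemma~\ref{l:sss:wf}, remains well-formed, so the combined assertion of Lemma~\ref{l:sss:deterministic} applies at each step: the single-step relation $\Rrsm$ admits at most one successor everywhere on the path. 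Consequently the maximal $\Rrsm$-reduction issuing from $\conf{\cacheempty}{\heapone}{\expone}$ is uniquely determined, and since it terminates at the normal form $\refone$ it is the unique such finite sequence; in particular its number of $\Rrew$-steps is fixed and, by the existence part, equals $m$.

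The only delicate point — and what I would single out as the main, though modest, obstacle — is bridging the gap between the \emph{single-step} determinism furnished by Lemma~\ref{l:sss:deterministic} and the uniqueness of the \emph{composite} relation $\rRrsm[m]$, which packages each $\Rrew$-step together with the surrounding $\rsm$-steps according to the pattern ${\rsms} \cdot {\Rrew} \cdot {\rsms}$. Here I would observe that the interstitial $\rsm$-steps are themselves deterministic (being instances of $\Rrsm$) and that, once termination at a $\Rrsm$-normal form is imposed, the underlying $\Rrsm$-reduction is a single fixed chain; any admissible bracketing of that chain into the $\rRrsm[m]$-shape reaches the same endpoint $\conf{\hcacheone}{\heaptwo}{\refone}$, so the grouping introduces no ambiguity and the uniqueness of the plain $\Rrsm$-reduction transfers to $\rRrsm[m]$.
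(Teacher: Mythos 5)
Your proposal is correct and follows essentially the same route as the paper's own proof, which likewise observes $m \geq 1$ (there justified directly by reducibility of $\symone(\seq[k]{\valone})$, rather than via your empty-cache argument) and then invokes Lemma~\ref{l:sss:simulation} for existence and Lemma~\ref{l:sss:deterministic} for uniqueness. The additional details you supply---that a bare location is a $\Rrsm$-normal form and that single-step determinism transfers to the bracketed relation $\rRrsm[m]$---are exactly the points the paper leaves implicit, and your treatment of them is sound.
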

\begin{proof}
  As $\symone(\seq[k]{\valone})$ is reducible, it follows that $m
  \geq 1$.  Hence the theorem follows from
  Lemma~\ref{l:sss:simulation} and Lemma~\ref{l:sss:deterministic}.
\end{proof}

\paragraph*{Invariance}
Theorem~\ref{t:sss:simulation} tells us that a term-based semantics (in which
sharing is \emph{not} exploited) can be simulated step-by-step by another,
more sophisticated, graph-based semantics. The latter's advantage is that
each computation step does not require copying, and thus does not increase
the size of the underlying configuration too much. This is the key 
observation towards \emph{invariance}: the number of reduction
step is a sensible cost model from a complexity-theoretic perspective. Precisely this
will be proven in the remaining of the section.

Define the \emph{size} $\size{\expone}$ of an expression recursively by $\size{\refone} \defsym 1$, 
$\size{f(\seq[k]{\expone})} \defsym 1 + \sum_{i=1}^k \size{\expone_i}$ and 
$\size{\embcache{\symone}{\seq[k]{\refone}}{\expone}} \defsym 1 + \size{\expone}$. 
In correspondence we define the \emph{weight} $\wexp{\expone}$ by ignoring locations, 
i.e.\ $\wexp{\refone} \defsym 0$.
Recall that a reduction $\conf{\hcacheone_1}{\heapone_1}{\expone_1} \rRrsm[m] \conf{\hcacheone_2}{\heapone_2}{\expone_2}$
consists of $m$ applications of $\Rrew$, all interleaved by $\rsm$-reductions. 
As a first step, we thus estimate the overall length of the reduction $\conf{\hcacheone_1}{\heapone_1}{\expone_1} \rRrsm[m] \conf{\hcacheone_2}{\heapone_2}{\expone_2}$ 
in $m$ and the size of $\expone_1$. 
Set $\Delta \defsym \max\{ \size{r} \mid l \to r \in \TRSone \}$. 
The following serves as an intermediate lemma.
\begin{lemma}\label{l:weight}
  The following properties hold:
  \begin{enumerate}
  \item\label{l:weight:rsm} 
    If $\conf{\hcacheone_1}{\heapone_1}{\expone_1} \rsm \conf{\hcacheone_2}{\heapone_2}{\expone_2}$ 
    then $\wexp{\expone_2} < \wexp{\expone_1}$.
  \item\label{l:weight:R} 
    If $\conf{\hcacheone_1}{\heapone_1}{\expone_1} \Rrew \conf{\hcacheone_2}{\heapone_2}{\expone_2}$ 
    then $\wexp{\expone_2} \leq \wexp{\expone_1} + \Delta$.
  \end{enumerate}
\end{lemma}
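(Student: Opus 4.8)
The plan is to reduce both assertions to a local comparison between the contracted redex and its reduct, exploiting that the weight $\wexp{\cdot}$ is additive over evaluation contexts. First I would prove that for every context $\evalctxone$ there is a constant $c \in \N$, depending only on $\evalctxone$, with $\wexp{\evalctxone[\expone]} = c + \wexp{\expone}$ for all expressions $\expone$. This is a routine induction on the structure of $\evalctxone$: for $\evalctxone = \hole$ take $c = 0$; for $\evalctxone = \embcache{\symone}{\seq[k]{\refone}}{\evalctxone'}$, $\evalctxone = \symone(\seq[i-1]{\refone},\evalctxone',\seq[i+1][k]{\expone})$ and $\evalctxone = \conone(\seq[i-1]{\refone},\evalctxone',\seq[i+1][k]{\expone})$ one uses the inductive hypothesis together with $\wexp{\refone_j} = 0$ for the locations sitting to the left of the hole, so that all contributions other than $\wexp{\expone}$ are constant. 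In particular $\wexp{\evalctxone[\expone_a]} - \wexp{\evalctxone[\expone_b]} = \wexp{\expone_a} - \wexp{\expone_b}$, so it suffices to track the weight change on the redex alone.

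With this reduction, Assertion~\ref{l:weight:rsm} follows by inspecting the three rules making up $\rsm$. Each replaces a redex of weight exactly $1$ by a single location of weight $0$: the rule $\rread$ contracts $\symone(\seq[k]{\refone})$, of weight $1 + \sum_i \wexp{\refone_i} = 1$, to $\refone$; the rule $\rstore$ contracts $\embcache{\symone}{\seq[k]{\refone}}{\refone}$, of weight $1 + \wexp{\refone} = 1$, to $\refone$; and the rule $\rmerge$ contracts $\conone(\seq[k]{\refone})$, of weight $1$, to $\refone$. Hence the redex weight strictly decreases, and by the context identity $\wexp{\expone_2} = \wexp{\expone_1} - 1 < \wexp{\expone_1}$.

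For Assertion~\ref{l:weight:R}, the only applicable rule is apply, which contracts $\symone(\seq[k]{\refone})$, of weight $1$, to $\embcache{\symone}{\seq[k]{\refone}}{r\indsubst{\morphone}}$, of weight $1 + \wexp{r\indsubst{\morphone}}$. The context identity then gives $\wexp{\expone_2} = \wexp{\expone_1} + \wexp{r\indsubst{\morphone}}$, so it remains to show $\wexp{r\indsubst{\morphone}} \leq \Delta$. The induced substitution $\indsubst{\morphone}$ sends every variable of $r$ to a single location $\morphone(\refone_\varone) \in \Loc$, which has weight $0$; thus a short induction on $r$ yields $\wexp{r\indsubst{\morphone}} \leq \size{r}$, and $\size{r} \leq \Delta$ holds by definition of $\Delta = \max\{\size{r} \mid l \to r \in \TRSone\}$. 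Combining gives $\wexp{\expone_2} \leq \wexp{\expone_1} + \Delta$.

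The only genuinely delicate step is this last bound, and it is the one I would state most carefully, because it is precisely where the choice to ignore locations in $\wexp{\cdot}$ is used: since the arguments $\seq[k]{\refone}$ and every substituted variable are references, the weight gained in an apply step is controlled by the fixed right-hand side $r$ rather than by the --- possibly huge --- values stored on the heap, yielding the constant $\Delta$ that depends only on the program. Everything else is bookkeeping over the finitely many reduction rules and the structural induction on contexts.
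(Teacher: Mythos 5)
Your proposal is correct and follows essentially the same route as the paper's proof: a local comparison of redex and reduct lifted through the evaluation context, with the key bound $\wexp{r\indsubst{\morphone}} \leq \size{r\indsubst{\morphone}} = \size{r} \leq \Delta$ resting on the fact that the induced substitution maps variables to locations of weight zero. The only difference is presentational---you state the additivity of $\wexp{\cdot}$ over contexts as an explicit preliminary lemma, whereas the paper compresses this into ``a standard induction on $\evalctxone$.''
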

\longv{
\begin{proof}
  The first assertion follows by case analysis on $\rsm$. 
  For the second, suppose 
  $\conf{\hcacheone_1}{\heapone_1}{\expone_1} \rsm \conf{\hcacheone_2}{\heapone_2}{\expone_2}$
  where $\expone_1 = \evalctxone[\symone(\seq[k]{\refone})]$ and 
  $\expone_2 = \evalctxone[\embcache{\symone}{\seq[k]{\refone}}{r\indsubst{\morphone}}]$
  for a rule $\symone(\seq[k]{l}) \to r \in \TRSone$. 
  Observe that since the substitution $\indsubst{\morphone}$ replaces variables by locations, 
  $\Delta \geq \size{r} = \size{r\indsubst{\morphone}} \geq \wexp{r\indsubst{\morphone}}$ holds.
  Consequently,  
  \begin{align*}
  \wexp{\symone(\seq[k]{\refone})} + \Delta 
  & \geq 1 + \wexp{r\indsubst{\morphone}} 
  = \wexp{\embcache{\symone}{\seq[k]{\refone}}{r\indsubst{\morphone}}} \tpkt
  \end{align*}
  Then the assertion follows by a standard induction on $\evalctxone$. 
\end{proof}}

Then essentially an application of the \emph{weight gap principle}~\cite{HM:IJCAR:08}, 
a form of \emph{amortized} cost analysis, binds the overall length of an $\rRrsm[m]$-reduction suitably.
\begin{lemma}\label{l:sss:relapprox}
  If $\conf{\hcacheone_1}{\heapone_1}{\expone_1} \rRrsm[m] \conf{\hcacheone_2}{\heapone_2}{\expone_2}$
  then $\conf{\hcacheone_1}{\heapone_1}{\expone_1} \Rrsm[n] \conf{\hcacheone_2}{\heapone_2}{\expone_2}$
  for $n \leq (1 + \Delta) \cdot m + \wexp{\expone}$ 
  and $\Delta \in \N$ a constant depending only on $\progone$. 
\end{lemma}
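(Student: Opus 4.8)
The plan is to run an \emph{amortized}, weight-gap style argument in which the weight $\wexp{\cdot}$ plays the role of a potential function. First I would record the shape of the reduction: by the very definition of $\rRrsm$ (as observed right after it is introduced), a reduction $\conf{\hcacheone_1}{\heapone_1}{\expone_1} \rRrsm[m] \conf{\hcacheone_2}{\heapone_2}{\expone_2}$ is just an ordinary $\Rrsm$-reduction $\conf{\hcacheone_1}{\heapone_1}{\expone_1} \Rrsm[n] \conf{\hcacheone_2}{\heapone_2}{\expone_2}$ of some total length $n$ that contains exactly $m$ applications of $\Rrew$; the remaining $n-m$ steps are administrative $\rsm$-steps. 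Proving the claimed bound on $n$ therefore reduces to bounding the number $n-m$ of $\rsm$-steps by $\Delta\cdot m + \wexp{\expone_1}$.

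To do this I would write the reduction as a chain $C_0 \Rrsm C_1 \Rrsm \cdots \Rrsm C_n$, with $C_0 = \conf{\hcacheone_1}{\heapone_1}{\expone_1}$ and $C_n = \conf{\hcacheone_2}{\heapone_2}{\expone_2}$, and track the weights $w_i$ of the expression carried by $C_i$, so that $w_0 = \wexp{\expone_1}$. The two parts of Lemma~\ref{l:weight} supply exactly the step-wise estimates needed. By Lemma~\eref{l:weight}{rsm} every $\rsm$-step \emph{strictly} decreases the weight, and since $\wexp{\cdot}$ takes values in $\N$ this decrease is by at least one, so $w_i \le w_{i-1}-1$ at each such step. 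By Lemma~\eref{l:weight}{R} every $\Rrew$-step increases the weight by at most $\Delta$, so $w_i \le w_{i-1}+\Delta$ there.

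Summing these inequalities telescopically over the whole chain then gives
\[
 w_n - w_0 \;=\; \sum_{i=1}^{n}(w_i - w_{i-1}) \;\le\; \Delta\cdot m - (n-m),
\]
where the $m$ costly $\Rrew$-steps each contribute at most $\Delta$ and the $n-m$ cheap $\rsm$-steps each contribute at most $-1$. Rearranging yields $n-m \le \Delta\cdot m + w_0 - w_n$, and here is the single point that carries the whole estimate: because $\wexp{\cdot}\ge 0$ we may discard the $-w_n$ term, obtaining $n-m \le \Delta\cdot m + \wexp{\expone_1}$. Adding back the $m$ counted $\Rrew$-steps gives $n \le (1+\Delta)\cdot m + \wexp{\expone_1}$, as required, with $\Delta = \max\{\size{r}\mid l\to r\in\TRSone\}$ depending only on $\progone$.

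Once Lemma~\ref{l:weight} is available the computation is routine, so the only genuinely load-bearing observation — and the place where a naive counting of $\rsm$-steps would fail — is the non-negativity of the weight: it is precisely what allows the bounded increases incurred by the $m$ expensive $\Rrew$-steps to \emph{pay for} all the intervening $\rsm$-steps and thereby cap their number. The secondary, but essential, ingredient is that each $\rsm$-step decrements a natural number and so decreases the weight by a full unit rather than merely non-strictly; this is exactly the strict inequality of Lemma~\eref{l:weight}{rsm}, and it is what turns the potential bound into a bound on the \emph{count} of steps.
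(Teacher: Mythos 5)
Your proof is correct and takes essentially the same route as the paper's: both are amortized weight-gap arguments built directly on Lemma~\ref{l:weight}, where each $\rsm$-step must pay for itself by strictly decreasing the natural-number weight, each of the $m$ applications of $\Rrew$ adds at most $\Delta$, and non-negativity of the weight closes the telescoping estimate to give $n \leq (1+\Delta)\cdot m + \wexp{\expone_1}$. The only difference is organizational—the paper bounds each maximal block of $\rsm$-steps by its total weight drop and telescopes over blocks, whereas you telescope over individual steps—which is immaterial.
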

\longv{
\begin{proof}
  For a configuration $\confone = \conf{\hcacheone}{\heapone}{\expone}$
  define $\wexp{\confone} \defsym \wexp{\expone}$ 
  and let $\Delta$ be defined as in Lemma~\ref{l:weight}. 
  Consider $\conf{\hcacheone_1}{\heapone_1}{\expone_1} \rRrsm[m] \conf{\hcacheone_2}{\heapone_2}{\expone_2}$
  which can be written as a reduction
  \begin{equation}
    \label{eq:invariance}
    \tag{\ddag}
    \conf{\hcacheone_1}{\heapone_1}{\expone_1} = \confone_0
    \rsm[n_0] \conftwo_0
    \Rrew \confone_1
    \rsm[n_1] \conftwo_1
    \Rrew \cdots 
    \rsm[n_m] \conftwo_m
    \tkom
  \end{equation}
  of length $n \defsym m + \sum_{i=0}^k n_i$. 
  Lemma~\ref{l:weight} yields
  (i)~$n_i \leq \wexp{\confone_i} - \wexp{\conftwo_i}$ for all $0 \leq i \leq m$; and 
  (ii)~$\wexp{\confone_{i+1}} - \wexp{\conftwo_i} \leq \Delta$ for all $0 \leq i < m$.
  Hence overall, the Reduction~\eqref{eq:invariance} is of length
  \begin{align*}
      n
      & \leq m + (\wexp{\confone_0} - \wexp{\conftwo_0}) + \cdots + (\wexp{\confone_m} - \wexp{\conftwo_m})\\
      & = m + \wexp{\confone_0} + (\wexp{\confone_1} - \wexp{\conftwo_0}) + \cdots + (\wexp{\confone_m} - \wexp{\conftwo_{m-1}}) - \wexp{\conftwo_m} \\
      & \leq m + \wexp{\confone_0} + m \cdot \Delta\\
      & = (1 + \Delta) \cdot m + \wexp{\expone} \tpkt
  \end{align*}
  The lemma follows.
\end{proof}}

Define the size of a configuration $\size{\conf{\hcacheone}{\heapone}{\expone}}$ as the sum of the sizes of its components. 
Here, the size $\size{\hcacheone}$ of a cache $\hcacheone$ is defined as its cardinality, 
similar, the size $\size{\heapone}$ of a heap is defined as the cardinality of its set of nodes. 
Notice that a configuration $\conf{\hcacheone}{\heapone}{\expone}$
can be straight forward encoded within logarithmic space-overhead as a string $\enc{\conf{\hcacheone}{\heapone}{\expone}}$, 
i.e.\ the length of the string $\enc{\conf{\hcacheone}{\heapone}{\expone}}$ is bounded by a function in 
$\bigO(\log(n) \cdot n)$ in $\size{\conf{\hcacheone}{\heapone}{\expone}}$, 
using constants to encode symbols and an encoding of locations logarithmic in $\size{\heapone}$. 
Crucially, a step in the small-step semantics increases the size of a configuration only by a constant. 

\begin{lemma}\label{l:size}
  If $\conf{\hcacheone_1}{\heapone_1}{\expone_1} \Rrsm \conf{\hcacheone_2}{\heapone_2}{\expone_2}$ 
  then $\size{\conf{\hcacheone_2}{\heapone_2}{\expone_2}} \leq \size{\conf{\hcacheone_1}{\heapone_1}{\expone_1}} + \Delta$.
\end{lemma}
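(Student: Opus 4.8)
**The plan is to proceed by a case analysis on the reduction relation $\Rrsm = {\Rrew} \cup {\rread} \cup {\rstore} \cup {\rmerge}$.** Since a single $\Rrsm$-step is by definition one application of one of the four rules in Figure~\ref{fig:sss:memo}, it suffices to bound the increase in $\size{\conf{\hcacheone}{\heapone}{\expone}} = \size{\hcacheone} + \size{\heapone} + \size{\expone}$ for each rule separately, and take $\Delta$ to be the maximum of the per-rule bounds (noting $\Delta = \max\{\size{r} \mid l \to r \in \TRSone\}$ is already fixed and will dominate).

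\emph{First} I would treat the two cache- and heap-neutral shrinking rules. For $\rread$, the cache $\hcacheone$ and heap $\heapone$ are unchanged, and in the expression the redex $\evalctxone[\symone(\seq[k]{\refone})]$ is replaced by $\evalctxone[\refone]$, which strictly decreases $\size{\expone}$; so the total size does not increase and the bound holds trivially. For $\rstore$, the heap is unchanged and the expression again shrinks (the wrapper $\embcache{\symone}{\seq[k]{\refone}}{\refone}$ collapses to $\refone$), while the cache $\hcacheone$ grows by exactly one pair, so $\size{\conf{\cdot}{\cdot}{\cdot}}$ increases by at most $1 \leq \Delta$. \emph{Next}, for $\rmerge$ the cache is unchanged, the expression shrinks by replacing $\evalctxone[\conone(\seq[k]{\refone})]$ with $\evalctxone[\refone]$, and the heap grows by at most one node (either the existing node is reused, adding nothing, or one fresh location is allocated by $\MERGE$); hence the total size again increases by at most $1 \leq \Delta$.

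\emph{The only interesting case is} $\Rrew$. Here the cache and heap are both left untouched, so the entire size change is confined to the expression: the redex $\evalctxone[\symone(\seq[k]{\refone})]$ becomes $\evalctxone[\embcache{\symone}{\seq[k]{\refone}}{r\indsubst{\morphone}}]$. Removing the call $\symone(\seq[k]{\refone})$ (size $k+1$) and inserting $\embcache{\symone}{\seq[k]{\refone}}{r\indsubst{\morphone}}$, whose size is $1 + \size{r\indsubst{\morphone}}$, increases $\size{\expone}$ net by at most $\size{r\indsubst{\morphone}}$. The key observation, already used in the proof of Lemma~\ref{l:weight}, is that $\indsubst{\morphone}$ replaces each variable by a single location, so $\size{r\indsubst{\morphone}} = \size{r} \leq \Delta$. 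Therefore $\size{\conf{\hcacheone_2}{\heapone_2}{\expone_2}} \leq \size{\conf{\hcacheone_1}{\heapone_1}{\expone_1}} + \Delta$ in this case as well, completing the analysis.

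I do not anticipate a genuine obstacle here; the statement is an easy per-rule accounting lemma. The one point requiring a little care is the $\Rrew$ case, where one must invoke the fact that the induced substitution $\indsubst{\morphone}$ instantiates variables by locations (each of size $1$) rather than by arbitrary values, so that no copying of subterms inflates $\size{r\indsubst{\morphone}}$ beyond $\size{r}$. This is precisely the structural reason graph-based evaluation avoids blow-up, and it is exactly the invariant that makes the constant bound $\Delta$ independent of the configuration size.
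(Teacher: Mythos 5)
Your proof is correct and follows exactly the paper's approach: the paper also proves Lemma~\ref{l:size} by case analysis on which of the four rules is applied, using $1 \leq \Delta$ (your per-rule accounting, including the key observation that $\indsubst{\morphone}$ substitutes locations so that $\size{r\indsubst{\morphone}} = \size{r} \leq \Delta$, is just a fleshed-out version of the same argument).
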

\longv{
\begin{proof}
  The lemma follows by case analysis on the rule applied in
  $\conf{\hcacheone_1}{\heapone_1}{\expone_1} \Rrsm \conf{\hcacheone_2}{\heapone_2}{\expone_2}$, 
  using $1 \leq \Delta$. 
\end{proof}}

\begin{theorem}\label{t:sss:invariance}
  There exists a polynomial $\ofdom{p}{\N \times \N \rightarrow \N}$ such that for 
  every initial configuration $\conf{\cacheempty}{\heapone_1}{\expone_1}$, 
  a configuration $\conf{\hcacheone_2}{\heapone_2}{\expone_2}$ with 
  $\conf{\cacheempty}{\heapone_1}{\expone_1} \rRrsm[m] \conf{\hcacheone_2}{\heapone_2}{\expone_2}$
  is computable from $\conf{\cacheempty}{\heapone_1}{\expone_1}$ in time $\mathop{p}(\size{\heapone_1} + \size{\expone_1},m)$. 
\end{theorem}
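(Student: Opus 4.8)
The plan is to describe a concrete implementation of the small-step semantics on a random access (or Turing) machine that carries out the reduction $\conf{\cacheempty}{\heapone_1}{\expone_1} \rRrsm[m] \conf{\hcacheone_2}{\heapone_2}{\expone_2}$ step by step, and to bound its running time by controlling two quantities separately: the total number of elementary reduction steps performed, and the size of the configurations encountered along the way. Since by Lemma~\ref{l:sss:deterministic} the relation $\Rrsm$ is deterministic on all configurations reachable from an initial one, there is a \emph{unique} such reduction; the implementation therefore only has to repeatedly locate and apply the unique applicable rule, stopping once $m$ applications of $\Rrew$ have been performed.

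First I would bound the total number of elementary steps. By Lemma~\ref{l:sss:relapprox}, the reduction $\conf{\cacheempty}{\heapone_1}{\expone_1} \rRrsm[m] \conf{\hcacheone_2}{\heapone_2}{\expone_2}$, read as a plain $\Rrsm$-reduction, has length $n \leq (1+\Delta)\cdot m + \wexp{\expone_1}$, where $\Delta$ depends only on the fixed program $\progone$. Since $\wexp{\expone_1} \leq \size{\expone_1}$, the number $n$ of elementary steps is linear in $m$ and in $\size{\expone_1}$. Next, by Lemma~\ref{l:size} each step increases the configuration size by at most $\Delta$, so every configuration reached from $\conf{\cacheempty}{\heapone_1}{\expone_1}$ has size at most $\size{\heapone_1} + \size{\expone_1} + n\cdot\Delta$, which is again polynomial (in fact linear) in $\size{\heapone_1}+\size{\expone_1}$ and $m$. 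Combined with the logarithmic-overhead encoding $\enc{\cdot}$, the encoding length of every intermediate configuration stays polynomially bounded in $\size{\heapone_1}+\size{\expone_1}$ and $m$.

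The remaining work is to show that a single $\Rrsm$-step is computable in time polynomial in the size of the (encoded) current configuration, for which I would inspect the four rules of Figure~\ref{fig:sss:memo} in turn. Locating the redex amounts to descending the leftmost spine of $\expone$ to the hole of the unique evaluation context, costing time linear in $\size{\expone}$. For $\rread$ and $\rstore$ one searches, respectively extends, the cache $\hcacheone$; as the cache is a set of tuples of locations, this is polynomial in $\size{\hcacheone}$ together with the logarithmic cost of comparing locations. For $\rmerge$, maintaining maximal sharing requires testing whether $\heapone$ already stores a node $\conone(\seq[k]{\refone})$ before allocating a fresh one, which is a scan over the nodes of $\heapone$ and hence polynomial in $\size{\heapone}$. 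For $\Rrew$, since $\progone$ is fixed and finite, one tries each rule $\symone(\seq[k]{p}) \to r$ and tests $\mmatch{\morphone}{\tree{\symone(\seq[k]{p})}}{\symone(\subgraphAt{\heapone}{\refone_1},\dots,\subgraphAt{\heapone}{\refone_k})}$ via Proposition~\ref{p:tg:match}; as the patterns have bounded depth, computing $\morphone$ and instantiating $r\indsubst{\morphone}$ is polynomial in the size of the addressed arguments. In every case the per-step cost is bounded by a fixed polynomial $q$, depending only on $\progone$, in the configuration size.

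Finally I would combine the estimates: the implementation performs $n = \bigO(m + \size{\expone_1})$ elementary steps, each costing at most $q$ evaluated at a configuration of size $\bigO(\size{\heapone_1} + \size{\expone_1} + m)$, so the total running time is bounded by $n \cdot q(\size{\heapone_1} + \size{\expone_1} + n\cdot\Delta)$, a polynomial $p(\size{\heapone_1}+\size{\expone_1},\,m)$ depending only on $\progone$, as required. I expect the conceptually hard part to already be discharged by Lemmas~\ref{l:sss:relapprox} and~\ref{l:size}, which guarantee that the step count and the size growth are both linear; the residual obstacle is the careful per-step bookkeeping for $\rmerge$ and for matching in $\Rrew$, where a naive reading might suggest constant cost per node, whereas what is actually needed is a heap scan — harmless here precisely because the heap size itself remains polynomially bounded throughout.
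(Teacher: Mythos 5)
Your proposal is correct and takes essentially the same approach as the paper: iterate a polynomial-time implementation of a single $\Rrsm$-step, with the number of iterations bounded via Lemma~\ref{l:sss:relapprox} and the intermediate configuration sizes bounded via Lemma~\ref{l:size}. The only difference is that you spell out the per-step cost analysis that the paper dismisses as ``tedious, but not difficult,'' and explicitly invoke Lemma~\ref{l:sss:deterministic} to justify that blind iteration of the step function reaches the stated configuration.
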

\begin{proof}
  It is tedious, but not difficult to show that 
  the function which implements a step $\confone \Rrsm \conftwo$, 
  i.e.\ which maps $\enc{\confone}$ to $\enc{\conftwo}$,
  is computable in polynomial time in $\enc{\confone}$, and thus in the size $\size{\confone}$ of the configuration $\confone$. 
  Iterating this function at most 
  $n \defsym (1 + \Delta) \cdot m + \size{\conf{\cacheempty}{\heapone_1}{\expone_1}}$ 
  times on input $\enc{\conf{\cacheempty}{\heapone_1}{\expone_1}}$, yields 
  the desired result $\enc{\conf{\hcacheone_2}{\heapone_2}{\expone_2}}$ by Lemma~\ref{l:sss:relapprox}.
  Since each iteration increases the size of a configuration by at most the constant $\Delta$ (Lemma~\ref{l:size}), 
  in particular the size of each intermediate configuration is bounded 
  by a linear function in $\size{\conf{\cacheempty}{\heapone_1}{\expone_1}} = \size{\heapone_1} + \size{\expone_1}$ and $n$,
  the theorem follows. 
\end{proof}
Combining Theorem~\ref{t:sss:simulation} and Theorem~\ref{t:sss:invariance} we thus obtain the following.
\begin{theorem}\label{t:invariance}
  There exists a polynomial $\ofdom{p}{\N \times \N \rightarrow \N}$ such that for 
  $\ctpair{\cacheempty}{\symone(\seq[k]{\valone})} \reducem[m] \ctpair{\cacheone}{\valone}$,
  the value $\valone$ represented as DAG is computable from 
  $\seq[k]{\valone}$ in time $\mathop{p}(\sum_{i=1}^k\size{\valone_i},m)$.
\end{theorem}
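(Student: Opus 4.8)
The plan is to obtain Theorem~\ref{t:invariance} by chaining the two preceding results: Theorem~\ref{t:sss:simulation} tells us \emph{where} the output lives after a small-step run and that this run is unique, while Theorem~\ref{t:sss:invariance} tells us how \emph{fast} such a run can be carried out on a machine. The only genuinely new work is to bridge between the term-level statement of the hypothesis and the configuration-level statements of those two theorems, by encoding the inputs as an initial configuration.

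First I would fix an \emph{input encoding}. From the given values $\seq[k]{\valone} \in \TERMS[\CS]$ I build, by hash-consing (i.e.\ by repeated application of the $\MERGE$ operation), a maximally shared heap $\heapone_1$ together with locations $\seq[k]{\refone}$ such that $\trepr[\heapone_1]{\refone_i} = \valone_i$, and set $\expone_1 \defsym \symone(\seq[k]{\refone})$. Then $\conf{\cacheempty}{\heapone_1}{\expone_1}$ is an initial configuration with $\trepr[\heapone_1]{\expone_1} = \symone(\seq[k]{\valone})$. Because maximal sharing can only decrease size, both $\size{\heapone_1}$ and $\size{\expone_1}$ are bounded by $\sum_{i=1}^k \size{\valone_i}$, and constructing $\conf{\cacheempty}{\heapone_1}{\expone_1}$ is feasible in time polynomial in $\sum_{i=1}^k \size{\valone_i}$.

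Next I would apply Theorem~\ref{t:sss:simulation} to this initial configuration. Since $\symone(\seq[k]{\valone})$ is reducible we have $m \geq 1$, so the hypothesis $\ctpair{\cacheempty}{\symone(\seq[k]{\valone})} \reducem[m] \ctpair{\cacheone}{\valone}$ yields a \emph{unique} run $\conf{\cacheempty}{\heapone_1}{\expone_1} \rRrsm[m] \conf{\hcacheone_2}{\heaptwo}{\refone}$ with $\trepr[\heaptwo]{\refone} = \valone$; that is, the output value $\valone$ represented as a DAG is precisely the sub-graph $\subgraphAt{\heaptwo}{\refone}$ of the final heap. I would then apply Theorem~\ref{t:sss:invariance} to the \emph{same} initial configuration: it supplies a polynomial $q$ such that a configuration reachable by $\rRrsm[m]$ is computable from $\conf{\cacheempty}{\heapone_1}{\expone_1}$ in time $q(\size{\heapone_1} + \size{\expone_1},m)$; by the uniqueness from Theorem~\ref{t:sss:simulation} this computed configuration is exactly $\conf{\hcacheone_2}{\heaptwo}{\refone}$. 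Reading off $\subgraphAt{\heaptwo}{\refone}$ is linear in its size. Summing the encoding cost, the bound $q$, and the read-off, and substituting $\size{\heapone_1} + \size{\expone_1} \leq 2 \sum_{i=1}^k \size{\valone_i}$, produces a single polynomial $p$ with the claimed dependence $p(\sum_{i=1}^k \size{\valone_i},m)$, which defines the desired $p$.

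The main obstacle is the bookkeeping of the first step rather than any deep argument: one must check that the input encoding is cheap both in the \emph{size} of the resulting initial configuration, so that the polynomial of Theorem~\ref{t:sss:invariance} is fed an argument linear in $\sum_{i=1}^k \size{\valone_i}$, and in the \emph{time} to build it, so that this preprocessing does not dominate. Once this is settled, the conclusion is a direct composition of the correctness-and-uniqueness statement of Theorem~\ref{t:sss:simulation} with the time bound of Theorem~\ref{t:sss:invariance}.
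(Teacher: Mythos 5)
Your proposal is correct and takes essentially the same route as the paper, which obtains Theorem~\ref{t:invariance} precisely by combining Theorem~\ref{t:sss:simulation} with Theorem~\ref{t:sss:invariance}. The bookkeeping you spell out---building a maximally shared initial heap from the inputs, bounding its size, and invoking uniqueness so that the configuration computed via Theorem~\ref{t:sss:invariance} is the one produced by the simulation---is exactly what the paper leaves implicit in its one-line derivation.
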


Theorem~\ref{t:invariance} thus confirms that the cost $m$ of a
reduction $\ctpair{\cacheempty}{\symone(\seq[k]{\valone})} \reducem[m]
\ctpair{\cacheone}{\valone}$ is a suitable cost measure.  In other
words, the \emph{memoized runtime complexity} of a function $\symone$,
relating input size $n \in \N$ to the maximal cost $m$ of evaluation
$\symone$ on arguments $\seq[k]{\valone}$ of size up to $n$,
i.e.\ $\ctpair{\cacheempty}{\symone(\seq[k]{\valone})} \reducem[m]
\ctpair{\cacheone}{\valone}$ with $\sum_{i=1}^k\size{\valone_i} \leq
n$, is an \emph{invariant cost model}.

\begin{example}[Continued from Example~\ref{ex:program}]\label{ex:invariance}
  Reconsider the program $\prograbbits$ and the evaluation of a call
  $\srabbits(\csuc^n(\czero))$ which results in the genealogical tree
  $\valone_n$ of height $n \in \N$ associated with \emph{Fibonacci's
    rabbit problem}.  Then one can show that
  $\srabbits(\csuc^n(\czero)) \reducem[m] \valone_n$ with $m \leq
  2\cdot n + 1$.  Crucially here, the two intermediate functions
  $\sadults$ and $\sbabies$ defined by simultaneous recursion are
  called only on proper subterms of the input $\csuc^n(\czero)$, hence
  in particular the rules defining $\sadults$ and $\sbabies$
  respectively, are unfolded at most $n$ times.  As a consequence of
  the bound on $m$ and Theorem~\ref{t:invariance} we obtain that the
  function $\frabbits$ from the introduction is polytime computable.
\end{example}

\begin{remark}  
  Strictly speaking, our DAG representation of a value $\valone$, viz
  the part of the final heap reachable from a corresponding location
  $\refone$, is not an encoding in the classical, complexity theoretic
  setting.  Different computations resulting in the same value
  $\valone$ can produce different DAG representations of $\valone$,
  however, these representations differ only in the naming of
  locations.  Even though our encoding can be exponentially compact in
  comparison to a linear representation without sharing, it is not
  exponentially more \emph{succinct} than a reasonable encoding for
  graphs (e.g.\ representations as circuits, see \citet{Papa}).  In
  such succinct encodings not even equality can be decided in
  polynomial time.  Our form of representation does clearly not fall
  into this category.  In particular, in our setting it can be easily
  checked in polynomial time that two DAGs represent the same value.
 \end{remark}
%%% Local Variables: 
%%% mode: latex
%%% TeX-master: "paper"
%%% End: 

%%%%%%%%%%%%%%%%%%%%%%%%%%%%%%%%%%%%%%%%%%%%%%%%%%
\section{GRSR is Sound for Polynomial Time}
%%%%%%%%%%%%%%%%%%%%%%%%%%%%%%%%%%%%%%%%%%%%%%%%%%
Sometimes (e.g., in~\cite{BC:CC:92}), the first step towards a proof
of soundness for ramified recursive systems consists in giving a
proper bound precisely relating the size of the result and the size of
the inputs. More specifically, if the result has tier $n$, then the
size of it depends polynomially on the size of the inputs of tier
higher than $n$, but only \emph{linearly}, and in very restricted way,
on the size of inputs of tier $n$.  Here, a similar result holds, but
size is replaced by \emph{minimal shared size}. 

The \emph{minimal shared size} $\ssz{\seq[k]{\valone}}$ for a \emph{sequence} of 
elements $\seq[k]{\valone} \in \faone$ is defined as the number of
subterms in $\seq[k]{\valone}$, i.e.\ the cardinality of the set
$\bigcup_{1 \leq i \leq k} \subterms{\valone_i}$. Then
$\ssz{\seq[k]{\valone}}$ corresponds to the number of locations
necessary to store the values $\seq[k]{\valone}$ on a heap (compare
Lemma~\ref{l:heap:ms}). 
If $\multyone$ is the
expression $\faone_{n_1}\times\ldots\times\faone_{n_m}$, $n$ is a
natural number, and $\vec{\termone}$ is a sequence of $m$ terms, then
$\ssz{\vec{\termone}}_\multyone^{>n}$ is defined to be
$\ssz{\termone_{i_1},\ldots,\termone_{i_k}}$ where $i_1,\ldots,i_k$
are precisely those indices such that $n_{i_1},\ldots,n_{i_k}>n$. Similarly
for $\ssz{\vec{\termone}}_\multyone^{=n}$.
\begin{proposition}[Max-Poly]
  If $\tj{\funone}{\multyone}{\faone_n}$, then there is a polynomial
  $\ofdom{p_\funone}{\N\to\N}$ such that
  $\ssz{\funone(\vec{\valone})}\leq
  \ssz{\vec{\valone}}_\multyone^{=n}+p_\funone(\ssz{\vec{\valone}}_\multyone^{>n})$.
\end{proposition}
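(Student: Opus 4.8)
The plan is to prove, by induction on the derivation of $\tj{\funone}{\multyone}{\faone_n}$ (with $\multyone = \faone_{n_1}\times\cdots\times\faone_{n_m}$), a \emph{set-theoretic strengthening} of the stated bound. For a tier $t$ and a tuple $\vec u$ whose components carry tiers, write $B^{=t}(\vec u) \defsym \bigcup_{i:\ \mathrm{tier}(u_i)=t}\subterms{u_i}$, so that $\ssz{\vec u}_\multyone^{=t} = |B^{=t}(\vec u)|$, and recall that $\ssz{\cdot}$ is monotone under passing to sub-collections. I would show that there is a polynomial $p_\funone$ with
\[
\subterms{\funone(\vec\valone)} \subseteq B^{=n}(\vec\valone)\cup S_\funone(\vec\valone)\qquad\text{and}\qquad |S_\funone(\vec\valone)| \leq p_\funone\bigl(\ssz{\vec\valone}_\multyone^{>n}\bigr)\tpkt
\]
Taking cardinalities then yields the proposition. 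The point of the set-level formulation is that the coefficient of $\ssz{\vec\valone}_\multyone^{=n}$ must remain exactly $1$: a bare cardinality argument would multiply this coefficient at every composition and, fatally, once per recursion unfolding, turning it exponential in the recursion depth.

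The base cases are direct. For a constructor function all arguments have tier $n$, so $\subterms{\conone(\vec\valone)} = B^{=n}(\vec\valone)\cup\{\conone(\vec\valone)\}$ and $p_\funone\equiv 1$; for a projection the output is a single tier-$n$ argument and $S_\funone=\emptyset$. For composition $\comp{\funone}{(\funtwo_1,\dots)}$ I apply the inner hypotheses to $\funone$ and to each $\funtwo_i$, setting $w_i\defsym\funtwo_i(\vec\valone)$. The decisive observation is that for every $w_i$ with output tier $p_i=n$ one has $\subterms{w_i}\subseteq B^{=n}(\vec\valone)\cup S_{\funtwo_i}$ with $|S_{\funtwo_i}|$ polynomial in $\ssz{\vec\valone}_\multyone^{>n}$; hence the tier-$n$ subterms of \emph{all} these $w_i$ are drawn from the \emph{same} pool $B^{=n}(\vec\valone)$, so their union neither duplicates that pool nor raises its coefficient, while the finitely many remainders $S_{\funtwo_i}$ add only a polynomial amount. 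The arguments $w_i$ with $p_i>n$ feed into the higher-tier budget only, using $\ssz{\vec\valone}_\multyone^{=p_i},\ssz{\vec\valone}_\multyone^{>p_i}\leq\ssz{\vec\valone}_\multyone^{>n}$, and composing these polynomial bounds inside $p_\funone$ preserves polynomiality. Case distinction is the same bookkeeping, splitting on whether the scrutinee tier $p$ is $>n$, $=n$ or $<n$, and using $\bigcup_l\subterms{\varone_l}\subseteq\subterms{\conone_i(\vec\varone)}$ to pass between the inner and outer higher-tier budgets.

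The main obstacle is the recursion rule $\funone=\simrec{\{\funone_i^j\}}$ with $\tj{\funone}{\faone_p\times\multyone}{\faone_m}$ and $p>m$ (so here $m$ plays the role of $n$). For this case I would run a \emph{nested} induction on the structure of the recursion argument $\valone$, keeping the outer structural hypothesis available for each step function $\funone_i^j$. Writing $\valone=\conone_i(\vec\varone)$, the defining equation unfolds $\funtwo_j(\valone,\vec\vartwo)$ to $\funone_i^j$ applied to $\vec\varone$, to the recursive results $\funtwo_{j'}(\varone_l,\vec\vartwo)$, and to $\vec\vartwo$. The recursive results have tier $m$, so by the inner hypothesis their subterms lie in $B^{=m}(\vec\vartwo)$ together with local remainders; the tier-$>m$ inputs of $\funone_i^j$ are $\vec\varone$ and the tier-$>m$ part of $\vec\vartwo$, all contained in $\subterms{\valone}$ together with that part and hence of cardinality at most $N\defsym\ssz{(\valone,\vec\vartwo)}_\multyone^{>m}$. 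Thus the outer hypothesis for $\funone_i^j$ shows that each unfolding creates at most $q(N)$ genuinely new subterms, for a fixed polynomial $q$ bounding every $p_{\funone_i^j}$.

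The decisive step, and the only place $p>m$ is used, is that $\valone$ now sits in the tier-$>m$ budget, so its number of \emph{distinct} subterms is at most $N$; since equal recursion arguments produce equal results, the accumulated remainder $S_{\funtwo_j}(\valone,\vec\vartwo)$ is a union indexed by the distinct subterms of $\valone$ and by the $n$ mutually recursive functions, whence $|S_{\funtwo_j}|\leq n\cdot N\cdot q(N)$, a polynomial in $N$. This is exactly where sharing and memoization surface at the level of sizes: counting \emph{distinct} subterms rather than evaluation paths is what keeps the bound polynomial instead of exponential in the recursion depth. Assembling the cases, each $p_\funone$ is obtained from the premises' polynomials by sums, products with the linear term $N$, and compositions, all of which preserve polynomiality; this closes the induction.
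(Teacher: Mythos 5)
The paper never actually proves this proposition: Max-Poly is stated and immediately used (even in the long version there is no proof environment for it), so there is no official argument to compare yours against; your proposal has to stand on its own, and in my judgement it does. The two ideas it rests on are exactly the ones needed. First, the set-theoretic strengthening $\subterms{\funone(\vec{\valone})} \subseteq B^{=n}(\vec{\valone}) \cup S_\funone(\vec{\valone})$ with only $|S_\funone|$ bounded polynomially is the right invariant: it is what keeps the coefficient of $\ssz{\vec{\valone}}_\multyone^{=n}$ equal to $1$ through composition, since all tier-$n$ intermediate results draw their tier-$n$ subterms from one common pool rather than from disjoint copies; a purely numerical induction would indeed blow this coefficient up. Second, in the recursion case the side condition $p > m$ is used exactly where it must be: the recursion argument $\valone$ lands in the higher-tier budget $N = \ssz{(\valone,\vec{\vartwo})}_\multyone^{>m}$, and since the parameters $\vec{\vartwo}$ are fixed along the recursion and the functions are deterministic, the accumulated remainder is a union indexed by the \emph{distinct} subterms of $\valone$ and the $n$ simultaneously defined functions, giving the bound $n \cdot N \cdot q(N)$ independently of how often each subterm occurs as a position. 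This is precisely the point where minimal shared size (rather than ordinary size) makes the statement true, and it matches the role sharing and memoization play in the operational part of the paper. Your treatment of case distinction and of inner functions of output tier below, equal to, or above $n$ in the composition case is also correct, since the inductive claim simply says nothing about tier-$<n$ inputs. Two small formalization points you should make explicit in a full write-up: the polynomials $p_\funone$ must be chosen monotone (you silently use $p_{\funtwo_i}(\ssz{\vec{\valone}}_\multyone^{>p_i}) \leq p_{\funtwo_i}(\ssz{\vec{\valone}}_\multyone^{>n})$ and $q(\ssz{\vec{\varone},\vec{\vartwo}^{>m}}) \leq q(N)$), and the remainder sets $S_{\funtwo_j}(\valone,\vec{\vartwo})$ for the recursion case should be defined by the nested induction itself (or directly as the union over $\subterms{\valone}$ of the step functions' remainders) so that the set inclusion and the cardinality count are proved about the same object. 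Neither is a gap, just bookkeeping.
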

Once we know that ramified recursive definitions are not too
fast-growing for the minimal shared size, we know that all terms
around do not have a too-big minimal shared size. As a consequence:
\begin{proposition}\label{prop:bound}
  If $\tj{\funone}{\multyone}{\faone_n}$, then there is a polynomial
  $\ofdom{p_\funone}{\N\to\N}$ such that for every $\valone$,
  $\ctpair{\cacheempty}{\symone(\vec{\valone})} \reducem[m] \ctpair{\cacheone}{\valone}$,
  with $m\leq p_\funone(\ssz{\vec{\valone}})$.
\end{proposition}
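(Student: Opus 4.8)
The plan is to induct on the derivation of the tiering judgment $\tj{\funone}{\multyone}{\faone_n}$, taking the Max-Poly proposition above as given. The starting observation is that, since the reduction begins with the \emph{empty} cache, the cost $m$ in $\ctpair{\cacheempty}{\symone(\vec{\valone})} \reducem[m] \ctpair{\cacheone}{\valone}$ is exactly the number of \rlname{Update} steps, whence $m = \size{\cacheone}$: each \rlname{Update} inserts one fresh entry at cost one, each \rlname{Read} costs nothing and inserts nothing, and no key is ever updated twice because the cache only grows. Thus it suffices to bound the number of \emph{distinct} function calls that are actually evaluated. This is precisely where memoization pays off: a call tree of exponential size may still contain only polynomially many distinct nodes. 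I will also use the auxiliary fact that $\mathrm{cost}$ is monotone non-increasing in the initial cache (a bigger cache can only turn \rlname{Update}s into \rlname{Read}s), which licenses accounting for subcomputations one at a time even though a single global cache is shared throughout.

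For the base cases ($\funone$ a constructor or projection function) a single \rlname{Update} occurs, so $p_\funone$ is constant. For composition $\funthree = \comp{\funone}{(\seq[n]{\funtwo})}$, the top call fires the defining rule and continues with $\symone(\symtwo_1(\vec{\valone}),\dots,\symtwo_n(\vec{\valone}))$, so I bound its cost by $1 + \sum_i \mathrm{cost}(\funtwo_i(\vec{\valone})) + \mathrm{cost}(\funone(\vec{u}))$, where $u_i$ is the value produced by $\funtwo_i$. The first summands are handled by the induction hypothesis; for the last I invoke Max-Poly to bound each $\ssz{\funtwo_i(\vec{\valone})}$ polynomially in $\ssz{\vec{\valone}}$, so that $\ssz{\vec{u}}$ is polynomial in $\ssz{\vec{\valone}}$ and the induction hypothesis for $\funone$ composed with this polynomial stays polynomial. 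Case distinction $\casdist{\{\funone_i\}}$ is analogous and easier, since the arguments handed to the selected $\funone_i$ are subterms of the inputs and hence do not increase the shared size.

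The real work is the simultaneous recursion $\{\funtwo_j\} = \simrec{\{\funone_i^j\}}$ with $\tj{\funtwo_j}{\faone_p \times \multyone}{\faone_m}$ and $p > m$; here $\funone = \funtwo_j$ and $\vec{\valone} = (v,\vec{w})$. Unfolding the recursion keeps the parameters $\vec{w}$ \emph{fixed} and replaces the recursion argument only by its immediate subterms, and the step operations $\funone_i^{j'}$, being functions constructed \emph{before} the $\funtwo_{j'}$, never call any $\funtwo_{j'}$. Hence the only calls to the recursion operations are of the form $\funtwo_{j'}(v',\vec{w})$ with $v'$ a subterm of $v$, giving at most $n \cdot \ssz{v}$ distinct such calls. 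Each of them, when updated, triggers exactly one evaluation of a step operation $\funone_i^{j'}$, so the total cost is at most $n\cdot\ssz{v}$ (for the recursion operations themselves) plus the sum of the costs of these at most $n\cdot\ssz{v}$ step evaluations. To bound one step evaluation by the induction hypothesis I must control the shared size of its arguments: the $\vec{x}$ are subterms of $v$, the parameters are $\vec{w}$, and the critical arguments are the $n\cdot\arity{\conone_i}$ recursive results $\funtwo_{j''}(x_l,\vec{w})$ of tier $m$. Applying Max-Poly to each such result—whose only argument of tier exceeding $m$ is the subterm $x_l$ of $v$, while the tier-$m$ parameters enter only additively—bounds its shared size polynomially in $\ssz{v,\vec{w}}$. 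Summing, the shared size of the step arguments is polynomial in $\ssz{v,\vec{w}}$, the induction hypothesis for $\funone_i^{j'}$ gives a polynomial cost per step, and multiplying by the $n\cdot\ssz{v}$ calls yields the desired $p_{\funtwo_j}$.

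I expect the main obstacle to be exactly this last estimate in the recursion case: keeping the shared size of the possibly many tier-$m$ recursive results fed to the step function under polynomial control. This is where the ramification condition $p > m$ is indispensable, since through Max-Poly it guarantees that values of tier $m$ contribute only \emph{additively}—never as the driver of a further recursion—so that nested recursions cannot compound into a super-polynomial blow-up. A secondary point needing care is the identity $m = \size{\cacheone}$ together with cost-monotonicity in the initial cache, which is what justifies the compositional, call-by-call accounting above in the presence of a single cache shared across the entire evaluation.
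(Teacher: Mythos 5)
Your proposal is correct and is essentially the argument the paper intends: the paper gives no explicit proof of this proposition (none appears even in the long version), presenting it as a direct consequence of Max-Poly, and your induction on the tiering derivation---bounding the cost by the number of distinct tabulated calls, using Max-Poly to keep the minimal shared size of every intermediate value polynomial, and letting the ramification condition $p>m$ do its work exactly where you place it, namely in taming the tier-$m$ recursive results fed to the step functions---is the natural way to fill in that omission. The one point deserving more care than you give it is the cache-monotonicity lemma: it is true, but the obvious induction stumbles because after the first point of divergence the two runs' caches become incomparable, so it is cleaner to observe that for an orthogonal program the call graph (which calls a given call spawns, once arguments are values) is cache-independent, and that the set of updated calls is exactly the set of calls reachable from the initial term along paths avoiding the initial cache; both monotonicity and your identity $m = \size{\cacheone}$ then fall out immediately.
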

% Since the shared size $\ssz{\seq[k]{\valone}}$ is dominated by the sum of the sizes
% $\size{\valone_i}$ of terms $\valone_i$. 
The following, then, is just a corollary of Proposition~\ref{prop:bound} and Invariance (Theorem~\ref{t:invariance}). 
\begin{theorem}\label{t:sound}
  Let $\ofdom{\funone}{\faone_{p_1}\times\ldots\times\faone_{p_k} \rightarrow \faone_{m}}$ be a function defined by 
  general ramified simultaneous recursion. There exists then a polynomial
  $\ofdom{p_{\funone}}{\N^k \to \N}$ such that for all inputs
  $\seq[k]{\valone}$, a DAG representation of
  $\funone(\seq[k]{\valone})$ is computable in time
  $p_{\funone}(\size{\valone_1},\ldots,\size{\valone_n})$.
\end{theorem}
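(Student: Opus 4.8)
The plan is to combine the two results that the theorem explicitly advertises as its ingredients: Proposition~\ref{prop:bound}, which bounds the \emph{memoized runtime cost} $m$ of evaluating $\symone(\seq[k]{\valone})$ by a polynomial in the minimal shared size $\ssz{\seq[k]{\valone}}$ of the inputs, and the invariance Theorem~\ref{t:invariance}, which says that from $\seq[k]{\valone}$ one can compute a DAG representation of the resulting value in time polynomial in $\sum_{i=1}^k \size{\valone_i}$ and in $m$. First I would apply the systematic construction (the program $\progone_\funone = \prog{\DS_\funone}{\CS_\funone}{\TRSone_\funone}$ associated with $\funone \in \SIMREC[\faone]$) so that there is an operation symbol $\symone$ computing $\funone$, and observe that $\funone(\seq[k]{\valone}) = \valone$ iff $\symone(\seq[k]{\valone}) \reduce \valone$, hence iff $\ctpair{\cacheempty}{\symone(\seq[k]{\valone})} \reducem[m] \ctpair{\cacheone}{\valone}$ for some $m$ and cache $\cacheone$ by the correctness lemma relating the two operational semantics.

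Next I would instantiate the bound. Since $\tj{\funone}{\faone_{p_1}\times\ldots\times\faone_{p_k}}{\faone_m}$ holds, Proposition~\ref{prop:bound} furnishes a polynomial $p_\funone$ with $m \leq p_\funone(\ssz{\seq[k]{\valone}})$. It then remains only to control $\ssz{\seq[k]{\valone}}$ in terms of the individual input sizes $\size{\valone_i}$: by definition the minimal shared size is the cardinality of the set of distinct subterms occurring in the $\valone_i$, so $\ssz{\seq[k]{\valone}} \leq \sum_{i=1}^k \size{\valone_i}$, a crude but sufficient linear bound. Composing with the monotone polynomial $p_\funone$ keeps the cost $m$ polynomially bounded in $\sum_i \size{\valone_i}$.

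Finally I would feed this into Theorem~\ref{t:invariance}. That theorem guarantees a polynomial $q$ such that the DAG representation of $\valone$ is computable from $\seq[k]{\valone}$ in time $q(\sum_{i=1}^k \size{\valone_i}, m)$. Substituting the bound on $m$ and using closure of polynomials under composition yields a single polynomial $p_\funone$ in $\size{\valone_1},\ldots,\size{\valone_k}$ bounding the total computation time, which is exactly the claim. I do not expect a genuine obstacle here, since the statement is explicitly framed as a corollary; the only points requiring care are the bookkeeping that the cost model of Proposition~\ref{prop:bound} (input measured by shared size) and that of Theorem~\ref{t:invariance} (input measured by $\sum_i \size{\valone_i}$) are reconciled by the elementary inequality $\ssz{\seq[k]{\valone}} \leq \sum_i \size{\valone_i}$, and the observation that polynomial composition is closed. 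If anything is delicate, it is making sure the DAG-encoded inputs handed to the invariance theorem have size linear in $\sum_i \size{\valone_i}$ so that the initial-configuration size appearing in Theorem~\ref{t:invariance} does not secretly blow up; but maximal sharing of the heap guarantees precisely this.
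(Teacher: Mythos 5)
Your proposal is correct and follows exactly the paper's route: the paper itself states Theorem~\ref{t:sound} as an immediate corollary of Proposition~\ref{prop:bound} and the invariance theorem (Theorem~\ref{t:invariance}), and your write-up simply fills in the bookkeeping (passing to the program $\progone_\funone$, the inequality $\ssz{\seq[k]{\valone}} \leq \sum_{i=1}^k \size{\valone_i}$, and closure of polynomials under composition) that the paper leaves implicit. No gaps; the details you flag as delicate are handled exactly as the authors intend.
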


\begin{example}[Continued from Example~\ref{ex:invariance}]
  In Example~\ref{ex:algebra} we indicated that the function $\ofdom{\srabbits}{\fanat \rightarrow \farabbits}$ 
  from Section~\ref{s:examples} is definable by GRSR\@. 
  As a consequence of Theorem~\ref{t:sound}, it is computable in polynomial time, 
  e.g.\ on a Turing machine. Similar, we can prove the function $\fun{tree}$ 
  from Section~\ref{s:examples} polytime computable. 
\end{example}

%%% Local Variables: 
%%% mode: latex
%%% TeX-master: "paper"
%%% End: 

%%%%%%%%%%%%%%%%%%%%%%%%%
\section{Conclusion}
%%%%%%%%%%%%%%%%%%%%%%%%%
In this work we have shown that simultaneous ramified recurrence
on generic algebras is sound for polynomial time, resolving a long-lasting open problem in
implicit computational complexity theory. 
We believe that with this work we have reached the \emph{end of a quest}. 
Slight extensions, e.g.\ the inclusion of \emph{parameter substitution}, 
lead outside polynomial time as soon as simultaneous recursion over trees is permissible. 

Towards our main result, we introduced the notion of memoized runtime complexity,
and we have shown that this cost model is invariant under polynomial time.
Crucially, we use a compact DAG representation of values to control duplication, 
and tabulation to avoid expensive re-computations. 
To the authors best knowledge, our work is the first where sharing and memoization is reconciled, 
in the context of implicit computational complexity theory.
Both techniques have been extensively employed, however separately. 
Essentially relying on sharing, the invariance of the unitary cost model in various 
rewriting based models of computation, 
e.g.\ the $\lambda$-calculus~\cite{ADL:RTA:12,LM:LMCS:12,AD:CSL:14} and 
term rewrite systems~\cite{LM:FOPARA:09,AM:RTA:10} could be proven. 
Several works (e.g.~\cite{Marion:IC:03,BMM:TCS:11,BDM:MSCS:12}) rely on memoization, 
employing a measure close to our notion of memoized runtime complexity. 
None of these works integrate sharing, instead, inputs are either restricted to strings 
or dedicated bounds on the size of intermediate values have to be imposed.  
We are confident that our second result is readily applicable to resolve such restrictions.

%%% Local Variables: 
%%% mode: latex
%%% TeX-master: "paper"
%%% End: 

\bibliographystyle{plainnat}% the recommended bibstyle

\end{document}